\documentclass[11pt]{article} 
\usepackage
{
        amssymb,
        amsthm,
        amsmath,
        fullpage,
}

\usepackage{harvard}
\usepackage[all,cmtip]{xy}


\DeclareMathOperator {\rank}  {rank}

\DeclareMathOperator {\dem}  {dem}

\DeclareMathOperator {\sgn}  {sgn}

\DeclareMathOperator* {\minext}{min-ext}
\DeclareMathOperator* {\maxflow}{max-flow}

\newcommand {\brc}   [1] {\left(#1\right)}

\newcommand {\Exp}       {\mathbb{E}}
\newcommand {\Prob}  [1] {\Pr \brc{#1 }}
\newcommand {\Probb} [2] {\Pr_{#1} \brc{#2 }}

\newcommand {\EE}    [2] {\Exp_{#1}\left[#2\right]}


\newcommand {\bbR}    {\mathbb{R}}

\newcommand {\calD}   {{\cal{D}}}

\newcommand {\calS}   {{\cal{S}}}

\newcommand {\calP}   {{\cal{P}}}

\newcommand {\calA}   {{\cal{A}}}
\newcommand {\calB}   {{\cal{B}}}
\newcommand {\calL}   {{\cal{L}}}



\newtheorem{theorem}{Theorem}[section]
\newtheorem{lemma}[theorem]{Lemma}

\newtheorem{corollary}[theorem]{Corollary}
\newtheorem{definition}[theorem]{Definition}

\newtheorem{remark}{Remark}[section]

\newtheorem{conjecture}{Conjecture}
\newtheorem{question}[conjecture]{Question}


\title{Metric Extension Operators,  Vertex Sparsifiers and \\ Lipschitz Extendability}
\author{Konstantin Makarychev\\IBM T.J. Watson Research Center
\and Yury Makarychev\\Toyota  Technological Institute at Chicago}
\date{}
\begin{document}
\maketitle
\begin{abstract}
We study vertex cut and flow sparsifiers that were recently introduced 
by \citeasnoun{Moitra}, and \citeasnoun{LM}.
We improve and generalize their results.
We give a new polynomial-time algorithm for constructing $O(\log k / \log \log k)$ cut and flow sparsifiers, matching the best known existential upper bound on the quality of a sparsifier, and improving the previous algorithmic upper
bound of $O(\log^2 k / \log \log k)$. We show that flow sparsifiers can be obtained from 
linear operators approximating minimum metric extensions. We introduce the notion of
(linear) metric extension operators, prove that they exist, and give an exact polynomial-time algorithm for finding optimal operators.

We then establish a direct connection between flow and cut sparsifiers and
Lipschitz extendability of maps in Banach spaces, a notion studied in functional
analysis since 1930s. Using this connection, we obtain a lower bound of
$\Omega(\sqrt{\log k/\log\log k})$ for flow sparsifiers and a lower bound
of $\Omega(\sqrt{\log k}/\log\log k)$ for cut sparsifiers.
We show that if a certain open question posed by Ball in 1992
has a positive answer, then there exist $\tilde O(\sqrt{\log k})$
cut sparsifiers. On the other hand, any lower bound on cut sparsifiers better
than $\tilde \Omega(\sqrt{\log k})$ would imply a negative answer to this question.
\end{abstract}
\section{Introduction}
In this paper, we study \emph{vertex cut and flow sparsifiers} that were recently introduced 
by \citeasnoun{Moitra}, and \citeasnoun{LM}.
A weighted graph $H=(U,\beta)$ is a $Q$-quality vertex cut sparsifier of a weighted
graph $G=(V,\alpha)$ (here $\alpha_{ij}$ and $\beta_{pq}$ are sets of weights on edges of $G$ and $H$) if $U\subset V$ and the size of every 
cut $(S,U\setminus S)$ in $H$ approximates the size of the minimum cut 
separating sets $S$ and $U\setminus S$ in $G$ within a factor of $Q$. 
\citeasnoun{Moitra} presented several important applications of cut sparsifiers
to the theory of approximation algorithms.
Consider a simple example. Suppose we want to 
find the minimum cut in a graph $G=(V,\alpha)$ that splits
a given subset of vertices (terminals) $U\subset V$ 
into two approximately equal parts. We construct $Q$-quality sparsifier $H=(U,\beta)$ of
$G$, and then find a balanced cut $(S, U\setminus S)$ in $H$ using the algorithm of 
\citeasnoun{ARV}. The desired cut is the minimum cut in $G$ separating sets 
$S$ and $U\setminus S$. The approximation ratio we get is $O(Q\times \sqrt{\log |U|})$:
we lose a factor of $Q$ by using cut sparsifiers, and another factor of $O(\sqrt{\log |U|})$ by using
the approximation algorithm for the balanced cut problem. If we applied the approximation algorithm 
for the balanced, or, perhaps, the sparsest cut problem directly we would lose a factor of 
$O(\sqrt{\log |V|})$. This factor depends on the number of vertices in the graph $G$, which may be
much larger than the number of vertices in the graph $H$.
Note, that we gave the example above just to illustrate the method. 
A detailed overview of applications of cut and flow sparsifiers is presented in the papers
of  \citeasnoun{Moitra} and \citeasnoun{LM}.
However, even this simple example shows that 
we would like 
to construct sparsifiers with $Q$ as small as possible. \citeasnoun{Moitra}
proved that for every graph $G=(V,\alpha)$ and every $k$-vertex subset $U\subset V$, there
exists a $O(\log k/\log \log k)$-quality
sparsifier $H=(U,\beta)$. However, the best known polynomial-time 
algorithm proposed by \citeasnoun{LM} finds only 
$O(\log^2 k/\log \log k)$-quality sparsifiers. In this paper, we close this gap: we give a
polynomial-time algorithm for constructing $O(\log k/\log\log k)$-cut sparsifiers matching the best known existential upper bound. In fact, our algorithm constructs $O(\log k/\log\log k)$-\emph{flow sparsifiers}. This type of sparsifiers was introduced by \citeasnoun{LM}; and it generalizes the notion of cut-sparsifiers. Our bound matches the existential upper bound of \citeasnoun{LM} and improves their algorithmic upper bound of $O(\log^2 k/\log \log k)$. If $G$ is a graph with an excluded minor $K_{r,r}$, 
then our algorithm finds a $O(r^2)$-quality flow sparsifier, again matching the best 
existential upper bound of \citeasnoun{LM} (Their algorithmic upper bound has an additional $\log k$ factor). Similarly, we get $O(\log g)$-quality flow
sparsifiers for genus $g$ graphs\footnote{Independently and concurrently to our work, \citeasnoun{CLLM}, and independently \citeasnoun{EGK+}
obtained results similar to some of our results.}.


In the second part of the paper (Section~\ref{sec:lip-extend}), we establish a direct connection between flow and cut sparsifiers and Lipschitz
extendability of maps in Banach spaces.
Let $Q_k^{cut}$ (respectively, $Q_k^{metric}$) be the minimum over all $Q$ such that there exists a $Q$-quality
\textit{cut} (respectively, \textit{flow}) sparsifier for every graph $G=(V,\alpha)$ and every
subset $U\subset V$ of size $k$.
We show that $Q_k^{cut} = e_k(\ell_1, \ell_1)$ and 
$Q_k^{metric} = e_k(\infty, \ell_\infty\oplus_1 \dots\oplus_1\ell_{\infty})$,
where $e_k(\ell_1, \ell_1)$ and $e_k(\infty, \ell_\infty\oplus_1 \dots\oplus_1\ell_{\infty})$
are the \textit{Lipschitz extendability constants} (see Section~\ref{sec:lip-extend} for the definitions).
That is, there always exist cut and flow sparsifiers of quality $e_k(\ell_1, \ell_1)$ and 
$e_k(\infty, \ell_\infty\oplus_1 \dots\oplus_1\ell_{\infty})$, respectively; and these bounds cannot be improved.
We then prove lower bounds on Lipschitz extendability constants and obtain a lower bound of $\Omega(\sqrt{\log k/\log\log k})$ 
on the quality of flow sparsifiers and a lower bound of $\Omega(\sqrt[4]{\log k/\log\log k})$ on the quality of cut 
sparsifiers (improving upon previously known lower bound of $\Omega(\log\log k)$ and $\Omega(1)$ respectively). 
To this end, we employ the connection between Lipschitz extendability constants
and \textit{relative projection constants} that was discovered by \citeasnoun{JL}. 
Our bound on $e_k(\infty, \ell_\infty\oplus_1 \dots\oplus_1\ell_{\infty})$
immediately follows from the bound of \citeasnoun{Grunbaum} on the projection constant
$\lambda(\ell_1^d,\ell_{\infty})$. To get the bound of $\Omega(\sqrt[4]{\log k/\log\log k})$ on $e_k(\ell_1, \ell_1)$, we prove
a lower bound on the projection constant $\lambda(L, \ell_1)$ for a carefully chosen 
subspace $L$ of $\ell_1$. 
After a preliminary version of our paper appeared as a preprint,
Johnson and Schechtman notified us that a lower bound of $\Omega(\sqrt{\log k}/\log\log k)$ on $e_k(\ell_1,\ell_1)$ follows from their joint work with Figiel~\cite{FJS}.
With their permission, we present the proof of the lower bound in Section~\ref{sec:improved-l1} of the Appendix,
which  gives a lower bound of $\Omega(\sqrt{\log k} / \log\log k)$ on the quality of cut sparsifiers.

In Section \ref{subsec:cond-upper-bound}, we note that we can use 
the connection between vertex sparsifiers and extendability constants
not only to prove lower bounds, but also to get positive results.
We show that surprisingly if a certain open question in functional
analysis posed by~\citeasnoun{Ball} has a positive answer, then there exist
$\tilde O(\sqrt{\log k})$-quality cut sparsifiers.
This is both an indication that 
the current upper bound of $O(\log k/\log\log k)$ might not be optimal and
that improving lower bounds beyond of $\tilde O(\sqrt{\log k})$
will require solving a long standing open problem (negatively).

Finally, in Section~\ref{sec:discussion},
we show that there exist simple ``combinatorial certificates''
that certify that $Q_k^{cut} \geq Q$ and $Q_k^{metric} \geq Q$. 

\pagebreak

\textbf{Overview of the Algorithm.}
The main technical ingredient of our algorithm is a procedure for finding 
linear approximations to metric extensions.
Consider a set of points $X$ and a $k$-point subset $Y\subset X$. Let $\calD_X$ be the cone of all metrics on $X$, and $\calD_Y$ be 
the cone of all metrics on $Y$. For a given set of weights $\alpha_{ij}$ on pairs $(i,j)\in X\times X$, the minimum extension of a metric $d_Y$ from $Y$ to $X$ is a metric $d_X$ on $X$ that coincides with $d_Y$ on $Y$ and minimizes the linear functional
$$\alpha(d_X) \equiv \sum_{i,j\in X} \alpha_{ij} d_X(i,j).$$
We denote the minimum above by $\minext_{Y \to X} (d_Y, \alpha)$. We show that the map between $d_Y$ and its minimum extension, the metric $d_X$, can be well approximated by a linear operator. Namely, for every set of nonnegative weights $\alpha_{ij}$ on pairs $(i,j) \in X\times X$, there exists a linear operator
$\phi: \calD_Y \to \calD_X$ of the form
\begin{equation}\label{eq:z}
\phi (d_Y)(i,j) = \sum_{p,q\in Y} \phi_{ipjq} d_Y(p,q)
\end{equation}
that maps every metric $d_Y$ to an extension of the metric $d_Y$ to the set 
$X$ 
such that
$$\alpha(\phi(d_Y)) \leq  O\left(\frac{\log k}{\log \log k}\right) \; 
\minext_{Y \to X} (d_Y, \alpha).$$
As a corollary, the linear functional $\beta: \calD_X \to \bbR$ defined as 
$\beta (d_Y) = \sum_{i,j\in X} \alpha_{ij} \phi(d_Y) (i,j)$ approximates the minimum 
extension of $d_Y$ up to $O(\log k/\log \log k)$ factor. We then give a polynomial-time algorithm 
for finding $\phi$ and $\beta$. (The algorithm finds the optimal $\phi$.) To see the connection with cut and flow sparsifiers
write the linear operator $\beta (d_Y)$ as $\beta (d_Y) = \sum_{p,q\in Y} \beta_{pq} d_Y(p,q)$, then  
\begin{equation}\label{eq:beta}
\minext_{Y\to X} (d_Y, \alpha) \leq 
\sum_{p,q\in Y} \beta_{pq} d_Y(p,q)
\leq O\left(\frac{\log k}{\log \log k}\right) \; \minext_{Y\to X} (d_Y, \alpha).
\end{equation}
Note that the minimum extension of a cut metric is a cut metric (since the mincut
LP is integral).
Now, if $d_Y$ is a cut metric on $Y$ corresponding to the cut $(S, Y\setminus S)$,
then
$\sum_{p,q\in Y} \beta_{pq} d_Y(p,q)$
is the size of the cut in $Y$ with respect to the weights $\beta_{pq}$; and
$\minext_{Y\to X} (d_Y,\alpha)$ is the size of the minimum cut in $X$ separating 
$S$ and $Y\setminus S$. 
Thus, $(Y,\beta)$ is a $O(\log k/\log \log k)$-quality cut sparsifier for $(X, \alpha)$.

\begin{definition}[Cut sparsifier~\cite{Moitra}]
Let $G=(V,\alpha)$ be a weighted undirected graph with weights $\alpha_{ij}$; and let $U\subset V$
be a subset of vertices. We say that a weighted undirected graph $H=(U,\beta)$ on $U$ is a $Q$-quality cut sparsifier, if for every $S\subset U$, 
the size the cut $(S,U\setminus S)$ in $H$ approximates
the size of the minimum cut separating $S$ and $U\setminus S$ in $G$ 
within a factor of $Q$
i.e.,
$$
\min_{T\subset V: S =  T\cap U} 
\sum_{\substack{i\in T\\ j\in V\setminus T}}\alpha_{ij}
\leq
\sum_{\substack{p \in S\\ q \in U\setminus S}} \beta_{pq}
\leq 
Q \times \min_{T\subset V: S =  T\cap U} 
\sum_{\substack{i\in T\\ j\in V\setminus T}}\alpha_{ij}.
$$
\end{definition}

\section{Preliminaries}
In this section, we remind the reader some basic definitions.

\subsection{Multi-commodity Flows and Flow-Sparsifiers}

\begin{definition}
Let $G=(V,\alpha)$ be a weighted graph with nonnegative capacities $\alpha_{ij}$ between vertices $i,j\in V$,
and let $\{(s_r,t_r,\dem_r)\}$ be a set of flow demands ($s_r,t_r\in V$ are terminals of the graph,
$\dem_r\in \bbR$ are demands between $s_r$ and $t_r$; all demands are nonnegative). We say that a weighted collection 
of paths $\calP$ with nonnegative weights $w_p$ ($p\in \calP$) is a fractional multi-commodity flow concurrently satisfying a $\lambda$ fraction of all demands, if the following two conditions hold.
\begin{itemize}
\item Capacity constraints. For every pair $(i,j)\in V\times V$,
\begin{equation}\label{constr:capacity}
\sum_{p\in\calP: (i,j)\in p} w_p \leq \alpha_{ij}.
\end{equation}

\item Demand constraints. For every demand $(s_r,t_r,\dem_r)$,
\begin{equation}\label{constr:demand}
\sum_{p\in\calP: p \text{ goes from  $s_r$ to $t_r$}} w_p \geq \lambda\; \dem_r.
\end{equation}
\end{itemize}
We denote the maximum fraction of all satisfied demands by $\maxflow (G,\{(s_r,t_r,\dem_r)\})$.
\end{definition}

For a detailed overview of multi-commodity flows, we refer the reader to the book of \citeasnoun{Schrijver}.

\begin{definition}[\citeasnoun{LM}]
Let $G=(V,\alpha)$ be a weighted graph and let $U\subset V$ be a subset of vertices. 
We say that a graph $H=(U,\beta)$ on $U$ is a $Q$-quality flow sparsifier of $G$ if
for every set of demands $\{(s_r,t_r,\dem_r)\}$ between terminals in $U$,
$$\maxflow (G,\{(s_r,t_r,\dem_r)\})
\leq
\maxflow (H,\{(s_r,t_r,\dem_r)\})
\leq
Q\times \maxflow (G,\{(s_r,t_r,\dem_r)\}).$$
\end{definition}

\citeasnoun{LM} showed that every flow sparsifier is a cut sparsifier. 

\begin{theorem}[\citeasnoun{LM}]
If $H=(U,\beta)$ is a $Q$-quality flow sparsifier for $G=(V,\alpha)$, then
$H=(U,\beta)$ is also a $Q$-quality cut sparsifier for $G=(V,\alpha)$.
\end{theorem}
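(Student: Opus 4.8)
The plan is to fix a subset $S\subseteq U$ and establish the two inequalities in the definition of a $Q$-quality cut sparsifier separately, in each case by feeding the flow-sparsifier guarantee a demand set supported on the terminal pairs that cross the cut $(S,U\setminus S)$. Write $c_G(S)=\min_{T\subseteq V:\,T\cap U=S}\sum_{i\in T,\,j\in V\setminus T}\alpha_{ij}$ for the minimum cut in $G$ separating $S$ from $U\setminus S$, and $c_H(S)=\sum_{p\in S,\,q\in U\setminus S}\beta_{pq}$ for the capacity of the cut $(S,U\setminus S)$ in $H$; the goal is $c_G(S)\le c_H(S)\le Q\,c_G(S)$. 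The one preliminary fact I will use repeatedly is the single-commodity max-flow--min-cut theorem, applied after contracting $S$ to a super-source and $U\setminus S$ to a super-sink in $G$: $c_G(S)$ equals the maximum total weight of a capacity-respecting collection of paths in $G$ each joining a vertex of $S$ to a vertex of $U\setminus S$. In particular, any (multicommodity) flow in $G$ all of whose flow-paths run from $S$ to $U\setminus S$ has value at most $c_G(S)$, and conversely a flow of value exactly $c_G(S)$ of this form exists.

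For the lower bound $c_G(S)\le c_H(S)$, I would take such an optimal flow in $G$, decompose it into paths (discarding cycles), and set $\dem_{pq}$ equal to the total flow weight on paths joining $p\in S$ to $q\in U\setminus S$. The resulting demand set $D$ has total demand $\sum\dem_{pq}=c_G(S)$ and is routed with fraction $\lambda=1$ by the flow itself, so $\maxflow(G,D)\ge 1$; the flow-sparsifier property then gives $\maxflow(H,D)\ge\maxflow(G,D)\ge 1$. Hence $H$ simultaneously routes at least $\dem_{pq}$ units between each demand pair, a flow of value at least $c_G(S)$ every path of which crosses $(S,U\setminus S)$; since the total flow across that cut is bounded by the sum of the capacities of the crossing edges, $c_H(S)\ge c_G(S)$.

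For the upper bound $c_H(S)\le Q\,c_G(S)$, I would instead use the demand set $D$ with $\dem_{pq}=\beta_{pq}$ for every pair $p\in S$, $q\in U\setminus S$ and zero demand otherwise. Routing each such demand along its own edge of $H$ shows $\maxflow(H,D)\ge 1$, so the flow-sparsifier property gives $\maxflow(G,D)\ge\maxflow(H,D)/Q\ge 1/Q$. Therefore $G$ routes at least $\beta_{pq}/Q$ units between each crossing pair, which is a flow from $S$ to $U\setminus S$ of total value at least $\frac1Q\sum_{p\in S,\,q\in U\setminus S}\beta_{pq}=c_H(S)/Q$; by the preliminary fact this value is at most $c_G(S)$, so $c_H(S)\le Q\,c_G(S)$.

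I expect the only delicate point to be the bookkeeping around having a set of sources and a set of sinks rather than single ones: justifying the super-source/super-sink contraction so that ``minimum cut separating $S$ and $U\setminus S$'' is literally a single-commodity minimum cut, checking that the optimal flow decomposes into paths each joining a terminal of $S$ to a terminal of $U\setminus S$ (after removing cycles), and noting that in an undirected graph the net flow across $(S,U\setminus S)$ is at most $\sum_{p\in S,\,q\in U\setminus S}\beta_{pq}$. None of this is deep, but it is where one must be precise; the remainder is just invoking the flow-sparsifier inequality once in each direction.
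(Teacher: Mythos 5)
Your proof is correct. Both directions are sound: the lower bound feeds the sparsifier a demand vector extracted from a path decomposition of an optimal single-commodity $S$--$(U\setminus S)$ flow in $G$ and uses the fact that every routed path in $H$ must cross the cut $(S,U\setminus S)$ at least once; the upper bound feeds it the demand $\dem_{pq}=\beta_{pq}$ on crossing pairs, which $H$ routes trivially edge-by-edge, and then caps the resulting flow in $G$ by max-flow--min-cut. Note, however, that the paper does not reprove this theorem (it is cited to Leighton--Moitra); the route actually available inside the paper is different and goes through metrics: Lemma~\ref{lem:FlowIsMetric} in the Appendix shows by LP duality that every $Q$-quality flow sparsifier is a $Q$-quality \emph{metric} sparsifier, and the cut-sparsifier property then follows by specializing the metric-sparsifier inequality to cut (semi)metrics $\delta_S$, using that the minimum extension of a cut metric is again a cut metric because the min-cut LP is integral (as the paper observes in the introduction). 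Your argument is more elementary and self-contained --- it needs only single-commodity max-flow--min-cut and no LP duality --- whereas the paper's route proves the stronger statement that the inequality $\minext(d_U,\alpha)\le\beta(d_U)\le Q\minext(d_U,\alpha)$ holds for \emph{all} metrics $d_U$, not just cut metrics, which is what the rest of the paper actually uses. The delicate points you flag (super-source contraction, discarding cycles in the decomposition, counting crossings of an undirected cut) are all handled correctly or are routine.
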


\subsection{Metric Spaces and Metric Extensions}

Recall
that a function $d_X:X\times X \to \bbR$ is a metric if for all $i,j, k\in X$ the following
three conditions hold $d_X(i,j) \geq 0$, $d_X(i,j) = d_X (j,i)$, 
$d_X(i,j) + d_X (j,k) \geq d_X(i,k)$. 
Usually, the definition of metric requires that $d_X(i,j) \neq 0$ for distinct 
$i$ and $j$ but we drop this requirement for convenience (such metrics are often
called semimetrics).
We denote the set of all metrics on a set $X$
by $\calD_X$. Note, that $\calD_X$ is a convex closed cone. Moreover, $\calD_X$ is 
defined by polynomially many (in $|X|$) linear constraints (namely, by the three
inequalities above for all $i,j,k\in X$).

A map $f$ from a metric space $(X,d_X)$ to a metric space $(Z,d_Z)$ is 
$C$-Lipschitz, if $d_Z(f(i), f(j)) \leq C d_X(i,j)$ for all $i,j\in X$. The
Lipschitz norm of a Lipschitz map $f$ equals
$$\|f\|_{Lip} = \sup\left\{\frac{d_Z(f(i), f(j))}{d_X(i,j)}: i,j\in X; d_X(i,j) >0\right\}.$$

\begin{definition}[Metric extension and metric restriction]
Let $X$ be an arbitrary set, $Y\subset X$, and $d_Y$ be a metric on $Y$. We say that $d_X$ is a 
metric extension of $d_Y$ to $X$ if $d_X(p,q) = d_Y(p,q)$ for all $p,q \in Y$. If $d_X$ is an 
extension of  $d_Y$, then $d_Y$ is the restriction of $d_X$ to $Y$. We denote
the restriction of $d_X$ to $Y$ by $d_X|_Y$ (clearly, $d_X|_Y$ is uniquely defined by $d_X$).
\end{definition}

\begin{definition}[Minimum extension]
Let $X$ be an arbitrary set, $Y\subset X$, and $d_Y$ be a metric on $Y$. The minimum (cost)
extension of $d_Y$ to $X$ with respect to a set of nonnegative weights $\alpha_{ij}$ on pairs $(i,j)\in X\times X$ 
is a metric extension $d_X$ of $d_Y$ that minimizes the linear functional $\alpha(d_X)$: 
$$\alpha(d_X) \equiv \sum_{i,j\in X} \alpha_{ij} d_X(i,j).$$
We denote $\alpha(d_X)$ by $\minext_{Y \to X} (d_Y, \alpha)$.
\end{definition}

\begin{lemma}
Let $X$ be an arbitrary set,  $Y\subset X$, and  $\alpha_{ij}$ be a set
of nonnegative weights on pairs
$(i,j)\in X\times X$. Then the function $\minext_{Y \to X} (d_Y, \alpha)$ is a convex function of the first variable. 
\end{lemma}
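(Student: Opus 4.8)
The plan is to exploit two structural facts already recorded above: $\calD_X$ is a convex cone, and $\alpha(\cdot)$ is a linear functional. Fix two metrics $d_Y^0, d_Y^1 \in \calD_Y$ and a scalar $\lambda \in [0,1]$. First I would choose, for each $t\in\{0,1\}$, a metric $d_X^t\in\calD_X$ that attains the minimum defining $\minext_{Y\to X}(d_Y^t,\alpha)$; that is, $d_X^t|_Y = d_Y^t$ and $\alpha(d_X^t) = \minext_{Y\to X}(d_Y^t,\alpha)$. Such a minimizer exists because the feasible set $\{d_X\in\calD_X : d_X|_Y = d_Y^t\}$ is a nonempty closed polyhedron (it is cut out by finitely many linear inequalities, and it is nonempty since any metric on $Y$ extends to $X$, e.g.\ by setting every distance not internal to $Y$ equal to the diameter of $d_Y^t$) and $\alpha(\cdot)$ is bounded below by $0$ on it, all weights and all metric entries being nonnegative.

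Next I would form the convex combination $d_X = (1-\lambda)\,d_X^0 + \lambda\, d_X^1$. Since $\calD_X$ is closed under convex combinations, $d_X$ is again a metric on $X$; and since restriction to $Y$ is a linear operation, $d_X|_Y = (1-\lambda)\,d_Y^0 + \lambda\, d_Y^1$. In particular $(1-\lambda)\,d_Y^0 + \lambda\, d_Y^1$ lies in $\calD_Y$ (so the right-hand side of the desired inequality is well defined), and $d_X$ is a feasible extension of it. Therefore
$$\minext_{Y\to X}\big((1-\lambda)\,d_Y^0 + \lambda\, d_Y^1,\ \alpha\big)\ \le\ \alpha(d_X).$$
Finally, linearity of $\alpha$ gives
$$\alpha(d_X) = (1-\lambda)\,\alpha(d_X^0) + \lambda\,\alpha(d_X^1) = (1-\lambda)\,\minext_{Y\to X}(d_Y^0,\alpha) + \lambda\,\minext_{Y\to X}(d_Y^1,\alpha),$$
which is precisely the convexity inequality.

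There is no genuine obstacle in this argument; the only points needing a word of care are (a) that a minimizing extension exists, so that $\minext_{Y\to X}(d_Y,\alpha)$ is an honest minimum rather than merely an infimum — if one prefers not to argue attainment, one can instead pick near-optimal extensions $d_X^t$ with $\alpha(d_X^t)\le \minext_{Y\to X}(d_Y^t,\alpha)+\varepsilon$ and let $\varepsilon\to 0$, and the same computation goes through — and (b) the elementary observation that every metric on $Y$ admits at least one extension to $X$, so that the domain of the function really is all of $\calD_Y$. Both are one-line remarks.
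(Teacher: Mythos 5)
Your proof is correct and follows essentially the same route as the paper: take minimizing extensions of the two metrics, observe that their convex combination is a feasible (though not necessarily minimal) extension of the convex combination of the given metrics, and conclude by linearity of $\alpha$. The remarks on attainment of the minimum and on the existence of at least one extension are careful additions that the paper leaves implicit.
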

\begin{proof}
Consider arbitrary metrics $d_Y^*$ and $d_Y^{**}$ in $\calD_Y$. Let $d_X^*$ and $d_X^{**}$ be their minimal
extensions to $X$. For every $\lambda\in[0,1]$, the metric $\lambda d_X^* + (1-\lambda) d_X^{**}$ is an
extension (but not necessarily the minimum extension) of $\lambda d_Y^{*} + (1-\lambda) d_Y^{**}$ to $X$,
\begin{multline*}
\minext_{Y\to X} (\lambda d_Y^* + (1-\lambda) d_Y^{**}, \alpha) \leq 
\sum_{i,j\in X} \alpha_{ij}((\lambda d_X^{*} (i,j) + (1-\lambda) d_X^{**} (i,j))) = \\
\lambda \sum_{i,j\in X} \alpha_{ij} d_X^{*} (i,j) + (1-\lambda) \sum_{i,j\in X} \alpha_{ij} d_X^{**} (i,j)
= \lambda \minext_{Y\to X} ( d_Y^*,\alpha) + (1-\lambda) \minext_{Y\to X}(d_Y^{**},\alpha).$$
\end{multline*}
\end{proof}

Later, we shall need the following theorem of \citeasnoun{FHRT}. 

\begin{theorem}[FHRT 0-extension Theorem]\label{thm:FHRT}
Let $X$ be a set of points, $Y$ be a $k$-point subset of $X$, and
$d_Y\in \calD_Y$ be a metric on $Y$. Then for every set of nonnegative 
weights $\alpha_{ij}$ on $X\times X$, there exists a map (0-extension) $f: X \to Y$ such that 
$f(p) = p$ for every $p\in Y$ and 
$$\sum_{i,j\in X} \alpha_{ij} \cdot d_Y(f(i),f(j)) \leq O(\log k/\log\log k)\times 
\minext_{Y\to X} (d_Y,\alpha).$$
\end{theorem}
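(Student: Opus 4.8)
The plan is to round the natural semimetric relaxation, whose value is exactly $\minext_{Y\to X}(d_Y,\alpha)$, following the template of Calinescu--Karloff--Rabani together with the refinement of FHRT. First I would fix an optimal metric extension $d_X\in\calD_X$: the affine slice $\set{d_X\in\calD_X : d_X|_Y = d_Y}$ is nonempty (e.g.\ put $d_X=d_Y$ on pairs inside $Y$ and $d_X=\max_{p,q\in Y}d_Y(p,q)$ on all other pairs; one checks the triangle inequality directly), and since truncating all distances at $\max_{p,q\in Y}d_Y(p,q)$ keeps a metric extension and does not increase the cost, we may assume the feasible set is compact, so the nonnegative functional $\alpha(\cdot)$ attains its minimum $\minext_{Y\to X}(d_Y,\alpha)$ there. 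The goal is then a randomized retraction $f:X\to Y$ with $f(p)=p$ for $p\in Y$ such that for \emph{every} pair $i,j\in X$,
$$\E{d_Y(f(i),f(j))}\ \le\ O\!\brc{\frac{\log k}{\log\log k}}\, d_X(i,j).$$
Granting this, linearity of expectation gives $\E{\sum_{i,j\in X}\alpha_{ij}d_Y(f(i),f(j))}\le O(\log k/\log\log k)\cdot\alpha(d_X)=O(\log k/\log\log k)\cdot\minext_{Y\to X}(d_Y,\alpha)$, so some outcome $f$ meets the bound of the theorem; note that $d_Y(f(i),f(j))$ only involves distances between terminals, which are exactly $d_Y$, so no further normalization is needed.

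The retraction is produced by the CKR ``ball-growing'' procedure driven by the guide metric $d_X$: sample a uniformly random permutation $\pi$ of $Y$ and a random scale offset, and for dyadic radii processed from small to large go through the terminals in the order $\pi$, assigning each still-unassigned point $i$ to the current terminal $t$ as soon as $d_X(i,t)\le r$ (with terminals fixed to themselves by construction). Each point is assigned at a radius within a factor $2$ of $d_X(i,Y)$, hence $d_X(i,f(i))=O(d_X(i,Y))$, and (since $d_X$ extends $d_Y$) $d_Y(f(i),f(j))=d_X(f(i),f(j))$ for all $i,j$. To bound $\E{d_Y(f(i),f(j))}$, fix $i,j$, put $\delta=d_X(i,j)$, order the terminals $t_1,t_2,\dots$ by increasing distance from $i$, and estimate the probability that $i,j$ are first separated by $t_m$ at scale $2^\ell$: this forces the random radius at that scale into an interval of multiplicative width $O(\delta/2^\ell)$ and forces $t_m$ to precede $t_1,\dots,t_{m-1}$ in $\pi$, giving probability $O\!\brc{\tfrac{\delta}{2^\ell}\cdot\tfrac1m}$; summing over the $t_m$ with $d_X(i,t_m)\approx 2^\ell$ turns $\sum\frac1m$ into $\ln\!\brc{|B_{d_X}(i,2^{\ell+1})|/|B_{d_X}(i,2^\ell)|}$, and multiplying by the $O(2^\ell+\delta)$ bound on $d_Y(f(i),f(j))$ conditioned on this event yields a per-scale contribution $O(\delta)\cdot\ln\!\brc{|B(i,2^{\ell+1})|/|B(i,2^\ell)|}$. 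Telescoping over scales already gives $\E{d_Y(f(i),f(j))}=O(\delta\log k)$, the classical Calinescu--Karloff--Rabani bound.

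The main obstacle is shaving $O(\log k)$ down to $O(\log k/\log\log k)$, and this is exactly where FHRT's refinement enters. The idea is to spread the random radius at each step over a window of several consecutive dyadic scales rather than a single one, with the window width chosen adaptively from the local growth rate of $r\mapsto|B_{d_X}(i,r)|$: over a range of scales where this ball count grows slowly one can afford a wide window, which dilutes the per-step separation probability, whereas over a range where it grows fast the window must stay narrow. The payoff comes from a budget argument: the ball count rises from $O(1)$ to at most $k$, so there can be only $O(\log k/\log\log k)$ scale ranges over which it grows by a factor $\ge\log k$, while on the complementary slowly-growing ranges a $\Theta(\log\log k)$-wide window saves a $\log\log k$ factor and the telescoped sum of log-ratios is still at most $\ln k$; balancing the two types of ranges gives the claimed bound. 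I expect the fiddly points to be the ``long-edge'' regime $2^\ell<\delta$ (where a pair may be cut before reaching its natural scale, so the conditional distance estimate $O(2^\ell+\delta)$ needs a separate argument) and checking that the window-smoothed, adaptive-width ball-growing procedure still defines a genuine retraction with $d_X(i,f(i))=O(d_X(i,Y))$; the conceptual heart, however, is the growth-rate/window-width trade-off above.
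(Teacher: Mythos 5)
You should first be aware that the paper does not prove this statement at all: Theorem~\ref{thm:FHRT} is imported as a black box from Fakcharoenphol--Harrelson--Rao--Talwar (2003), with the weaker $O(\log k)$ predecessor credited to Calinescu--Karloff--Rabani. So your proposal is an attempt to reconstruct the FHRT proof itself. The outer architecture of your reconstruction is sound: the reduction to a randomized retraction $f$ with a per-pair expected-stretch guarantee $\E{d_Y(f(i),f(j))}\le Q\, d_X(i,j)$ against an optimal (attained, by your truncation argument) extension $d_X$, followed by linearity of expectation, is exactly how these results are proved, and your account of the CKR ball-growing analysis --- separation probability $O(\delta/(2^\ell m))$ for the $m$-th closest terminal, the harmonic sum collapsing to log-ratios of ball cardinalities, telescoping to $O(\log k)$ --- is correct.

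The gap is that the entire content of the theorem beyond CKR, namely the improvement from $O(\log k)$ to $O(\log k/\log\log k)$, is asserted rather than proved. Your ``adaptive window'' heuristic has a real tension that the sketch does not resolve: spreading the radius over $W$ consecutive dyadic scales does dilute each separation probability by a factor of $W$, but it simultaneously allows a point $i$ to be captured at a radius up to $2^{W}$ times larger than $d_X(i,Y)$, which inflates the conditional bound $d_Y(f(i),f(j))=O(2^\ell+\delta)$ by the same exponential factor and, na\"ively, wipes out the savings. Moreover, the window width in your scheme is supposed to depend on the local growth of $r\mapsto|B_{d_X}(i,r)|$ at the point $i$ being analyzed, whereas the random radii must be chosen once, globally, for the procedure to define a single retraction --- different pairs $(i,j)$ see different growth profiles, so it is not clear the ``budget argument'' can be applied pointwise. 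These are precisely the difficulties that occupy the FHRT paper, and flagging them as ``fiddly points'' does not discharge them. As it stands the proposal proves the $O(\log k)$ bound (which would only recover the earlier CKR-based version of every downstream result in this paper) and gives a plausible but incomplete roadmap for the $\log\log k$ saving; to be a proof of Theorem~\ref{thm:FHRT} it would need the smoothed, adaptive-radius process to be specified exactly and the two competing effects above to be balanced quantitatively.
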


The notion of 0-extension was introduced by \citeasnoun{Karzanov}. A 
slightly weaker version of this theorem (with a guarantee of $O(\log k)$) was
proved earlier by~\citeasnoun{CKR}. 

\section{Metric Extension Operators}
In this section, we introduce the definitions of
``metric extension operators'' and 
``metric vertex sparsifiers'' and then establish a connection between them
and flow sparsifiers. Specifically, we show that
each $Q$-quality metric sparsifier is
a $Q$-quality flow sparsifier and vice versa (Lemma~\ref{lem:MetricIsFlow}, Lemma~\ref{lem:FlowIsMetric}).
In the next section, we prove that there exist 
metric extension operators with distortion
$O(\log k /\log\log k)$ and give an algorithm that finds the optimal 
extension operator.

\begin{definition}[Metric extension operator]
Let $X$ be a set of points, and $Y$ be a $k$-point subset of $X$. We say that
a linear operator $\phi: \calD_Y \to \calD_X$ defined as
$$\phi(d_Y)(i,j) = \sum_{p,q\in Y} \phi_{ipjq}d_Y(p,q)$$
is a $Q$-distortion metric extension operator 
with respect to a set of nonnegative weights $\alpha_{ij}$, if 
\begin{itemize}
\item for every metric $d_Y\in \calD_Y$, metric $\phi(d_Y)$ is a metric extension of $d_Y$;
\item for every metric $d_Y\in \calD_Y$,
\begin{align*}
\alpha(\phi(d_Y))\equiv&\sum_{i,j\in X} \alpha_{ij} \phi(d_Y)(i,j) \leq  Q\times
\minext_{Y \to X} (d_Y, \alpha).&&\\
\intertext{\textbf{Remark:} As we show in Lemma~\ref{lem:lowerbound}, a stronger bound always holds:}
\minext_{Y \to X} (d_Y, \alpha)\leq&
\alpha (\phi(d_Y))  \leq  Q\times
\minext_{Y \to X} (d_Y, \alpha).&&
\end{align*}
\item for all $i,j\in X$, and $p,q\in Y$, 
$$\phi_{ipjq}\geq 0.$$
\end{itemize}
\end{definition}

We shall always identify the operator $\phi$ with its matrix $\phi_{ipjq}$.

\begin{definition}[Metric vertex sparsifier]
Let $X$ be a set of points, and $Y$ be a $k$-point subset of $X$. We say that
a linear functional $\beta: \calD_Y \to \bbR$ defined as
$$\beta(d_Y) = \sum_{p,q\in Y} \beta_{pq} d_Y(p,q)$$
is a $Q$-quality metric vertex sparsifier 
with respect to a set of nonnegative weights $\alpha_{ij}$, if 
for every metric $d_Y\in \calD_Y$,
$$
\minext_{Y \to X} (d_Y, \alpha) \leq \beta(d_Y) \leq  Q\times
\minext_{Y \to X} (d_Y, \alpha);$$
and all coefficients $\beta_{pq}$ are nonnegative.
\end{definition}

The definition of the metric vertex sparsifier
is equivalent to the definition of the flow vertex sparsifier.
We prove this fact in Lemma~\ref{lem:MetricIsFlow} and 
Lemma~\ref{lem:FlowIsMetric} using duality. However, we shall use the term 
``metric vertex sparsifier'', because the new definition is more 
convenient for us. Also, the notion of metric sparsifiers
makes sense when we restrict $d_X$ and $d_Y$ to be in special 
families of metrics. For example, $(\ell_1,\ell_1)$ metric sparsifiers
are equivalent to cut sparsifiers.
\begin{remark}
The constraints that all $\phi_{ipjq}$ and $\beta_{pq}$ are 
nonnegative though may seem unnatural are required for  
applications. We note that there exist linear operators 
$\phi:\calD_Y \to \calD_X$ and linear functionals $\beta:\calD_Y\to \bbR$
that satisfy all constraints above except for 
the non-negativity constraints. However, even if we drop 
the non-negativity constraints, then there will always exist an optimal 
metric sparsifier with nonnegative constraints (the
optimal metric sparsifier is not necessarily unique).
Surprisingly, the same is not true for metric extension operators:
if we drop the non-negativity constraints, then, in certain cases, the optimal
metric extension operator will necessarily have some negative coefficients. This 
remark is not essential for the further exposition, and we omit the proof here.
\end{remark}

\begin{lemma}\label{lem:lowerbound}
Let $X$ be a set of points, $Y\subset X$, and $\alpha_{ij}$ be a nonnegative set of weights
on pairs $(i,j)\in X\times X$. Suppose that $\phi: \calD_Y \to \calD_X$ is a $Q$-distortion 
metric extension operator. Then
$$\minext_{Y \to X} (d_Y, \alpha)\leq \alpha (\phi(d_Y)).$$
\end{lemma}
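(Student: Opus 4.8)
The plan is to simply unwind the definitions; the inequality is an immediate consequence of the first defining property of a metric extension operator together with the definition of $\minext$. By that first property, for every metric $d_Y\in\calD_Y$ the image $\phi(d_Y)$ lies in $\calD_X$ (so it is a metric on $X$) and its restriction to $Y$ equals $d_Y$; in other words, $\phi(d_Y)$ is a metric extension of $d_Y$ to $X$. Thus $\phi(d_Y)$ is a legitimate competitor in the optimization problem that defines the minimum extension.

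Next, recall that $\minext_{Y\to X}(d_Y,\alpha)$ is by definition the minimum value of the linear functional $\alpha(\cdot)$ over the (nonempty, closed, convex) set of all metric extensions of $d_Y$ to $X$. Since $\phi(d_Y)$ belongs to this set, evaluating the functional at $\phi(d_Y)$ cannot yield a value below the minimum:
$$\minext_{Y\to X}(d_Y,\alpha)=\min_{d_X\ \text{extends}\ d_Y}\alpha(d_X)\ \leq\ \alpha(\phi(d_Y)).$$
This is exactly the asserted bound, and combining it with the upper bound $\alpha(\phi(d_Y))\leq Q\cdot\minext_{Y\to X}(d_Y,\alpha)$ from the definition of a $Q$-distortion metric extension operator gives the two-sided estimate stated in the Remark following that definition.

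There is no real obstacle here: the only point that needs care is checking that $\phi(d_Y)$ actually qualifies as a feasible extension in the minimization defining $\minext$ — that is, that it is simultaneously a metric and restricts to $d_Y$ on $Y$ — but both facts are precisely what the first bullet of the definition of a metric extension operator guarantees, so nothing further is required. (Note also that the set of metric extensions of $d_Y$ is always nonempty, e.g.\ it contains the shortest-path metric on $X$ induced by any weighting, so the minimum in the definition of $\minext$ is attained.)
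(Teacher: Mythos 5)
Your proof is correct and matches the paper's own argument exactly: $\phi(d_Y)$ is, by the first bullet of the definition, a metric extension of $d_Y$, and the minimum over all extensions is therefore at most $\alpha(\phi(d_Y))$. Nothing further is needed.
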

\begin{proof}
The lower bound
$$\minext_{Y \to X} (d_Y, \alpha)\leq \alpha (d_X)$$
holds for every extension $d_X$ (just by the definition of the \emph{minimum} metric extension),
and particularly for $d_X = \phi (d_Y)$.
\end{proof}

We now show that given an extension operator with distortion $Q$, it is easy to obtain 
$Q$-quality metric sparsifier.

\begin{lemma}\label{lem:oper2func}
Let $X$ be a set of points, $Y\subset X$, and $\alpha_{ij}$ be a nonnegative set of weights
on pairs $(i,j)\in X\times X$. Suppose that $\phi: \calD_Y \to \calD_X$ is a $Q$-distortion 
metric extension operator. Then there exists a $Q$-quality metric sparsifier $\beta: \calD_Y\to \bbR$.
Moreover, given the operator $\phi$, the sparsifier $\beta$ can be found in polynomial-time.
\end{lemma}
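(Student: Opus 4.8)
The plan is to take the obvious candidate for $\beta$, namely the composition of $\phi$ with the linear functional $\alpha(\cdot)$, and check that it inherits every property required of a metric vertex sparsifier. Concretely, I would set
$$\beta(d_Y) \;:=\; \alpha(\phi(d_Y)) \;=\; \sum_{i,j\in X}\alpha_{ij}\,\phi(d_Y)(i,j).$$
Since $\phi$ is linear and given in coordinates by $\phi(d_Y)(i,j)=\sum_{p,q\in Y}\phi_{ipjq}\,d_Y(p,q)$, substituting this expression and exchanging the order of summation writes $\beta$ in the required form $\beta(d_Y)=\sum_{p,q\in Y}\beta_{pq}\,d_Y(p,q)$ with
$$\beta_{pq} \;=\; \sum_{i,j\in X}\alpha_{ij}\,\phi_{ipjq}.$$
This identity of linear functionals holds for all symmetric functions $d_Y$, in particular on the cone $\calD_Y$.

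Next I would verify the three conditions in the definition of a $Q$-quality metric vertex sparsifier. Nonnegativity of every $\beta_{pq}$ is immediate, as each is a sum of products of the nonnegative numbers $\alpha_{ij}$ and $\phi_{ipjq}$. The lower bound $\minext_{Y\to X}(d_Y,\alpha)\le\beta(d_Y)$ is exactly Lemma~\ref{lem:lowerbound} applied with $d_X=\phi(d_Y)$, together with the identity $\beta(d_Y)=\alpha(\phi(d_Y))$. The upper bound $\beta(d_Y)\le Q\cdot\minext_{Y\to X}(d_Y,\alpha)$ is precisely the second bullet of the definition of a $Q$-distortion metric extension operator, again rewritten via $\beta(d_Y)=\alpha(\phi(d_Y))$. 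Hence $\beta$ is a $Q$-quality metric sparsifier. For the algorithmic claim, observe that once the matrix $(\phi_{ipjq})$ of $\phi$ and the weights $(\alpha_{ij})$ are given, each coefficient $\beta_{pq}$ is a sum of at most $|X|^2$ terms, so all the $\beta_{pq}$ can be computed in time polynomial in $|X|$.

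I do not expect a real obstacle here: the statement follows by linearity combined with the two bounds already available (the definition of a $Q$-distortion operator for the upper bound, Lemma~\ref{lem:lowerbound} for the lower bound). The only point requiring a little care is the bookkeeping in passing between the functional $\alpha(\phi(d_Y))$ and its coordinate representation $\sum_{p,q\in Y}\beta_{pq}\,d_Y(p,q)$ (symmetry of $d_Y$ and the diagonal terms), and the observation that nonnegativity of the $\beta_{pq}$ is preserved under this manipulation regardless of how ordered versus unordered pairs are accounted for. So the ``hard part'' is really just confirming that the natural definition works verbatim and that no extra optimization or averaging step is needed.
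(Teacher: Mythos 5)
Your proposal is correct and coincides with the paper's own proof: both define $\beta(d_Y)=\alpha(\phi(d_Y))$ with coefficients $\beta_{pq}=\sum_{i,j\in X}\alpha_{ij}\phi_{ipjq}$, obtain the upper bound from the definition of a $Q$-distortion operator and the lower bound from Lemma~\ref{lem:lowerbound}, and note that the coefficients are computable in polynomial time. No issues.
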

\begin{remark}
Note, that the converse statement does not hold. There exist sets $X$, $Y\subset X$ and weights $\alpha$ such that 
the distortion of the best metric extension operator is strictly larger than
the quality of the best metric vertex sparsifier. 
\end{remark}
\begin{proof}
Let $\beta(d_Y) = \sum_{i,j\in X} \alpha_{ij} \phi(d_Y)(i,j)$. Then by the definition
of $Q$-distortion extension operator, and by Lemma~\ref{lem:lowerbound},
$$\minext_{Y \to X} (d_Y, \alpha)\leq \beta (d_Y) \equiv \alpha(\phi(d_Y)) \leq Q \times \minext_{Y \to X} (d_Y, \alpha).$$
If $\phi$ is given in the form~(\ref{eq:z}), then
$$\beta_{pq} = \sum_{i,j\in X} \alpha_{ij}\phi_{ipjq}.$$
\end{proof}

We now prove that every $Q$-quality metric sparsifier is 
a $Q$-quality flow sparsifier. We prove that every 
$Q$-quality flow sparsifier is a $Q$-quality metric sparsifier
in the Appendix.

\begin{lemma}\label{lem:MetricIsFlow}
Let $G=(V,\alpha)$ be a weighted graph and let $U\subset V$ be a subset of vertices.
Suppose, that a linear functional $\beta: \calD_U \to \bbR$, defined as
$$\beta (d_U) = \sum_{p,q\in U} \beta_{pq} d_U(p,q)$$
is a $Q$-quality metric sparsifier. Then the graph 
$H=(U,\beta)$ is a $Q$-quality flow sparsifier of $G$.
\end{lemma}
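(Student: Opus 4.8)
The plan is to use LP duality to translate the statement about max-flows into a statement about minimum metric extensions, at which point the $Q$-quality metric sparsifier hypothesis applies almost directly. First I would fix a set of demands $\{(s_r,t_r,\dem_r)\}$ between terminals in $U$ and recall the standard LP-duality characterization of $\maxflow(G,\{(s_r,t_r,\dem_r)\})$: by linear programming duality (the fractional multi-commodity flow LP with capacities $\alpha$ and concurrent objective $\lambda$), the maximum concurrent fraction $\lambda$ equals the minimum over all metrics $d$ on $V$ of $\frac{\sum_{i,j\in V}\alpha_{ij}d(i,j)}{\sum_r \dem_r\, d(s_r,t_r)}$, where the denominator may be normalized to $1$. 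The same duality applied to $H=(U,\beta)$ gives $\maxflow(H,\{(s_r,t_r,\dem_r)\}) = \min\{ \sum_{p,q\in U}\beta_{pq}d_U(p,q) : d_U\in\calD_U,\ \sum_r \dem_r\, d_U(s_r,t_r) \geq 1\}$.

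Next I would observe that, since all terminals $s_r,t_r$ lie in $U$, the denominator constraint $\sum_r \dem_r\, d(s_r,t_r)\geq 1$ depends only on the restriction $d|_U$. Therefore I can break the minimization over metrics $d$ on $V$ into two stages: first choose the restriction $d_U = d|_U \in \calD_U$ subject to $\sum_r \dem_r\, d_U(s_r,t_r)\geq 1$, and then, for that fixed $d_U$, choose the extension $d$ of $d_U$ to $V$ minimizing $\sum_{i,j\in V}\alpha_{ij}d(i,j)$ — but that inner minimum is exactly $\minext_{U\to V}(d_U,\alpha)$. Hence
$$\maxflow(G,\{(s_r,t_r,\dem_r)\}) = \min\Big\{ \minext_{U\to V}(d_U,\alpha) : d_U\in\calD_U,\ \textstyle\sum_r \dem_r\, d_U(s_r,t_r)\geq 1\Big\},$$
and the two minimizations are over the same feasible set of $d_U$'s. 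Now I apply the metric sparsifier guarantee pointwise: for every feasible $d_U$,
$$\minext_{U\to V}(d_U,\alpha) \ \leq\ \beta(d_U) \ \leq\ Q\times \minext_{U\to V}(d_U,\alpha).$$
Taking the minimum over the common feasible set preserves both inequalities, so the minimum of $\beta(d_U)$ — which is $\maxflow(H,\{(s_r,t_r,\dem_r)\})$ — is sandwiched between $\maxflow(G,\cdot)$ and $Q\times\maxflow(G,\cdot)$, as required. Since the demands were arbitrary, this proves that $H$ is a $Q$-quality flow sparsifier.

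The main obstacle, and the step needing the most care, is the first one: correctly stating and justifying the LP-duality formula $\maxflow = \min_d \frac{\alpha(d)}{\sum_r\dem_r d(s_r,t_r)}$ and, in particular, making sure the normalization and the direction of inequalities are handled properly (including degenerate cases where the denominator can vanish, and the fact that it suffices to optimize over metrics rather than arbitrary length functions because shortest-path metrics are optimal for the dual). Everything after that — the two-stage decomposition of the minimization and the pointwise application of the sparsifier bound — is routine, the only substantive remark being that the feasible region for $d_U$ is literally identical in the $G$ and $H$ optimization problems because the demand pairs lie inside $U$.
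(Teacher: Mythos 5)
Your proposal is correct and follows essentially the same route as the paper: both pass to the standard dual LPs for the concurrent flow in $G$ and in $H$, use the fact that the demand constraint involves only pairs in $U$ so that restriction and minimum extension move feasible solutions between the two programs, and then apply the sparsifier inequalities $\minext_{U\to V}(d_U,\alpha)\leq\beta(d_U)\leq Q\,\minext_{U\to V}(d_U,\alpha)$. Your two-stage reformulation of the $G$-dual as $\min_{d_U}\minext_{U\to V}(d_U,\alpha)$ is just a symmetric repackaging of the paper's argument, which instead transports the optimal solution of each dual to a feasible solution of the other.
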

\begin{proof}
Fix a set of demands $\{(s_r,t_r,\dem_r)\}$. We need to show, that 
$$\maxflow (G,\{(s_r,t_r,\dem_r)\}) \leq 
\maxflow (H,\{(s_r,t_r,\dem_r)\}) \leq Q\times \maxflow (G,\{(s_r,t_r,\dem_r)\}).$$

The fraction of concurrently satisfied demands by the maximum multi-commodity flow in $G$ equals the maximum of the following standard linear program (LP) for the problem: the LP has a variable $w_p$ for every path between terminals that equals the weight of the path (or, in other words, the amount of flow routed along 
the path) and a variable $\lambda$ that equals the fraction of satisfied demands. The objective is to maximize $\lambda$.  The constraints are the capacity constraints (\ref{constr:capacity}) and demand constraints (\ref{constr:demand}). The maximum of the LP equals the minimum of the (standard) dual LP 
(in other words, it equals the value of the fractional sparsest cut with non-uniform demands). 


\rule{0pt}{12pt}
\hrule height 0.8pt
\rule{0pt}{1pt}
\hrule height 0.4pt
\rule{0pt}{6pt}

\noindent \textbf{minimize:} 
$$\sum_{i,j\in V} \alpha_{ij} d_V(i,j)$$

\noindent \textbf{subject to:}
\begin{align*}
\sum_{r} d_V(s_r,t_r)\times \dem_r &\geq 1&\\ 
d_V & \in \calD_V & \text{i.e., $d_V$ is a metric on } V
\end{align*}
\rule{0pt}{1pt}
\hrule height 0.4pt
\rule{0pt}{1pt}
\hrule height 0.8pt
\rule{0pt}{12pt}

The variables of the dual LP are $d_V(i,j)$, where $i,j\in V$. Similarly, the 
maximum concurrent flow in $H$ equals the minimum of the following dual LP.


\rule{0pt}{12pt}
\hrule height 0.8pt
\rule{0pt}{1pt}
\hrule height 0.4pt
\rule{0pt}{6pt}

\noindent \textbf{minimize:} 
$$\sum_{p,q \in U} \beta_{pq} d_U(p,q)$$

\noindent \textbf{subject to:}
\begin{align*}
\sum_{r} d_U(s_r,t_r)\times \dem_r &\geq 1\\ 
d_U & \in \calD_U & \text{i.e., $d_U$ is a metric on } U
\end{align*}
\rule{0pt}{1pt}
\hrule height 0.4pt
\rule{0pt}{1pt}
\hrule height 0.8pt
\rule{0pt}{12pt}


Consider the optimal solution $d^*_U$ of the dual LP for $H$. Let $d^*_V$ be the 
minimum extension of $d^*_U$. Since $d^*_V$ is a metric, and  $d^*_V (s_r,t_r)
=d^*_U (s_r,t_r)$ for each $r$, $d^*_V$ is a feasible solution of the 
the dual LP for $G$. By the definition of the metric sparsifier:
$$\beta(d^*_U) \equiv \sum_{p,q \in U} \beta_{pq} d^*_U(p,q)
\geq \minext_{Y\to X} (d^*_U,\alpha) \equiv
\sum_{i,j \in V} \alpha_{ij} d^*_V(i,j).$$
Hence, 
$$\maxflow (H,\{(s_r,t_r,\dem_r)\}) \geq \maxflow (G,\{(s_r,t_r,\dem_r)\}).$$

Now, consider the optimal solution $d^*_V$ of the dual LP for $G$. Let $d^*_U$ be the restriction 
of $d^*_V(p,q)$ to the set $U$. Since $d^*_U$ is a metric, and  $d^*_U (s_r,t_r)
=d^*_V (s_r,t_r)$ for each $r$, $d^*_U$ is a feasible solution of the 
the dual LP for $H$. By the definition of the metric sparsifier
 (keep in mind that $d^*_V$ is an extension of $d^*_U$),
$$\beta(d^*_U) \equiv \sum_{p,q \in U} \beta_{pq} d^*_U(p,q)
\leq Q\times \minext_{Y\to X} (d^*_U,\alpha) \leq
Q\times \sum_{i,j \in V} \alpha_{ij} d^*_V(i,j).$$
Hence, 
$$\maxflow (H,\{(s_r,t_r,\dem_r)\}) \leq Q\times \maxflow (G,\{(s_r,t_r,\dem_r)\}).$$
\end{proof}

We are now ready to state the following result.
\begin{theorem}
There exists a polynomial-time algorithm that given a weighted graph $G=(V, \alpha)$ and a
$k$-vertex subset $U\subset V$, finds a $O(\log k/\log\log k)$-quality flow sparsifier $H=(U,\beta)$.
\end{theorem}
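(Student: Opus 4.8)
The plan is to reduce the statement, via the lemmas already established, to a single structural fact about metric extension operators, and then to prove that fact. Concretely: by Lemma~\ref{lem:MetricIsFlow} it suffices to produce, in polynomial time, a $Q$-quality metric vertex sparsifier $\beta$ with $Q=O(\log k/\log\log k)$; and by Lemma~\ref{lem:oper2func} it suffices to produce, in polynomial time, a $Q$-distortion metric extension operator $\phi:\calD_U\to\calD_V$ with respect to $\alpha$, from which $\beta$ is then read off. So I would split the work into two claims: (i) a $Q$-distortion metric extension operator with $Q=O(\log k/\log\log k)$ exists; and (ii) an \emph{optimal} metric extension operator is the solution of a polynomial-size linear program.

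For claim (i) I would derive existence from the FHRT $0$-extension theorem (Theorem~\ref{thm:FHRT}) by a minimax argument. For each map $f:V\to U$ with $f|_U=\mathrm{id}$ (a $0$-extension) the pullback $\phi^f(d_U)(i,j)=d_U(f(i),f(j))$, i.e.\ $\phi^f_{ipjq}=\ONE[f(i)=p]\,\ONE[f(j)=q]$, has nonnegative coefficients, sends semimetrics to semimetrics, and restricts to the identity on $U$, hence is a (trivial) metric extension operator. The set of metric extension operators is convex, so every convex combination $\phi=\sum_f p_f\,\phi^f$ over the finitely many $0$-extensions is again one, and its distortion is at most $Q$ precisely when $\Exp_{f\sim p}\!\big[\sum_{i,j}\alpha_{ij}\,d_U(f(i),f(j))\big]\le Q\cdot\minext_{U\to V}(d_U,\alpha)$ for all $d_U\in\calD_U$. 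Since $\minext_{U\to V}(\cdot,\alpha)$ is positively homogeneous, it is enough to check this on a compact convex slice of the cone $\calD_U$; there the quantity $\Exp_{f\sim p}[\cdots]-Q\cdot\minext_{U\to V}(d_U,\alpha)$ is linear in the distribution $p$ and concave in $d_U$ (a linear term minus $Q$ times the convex function $\minext$), so a minimax theorem (e.g.\ Sion's) lets me exchange $\min_p$ with $\max_{d_U}$. For each fixed $d_U$ one has $\min_p\Exp_{f\sim p}[\cdots]=\min_f\sum_{i,j}\alpha_{ij}\,d_U(f(i),f(j))\le Q\cdot\minext_{U\to V}(d_U,\alpha)$ by Theorem~\ref{thm:FHRT}; hence the maximin value is $\le 0$, so the minimax value is $\le 0$, and some distribution $p^*$ works for every $d_U$ simultaneously. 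Then $\phi^*=\sum_f p^*_f\,\phi^f$ is the desired operator.

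For claim (ii) I would make the target $Q$ a variable and show that the conditions defining a $Q$-distortion metric extension operator become, after a standard LP-duality reformulation, polynomially many linear constraints in $Q$, the coefficients $\phi_{ipjq}$, and polynomially many auxiliary variables. The constraints $\phi_{ipjq}\ge 0$ and ``$\phi(d_U)$ restricts to $d_U$ on $U$'' (finitely many linear identities among the $\phi_{ipjq}$) are already explicit. The requirement that $\phi(d_U)\in\calD_V$ for \emph{all} $d_U\in\calD_U$ amounts to asking that, for each triple $i,j,k\in V$, a certain coefficient vector depending linearly on $\phi$ lie in the dual cone $\calD_U^*$ --- this is exactly what makes the $(i,j,k)$ triangle inequality for $\phi(d_U)$ hold identically in $d_U$. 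The distortion requirement ``$\alpha(\phi(d_U))\le Q\cdot\minext_{U\to V}(d_U,\alpha)$ for all $d_U$'' is equivalent, because $\minext$ is the minimum over extensions, to ``$\alpha(\phi(d_V|_U))\le Q\,\alpha(d_V)$ for all $d_V\in\calD_V$'', i.e.\ to a single linear functional of $d_V$ with coefficients linear in $(Q,\phi)$ being nonpositive on $\calD_V$ --- again a dual-cone membership. As $\calD_U$ and $\calD_V$ are polyhedral cones with polynomially many facets, their dual cones are generated by polynomially many explicit vectors, so each dual-cone membership is encoded using polynomially many auxiliary nonnegative multipliers and linear equalities. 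Minimizing $Q$ over this polynomial-size LP returns an optimal $\phi^*$; Lemma~\ref{lem:oper2func} converts it into $\beta$, and Lemma~\ref{lem:MetricIsFlow} shows $H=(U,\beta)$ is a flow sparsifier of $G$ of quality equal to the LP optimum, which by claim (i) is $O(\log k/\log\log k)$.

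I expect the main obstacle to be claim (i): one needs a \emph{single} linear operator that approximates the minimum extension of \emph{every} metric on $U$ at once, whereas Theorem~\ref{thm:FHRT} only supplies one $0$-extension per metric. Convexity of the set of extension operators together with the minimax exchange is the bridge; the delicate points there are replacing the unbounded cone $\calD_U$ by a compact slice (legitimate by homogeneity of $\minext$) and verifying the convex--concave structure needed for Sion's theorem. Turning the ``for all $d_U$'' conditions of claim (ii) into a genuinely polynomial-size LP --- via dual cones and LP duality, rather than the exponentially many extreme rays of the metric cone --- is the other place that requires some care, though it is routine.
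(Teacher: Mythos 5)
Your proposal is correct, and it splits the work exactly as the paper does: an existence claim proved by combining the FHRT $0$-extension theorem with a minimax argument, followed by a polynomial-time optimization over all metric extension operators, with Lemma~\ref{lem:oper2func} and Lemma~\ref{lem:MetricIsFlow} converting the output into a flow sparsifier. Your claim~(i) is essentially the paper's Theorem~\ref{thrm:exists}: the paper plays the minimax game between the full convex set $\Phi_{Y\to X}$ and the compact slice $\{d_Y:\minext_{Y\to X}(d_Y,\alpha)\le 1\}$, while you play it between distributions over $0$-extension operators and a compact slice of $\calD_U$; this is the same argument (your version has the minor advantage that both strategy sets are visibly compact). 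Where you genuinely diverge is claim~(ii). The paper (Theorem~\ref{thrm:alg} and Lemma~\ref{lem:PhiAB}) keeps the exponentially many constraints ``$\phi(d_U)\in\calD_V$ for all $d_U$'' and ``$\alpha(\phi(d_U))\le Q\minext_{Y\to X}(d_U,\alpha)$ for all $d_U$'' implicit and solves the convex program with the ellipsoid method, building two polynomial-time separation oracles (each an auxiliary LP over $\calD_U$ or $\calD_V$). You instead dualize each universally quantified constraint once and for all: since $\calD_U$ and $\calD_V$ are polyhedral cones with polynomially many facets, Farkas' lemma turns each ``nonnegative on the cone'' condition into membership in the conic hull of the facet normals, encoded by polynomially many nonnegative multipliers and linear equations, yielding a single explicit polynomial-size LP. (Your reformulation of the distortion constraint as $\alpha(\phi(d_V|_U))\le Q\,\alpha(d_V)$ for all $d_V\in\calD_V$ is precisely the trick the paper uses inside its separation oracle, so the two algorithms are duals of one another.) Your route buys an ellipsoid-free, explicitly written LP; the paper's oracle-based formulation is more modular and generalizes directly to other pairs of polyhedral cones via Lemma~\ref{lem:PhiAB}. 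Both are sound; the only points that deserve a sentence in a write-up are that $\calD_U$ is full-dimensional (so that the identity $\phi(d_U)|_U=d_U$ on the cone really forces linear identities on the coefficients) and that the constraint vector $Q\alpha-(\text{linear in }\phi)$ is jointly linear in $(Q,\phi)$ because $\alpha$ is data, so minimizing $Q$ remains an LP.
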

\begin{proof}
Using the algorithm given in Theorem~\ref{thrm:alg}, we find the metric extension operator $\phi:\calD_Y\to \calD_X$ with the smallest possible distortion. We output the coefficients of the linear functional $\beta(d_Y) = \alpha(\phi(d_Y))$ (see Lemma~\ref{lem:oper2func}). Hence, by Theorem~\ref{thrm:exists}, the distortion of $\phi$ is at most $O(\log k/\log\log k)$. By Lemma~\ref{lem:oper2func}, $\beta$ is an 
$O(\log k/\log\log k)$-quality metric sparsifier. Finally, by Lemma~\ref{lem:MetricIsFlow}, 
$\beta$ is a $O(\log k/\log\log k)$-quality flow sparsifier (and, thus, a $O(\log k/\log\log k)$-quality cut sparsifier).
 
\end{proof}

\section {Algorithms}

In this section, we prove our main algorithmic results: Theorem~\ref{thrm:exists} and 
Theorem~\ref{thrm:alg}. Theorem~\ref{thrm:exists} asserts that metric extension
operators with distortion $O(\log k/\log\log k)$ exist. To prove 
Theorem~\ref{thrm:exists}, we borrow some ideas from the paper of \citeasnoun{Moitra}.
Theorem~\ref{thrm:alg} asserts that the optimal metric extension
operator can be found in polynomial-time.

Let $\Phi_{Y\to X}$ be the set of all metric extension operators (with 
arbitrary distortion). That is, $\Phi_{Y\to X}$ is the set of 
linear operators $\phi: \calD_Y \to \calD_X$
with nonnegative coefficients $\phi_{ipjq}$ (see (\ref{eq:z}))
that map every metric $d_Y$ on $\calD_Y$ to an extension of $d_Y$ to $X$. 
We show that $\Phi_{Y\to X}$ is closed and convex, and that there exists 
a separation oracle for the set $\Phi_{Y\to X}$.


\begin{corollary}[Corollary of Lemma~\ref{lem:PhiAB} (see below)]~\label{cor:PhiYX}
\begin{enumerate}
\item The set of linear operators $\Phi_{Y\to X}$ is closed and convex.
\item There exists a polynomial-time separation oracle for $\Phi_{Y\to X}$.
\end{enumerate}
\end{corollary}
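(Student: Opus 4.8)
The plan is to prove the stronger Lemma~\ref{lem:PhiAB} for an abstract pair of finite sets and cones, and then read off Corollary~\ref{cor:PhiYX} by specializing to $A = \calD_Y$, $B = \calD_X$. The key observation is that the two defining properties of an operator $\phi \in \Phi_{Y\to X}$ are each expressible by linear constraints on the matrix $(\phi_{ipjq})$. First, the nonnegativity constraints $\phi_{ipjq} \geq 0$ are plainly linear and finite in number. Second, the extension property -- that $\phi(d_Y)$ restricted to $Y$ agrees with $d_Y$ -- is the requirement that for all $p,q \in Y$ and all $d_Y \in \calD_Y$ we have $\phi(d_Y)(p,q) = d_Y(p,q)$; since both sides are linear in $d_Y$ and this must hold on the full-dimensional cone $\calD_Y$, it is equivalent to finitely many linear equations in the $\phi_{ipjq}$ (namely $\phi_{ppqq} = 1$ when the pair $(p,q)$ appears and the other relevant coefficients vanish, made precise by testing on a spanning set of metrics in $\calD_Y$). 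Third, and this is the only genuinely infinite family, the constraint that $\phi(d_Y)$ be a metric on $X$ for \emph{every} $d_Y \in \calD_Y$: this says $\phi(d_Y)$ satisfies the triangle inequalities and symmetry and nonnegativity for all $i,j,k \in X$, for all $d_Y$ in the cone.

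To handle that infinite family I would argue as follows. Fix a triple $i,j,k \in X$; the map $d_Y \mapsto \phi(d_Y)(i,j) + \phi(d_Y)(j,k) - \phi(d_Y)(i,k)$ is a fixed linear functional $L_{ijk}$ on the finite-dimensional space containing $\calD_Y$, with coefficients that are linear in $\phi$. Requiring $L_{ijk}(d_Y) \geq 0$ for all $d_Y \in \calD_Y$ is exactly requiring $L_{ijk}$ to lie in the dual cone $\calD_Y^*$. Since $\calD_Y$ is a polyhedral cone (cut out by polynomially many triangle/symmetry/nonnegativity inequalities in dimension $\binom{k}{2}$), its dual $\calD_Y^*$ is also polyhedral and finitely generated, and membership in $\calD_Y^*$ is itself describable by finitely many linear inequalities on the coefficients of $L_{ijk}$, hence on $\phi$. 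Doing this for every triple $i,j,k$ and for the (automatic) symmetry and nonnegativity conditions shows that $\Phi_{Y\to X}$ is the intersection of finitely many closed half-spaces and hyperplanes in the space of matrices $(\phi_{ipjq})$ -- a polyhedron. In particular it is closed and convex, giving part (1).

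For part (2), the separation oracle, there are two natural routes and I would take whichever is cleaner to write. The direct route: given a candidate matrix $\phi$, check the finitely many nonnegativity and extension equalities directly; for each triple $i,j,k$ check whether $L_{ijk} \in \calD_Y^*$ by solving the polynomial-size linear program that tests membership in the polyhedral dual cone $\calD_Y^*$ (equivalently, minimize $L_{ijk}(d_Y)$ over $d_Y \in \calD_Y$ normalized suitably, or use LP duality), and if some triple fails, the optimal dual certificate $d_Y^\star \in \calD_Y$ with $L_{ijk}(d_Y^\star) < 0$ yields, via the linear dependence of $L_{ijk}$ on $\phi$, a separating hyperplane in matrix space. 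All of this is polynomial-time because $\calD_Y$ and $\calD_X$ each have polynomially many facets. The alternative route, which Lemma~\ref{lem:PhiAB} presumably states in the abstract form, is to invoke a general fact: if $A$ and $B$ are polyhedral cones with poly-size descriptions, then $\{\phi : \phi(A) \subseteq B\}$ is a poly-size-describable polyhedron with an efficient separation oracle, obtained by the same dual-cone trick applied coordinatewise to the facet inequalities of $B$.

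The step I expect to be the main obstacle -- or at least the one requiring the most care in the write-up -- is making the "for all $d_Y \in \calD_Y$" quantifier effective: one must be careful that passing to the dual cone $\calD_Y^*$ is legitimate (it is, since $\calD_Y$ is closed and convex), that $\calD_Y^*$ has a polynomial-size description (true because $\calD_Y$ is polyhedral with polynomially many facets, so $\calD_Y^* $ is generated by those facet normals), and that the extraction of a separating hyperplane from a violated-triple certificate is done in matrix coordinates correctly. Everything else -- closedness, convexity, the bookkeeping for the extension equalities -- is routine once this quantifier-elimination step is in place.
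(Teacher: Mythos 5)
Your proposal is correct, and the separation oracle you describe is, in substance, the one the paper uses: enumerate the polynomially many inequalities defining the target cone $\calD_X$ (triangle inequalities, symmetry, nonnegativity for each triple/pair in $X$), and for each one solve a polynomial-size LP over the source cone $\calD_Y$; a violating pair (constraint $l$, witness $d_Y^*$) yields a hyperplane that is linear in the matrix $(\phi_{ipjq})$ and separates $\phi$ from $\Phi_{Y\to X}$. Where you diverge is in part (1) and in the overall framing. The paper's Lemma~\ref{lem:PhiAB} proves closedness and convexity of $\{\phi:\phi(\calA)\subseteq\calB\}$ by a soft, two-line argument that uses only that $\calB$ is closed and convex (no polyhedrality at all), whereas you establish the stronger structural fact that $\Phi_{Y\to X}$ is a polyhedron by eliminating the universal quantifier over $d_Y$ through the dual cone $\calD_Y^*$. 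Your route buys more (polyhedrality, and an explicit treatment of the extension equalities $\phi(d_Y)|_Y=d_Y$ and the nonnegativity constraints, which the paper leaves implicit when deriving the corollary from the abstract lemma); the paper's route is shorter and more general. Two small cautions on your write-up: do not claim that $\calD_Y^*$, or the resulting polyhedron in $\phi$-space, has polynomially many \emph{facets} --- it is finitely generated by the polynomially many facet normals of $\calD_Y$, so membership is tested by a small LP or via an extended formulation, but its inequality description can be exponential; and since $\calD_Y$ is a cone rather than a polytope, the LPs in the oracle are either nonpositive or unbounded, so one must normalize or extract an unbounded ray as the witness (an issue the paper glosses over as well).
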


\begin{lemma}~\label{lem:PhiAB}
Let $\calA\subset \bbR^{m}$ and $\calB\subset \bbR^{n}$ be two polytopes defined by 
polynomially many linear inequalities (polynomially many in $m$ and $n$).
Let $\Phi_{\calA\to\calB}$ be the set of all linear operators $\phi: \bbR^{m} \to \bbR^{n}$,
defined as
$$\phi(a)_i = \sum_{p} \phi_{ip}a_p,$$
that map the set $\calA$ into a subset of $\calB$.
\begin{enumerate}
\item Then $\Phi_{\calA\to\calB}$ is a closed convex set. 
\item There exists a polynomial-time separation oracle for $\Phi_{\calA \to \calB}$. That is,
there exists a polynomial-time algorithm (not depending on $\calA$, $\calB$ and $\Phi_{\calA \to \calB}$),
that given linear constraints for the sets $\calA$, $\calB$, and the $n\times m$ matrix $\phi^*_{ip}$ of a
linear operator $\phi^*: \bbR^{m} \to \bbR^{n}$
\begin{itemize}
\item accepts the input, if $\phi^*\in \Phi_{\calA \to \calB}$.
\item rejects the input, and returns a separating hyperplane, otherwise; i.e., if $\phi^*\notin \Phi_{\calA \to \calB}$, then the oracle returns a linear constraint $l$ such that 
$l(\phi^*) > 0$, but for every $\phi \in \Phi_{\calA\to\calB}$, $l(\phi) \leq 0$.
\end{itemize}
\end{enumerate}
\end{lemma}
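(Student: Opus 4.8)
Convexity is essentially immediate from convexity of $\calB$: if $\phi_1,\phi_2\in\Phi_{\calA\to\calB}$ and $\lambda\in[0,1]$, then for every $a\in\calA$ the point $(\lambda\phi_1+(1-\lambda)\phi_2)(a)=\lambda\phi_1(a)+(1-\lambda)\phi_2(a)$ is a convex combination of the two points $\phi_1(a),\phi_2(a)\in\calB$, hence lies in $\calB$; so $\lambda\phi_1+(1-\lambda)\phi_2\in\Phi_{\calA\to\calB}$. For closedness I would write $\Phi_{\calA\to\calB}=\bigcap_{a\in\calA}\{\phi:\phi(a)\in\calB\}$, an intersection of preimages of the closed set $\calB$ under the continuous linear maps $\phi\mapsto\phi(a)$, hence closed. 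A more informative alternative: fixing a description $\calB=\{y\in\bbR^{n}:\iprod{c_t}{y}\le b_t,\ t=1,\dots,T\}$ and using the Minkowski--Weyl decomposition of $\calA$ into vertices $v$ and extreme rays $w$, the condition $\phi\in\Phi_{\calA\to\calB}$ is equivalent to the (finite, but possibly exponential) linear system $\iprod{c_t}{\phi(v)}\le b_t$ for all vertices $v$ and all $t$, together with $\iprod{c_t}{\phi(w)}\le 0$ for all extreme rays $w$ and all $t$; each such constraint is linear in the entries $\phi_{ip}$, so $\Phi_{\calA\to\calB}$ is a closed convex polyhedron. (I treat the slightly more general polyhedral case rather than only bounded polytopes, since this is what Corollary~\ref{cor:PhiYX} needs, where $\calA,\calB$ are the cones $\calD_Y,\calD_X$.)

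For the separation oracle, the key point is that membership in $\Phi_{\calA\to\calB}$, a priori specified by infinitely many conditions, reduces to polynomially many linear-programming queries. Indeed, with $\calB=\{y:\iprod{c_t}{y}\le b_t,\ t=1,\dots,T\}$ and $T$ polynomial, one has $\phi^*\in\Phi_{\calA\to\calB}$ if and only if $\max_{a\in\calA}\iprod{c_t}{\phi^*(a)}\le b_t$ for every $t$. Since $a\mapsto\iprod{c_t}{\phi^*(a)}=\sum_{i,p}(c_t)_i\,\phi^*_{ip}\,a_p$ is a linear functional of $a$, and $\calA$ is cut out by polynomially many linear inequalities, this is the value of a polynomial-size LP, solvable in polynomial time; the oracle solves $T$ of them. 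If $\calA=\emptyset$ (detected by LP infeasibility) the oracle accepts, since then $\Phi_{\calA\to\calB}$ is the whole operator space. If every one of the $T$ optima is $\le b_t$, accept.

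Otherwise the oracle must exhibit a separating hyperplane in operator space. If for some $t$ the LP has a finite optimum strictly larger than $b_t$, attained at $a^*\in\calA$, output the affine functional $l(\phi)=\iprod{c_t}{\phi(a^*)}-b_t=\sum_{i,p}(c_t)_i\,a^*_p\,\phi_{ip}-b_t$. Then $l(\phi^*)>0$ by choice of $a^*$, while for every $\phi\in\Phi_{\calA\to\calB}$ we have $\phi(a^*)\in\calB$, hence $l(\phi)=\iprod{c_t}{\phi(a^*)}-b_t\le 0$, as required. If instead the LP is unbounded for some $t$, take a recession direction $w$ of $\calA$ with $\iprod{c_t}{\phi^*(w)}>0$; for any $\phi\in\Phi_{\calA\to\calB}$ and any $a_0\in\calA$ the points $a_0+sw$ ($s\ge 0$) lie in $\calA$, so $\iprod{c_t}{\phi(a_0+sw)}\le b_t$ for all $s$, forcing $\iprod{c_t}{\phi(w)}\le 0$; thus the homogeneous functional $l(\phi)=\iprod{c_t}{\phi(w)}=\sum_{i,p}(c_t)_i\,w_p\,\phi_{ip}$ separates $\phi^*$ from $\Phi_{\calA\to\calB}$.

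The conceptual heart of the argument — and the only step that is not routine — is precisely this passage from ``for all $a\in\calA$'' to ``the worst $a$'': the observation that a linear functional over a polyhedron is handled by an LP, being either maximized at a vertex or witnessed as unbounded by an extreme ray, and that the maximizer $a^*$ (resp.\ recession direction $w$) produced by the LP solver translates directly into a separating hyperplane in the $\bbR^{n\times m}$ of operators because $\phi\mapsto\iprod{c_t}{\phi(a)}$ is linear in the entries of $\phi$. Corollary~\ref{cor:PhiYX} then follows by applying the lemma with $\calA=\calD_Y$ and $\calB=\calD_X$, which are defined by polynomially many linear inequalities in $k$ and $|X|$.
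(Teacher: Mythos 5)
Your proposal is correct and follows essentially the same route as the paper: convexity from convexity of $\calB$, closedness from closedness of $\calB$ (the paper uses a Cauchy-sequence argument, you use preimages of a closed set — equivalent), and a separation oracle that, for each of the polynomially many linear constraints $l$ defining $\calB$, solves the LP $\max_{a\in\calA} l(\phi^*(a))$ and converts a violating maximizer $a^*$ into a hyperplane separating $\phi^*$ from $\Phi_{\calA\to\calB}$ in operator space. Your extra case analysis for unbounded LPs (separating via a recession direction $w$ of $\calA$) and for $\calA=\emptyset$ is a worthwhile refinement the paper omits — the lemma is stated for polytopes but is applied in Corollary~\ref{cor:PhiYX} to the cones $\calD_Y$ and $\calD_X$, where the unbounded case genuinely arises.
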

\begin{proof}
If $\phi^*,\phi^{**} \in \Phi_{\calA\to\calB}$ and $\lambda \in [0,1]$, then for every $a\in \calA$,
$\phi^*(a) \in \calB$ and $\phi^{**}(a) \in \calB$. Since $\calB$ is convex, 
$\lambda \phi^*(a) + (1-\lambda) \phi^{**}(a)  \in \calB$. Hence, $(\lambda \phi^*+ (1-\lambda) \phi^{**}) (a) \in \calB$. Thus, $\Phi_{\calA\to\calB}$ is convex. If $\phi^{(k)}$ is a Cauchy sequence in
$\Phi_{\calA\to\calB}$, then there exists a limit $\phi = \lim_{k\to \infty} \phi^{(k)}$ and 
for every $a\in \calA$,  $\phi (a) = \lim_{k\to \infty} \phi^{(k)}(a) \in \calB$ (since $\calB$
is closed). Hence, $\Phi_{\calA\to\calB}$ is closed.

Let $\calL_{\calB}$ be the set of linear constraints defining $\calB$:
$$\calB= \{b\in \bbR^n: l(b) \equiv \sum_{i} l_i b_i + l_0\leq 0 \text{ for all } l\in\calL_{\calB}\}.$$
Our goal is to find ``witnesses'' $a\in \calA$ and $l\in \calL_{\calB}$ such that
$l(\phi^*(a)) > 0$. Note that such $a$ and $l$ exist if and only if $\phi^* \notin \Phi$. 
For each $l\in \calL_{\calB}$, write a linear program. The variables of the program are $a_p$,
where $a\in \bbR^m$.


\rule{0pt}{12pt}
\hrule height 0.8pt
\rule{0pt}{1pt}
\hrule height 0.4pt
\rule{0pt}{6pt}

\noindent \textbf{maximize:} 
$l (\phi(a))$

\noindent \textbf{subject to:} $a \in \calA$

\rule{0pt}{1pt}
\hrule height 0.4pt
\rule{0pt}{1pt}
\hrule height 0.8pt
\rule{0pt}{12pt}

This is a linear program solvable in polynomial-time  
since, first, the objective function is a linear function of $a$ (the objective function
is a composition of a linear functional $l$ and a linear operator $\phi$) and, second, 
the constraint $a \in \calA$
is specified by polynomially many linear inequalities.

Thus, if $\phi^*\notin\Phi$, then the oracle
gets witnesses $a^*\in \calA$ and 
$l^*\in \calL_{\calB}$, such that 
$$l^* (\phi^*(a^*)) \equiv \sum_{i}\sum_{p} l^*_i \phi^*_{ip} a_p + l_0>0.$$
The oracle returns the following (violated) linear constraint 
$$l^* (\phi(a^*)) \equiv \sum_{i}\sum_{p} l^*_i \phi_{ip} a_p +l_0\leq 0.$$ 
\end{proof}

\begin{theorem}\label{thrm:exists}
Let $X$ be a set of points, and $Y$ be a $k$-point subset of $X$. For every set of nonnegative weights $\alpha_{ij}$ on $X\times X$, there exists  a metric extension operator $\phi: \calD_Y \to \calD_X$
with distortion $O(\log k/\log\log k)$.
\end{theorem}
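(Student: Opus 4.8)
The plan is to build the desired operator $\phi$ as a convex combination of the ``pullback'' operators induced by $0$-extensions, and to show a good such combination exists by an LP-duality (minimax) argument fed by Theorem~\ref{thm:FHRT}. First, for every $0$-extension $f\colon X\to Y$ (a map with $f(p)=p$ for all $p\in Y$), define the linear operator $\phi^f\colon\calD_Y\to\calD_X$ by $\phi^f(d_Y)(i,j)=d_Y(f(i),f(j))$; in matrix form, $\phi^f_{ipjq}=1$ when $f(i)=p$ and $f(j)=q$, and $\phi^f_{ipjq}=0$ otherwise. Its coefficients are nonnegative; the map $i,j\mapsto d_Y(f(i),f(j))$ is a metric whenever $d_Y$ is; and $\phi^f(d_Y)(p,q)=d_Y(p,q)$ for $p,q\in Y$ because $f$ fixes $Y$. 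Hence $\phi^f\in\Phi_{Y\to X}$. Let $\Phi^0=\conv\set{\phi^f : f \text{ a $0$-extension } X\to Y}$. As $X$ and $Y$ are finite there are finitely many such $f$, so $\Phi^0$ is a compact convex polytope, and $\Phi^0\subseteq\Phi_{Y\to X}$ since $\Phi_{Y\to X}$ is convex (Corollary~\ref{cor:PhiYX}).

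Next, set $C=\set{d_Y\in\calD_Y : \minext_{Y\to X}(d_Y,\alpha)\le 1}$, which is convex, being a sublevel set of the convex function $d_Y\mapsto\minext_{Y\to X}(d_Y,\alpha)$. The functional $F(\phi,d_Y)=\alpha(\phi(d_Y))=\sum_{i,j\in X}\alpha_{ij}\,\phi(d_Y)(i,j)$ is bilinear in $(\phi,d_Y)$, hence continuous and linear in each variable separately. Since $\Phi^0$ is compact and convex and $C$ is convex, Sion's minimax theorem applies and gives
\[ \min_{\phi\in\Phi^0}\ \sup_{d_Y\in C} F(\phi,d_Y)\ =\ \sup_{d_Y\in C}\ \min_{\phi\in\Phi^0} F(\phi,d_Y). \]
To bound the right-hand side, fix $d_Y\in C$: by Theorem~\ref{thm:FHRT} there is a $0$-extension $f$ with $\sum_{i,j\in X}\alpha_{ij}\,d_Y(f(i),f(j))\le O(\log k/\log\log k)\cdot\minext_{Y\to X}(d_Y,\alpha)\le O(\log k/\log\log k)$, and the left side here is exactly $F(\phi^f,d_Y)$ with $\phi^f\in\Phi^0$. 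Thus $\min_{\phi\in\Phi^0}F(\phi,d_Y)\le O(\log k/\log\log k)$, and the supremum over $d_Y\in C$ keeps this bound.

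Combining the minimax identity with the bound just derived, there is $\phi^*\in\Phi^0\subseteq\Phi_{Y\to X}$ with $\alpha(\phi^*(d_Y))\le O(\log k/\log\log k)$ for every $d_Y\in C$. For an arbitrary metric $d_Y\in\calD_Y$ with $\minext_{Y\to X}(d_Y,\alpha)>0$, applying this to $d_Y/\minext_{Y\to X}(d_Y,\alpha)\in C$ and using linearity of $\phi^*$ and positive homogeneity of $\minext$ gives $\alpha(\phi^*(d_Y))\le O(\log k/\log\log k)\cdot\minext_{Y\to X}(d_Y,\alpha)$; and if $\minext_{Y\to X}(d_Y,\alpha)=0$ then $t\,d_Y\in C$ for all $t>0$, forcing $\alpha(\phi^*(d_Y))=0$. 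Since $\phi^*\in\Phi_{Y\to X}$ it also maps every metric on $Y$ to an extension and has nonnegative coefficients, so $\phi^*$ is a metric extension operator of distortion $O(\log k/\log\log k)$.

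The main obstacle is making the minimax exchange legitimate: $\Phi_{Y\to X}$ is itself unbounded (one can inflate the images of the non-terminal points while keeping all triangle inequalities), so no minimax theorem applies to it directly. The point of passing to the polytope $\Phi^0$ is that it is compact yet still contains, for each individual $d_Y$, a pullback operator $\phi^f$ that is good by Theorem~\ref{thm:FHRT} --- which is precisely what the ``$\sup\min$'' side of the minimax needs. A secondary nuisance is the degenerate set of metrics $d_Y$ with $\minext_{Y\to X}(d_Y,\alpha)=0$, on which the normalization $d_Y/\minext_{Y\to X}(d_Y,\alpha)$ is unavailable; this is handled by the homogeneity argument above.
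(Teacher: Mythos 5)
Your proof is correct and follows essentially the same route as the paper's: a minimax exchange whose $\sup\min$ side is bounded by feeding Theorem~\ref{thm:FHRT} the pullback operator $\phi^f(d_Y)(i,j)=d_Y(f(i),f(j))$ of a $0$-extension $f$. Your restriction of the minimization to the compact polytope $\Phi^0$ of convex combinations of such pullbacks (rather than all of $\Phi_{Y\to X}$) and your explicit treatment of metrics with $\minext_{Y\to X}(d_Y,\alpha)=0$ are careful refinements of the same argument, not a different approach.
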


\begin{proof}
Fix a set of weights $\alpha_{ij}$. Let 
$\widetilde{\calD}_Y = \{d_Y\in \calD: \minext_{Y \to X} (d_Y, \alpha) \leq 1\}$.
We shall show that there exists $\phi \in \Phi_{Y\to X}$, such that for every $d_Y \in \widetilde{\calD}_Y$
$$\alpha (\phi(d_Y)) \leq O\left(\frac{\log k}{\log \log k}\right),$$
then by the linearity of $\phi$, for every $d_Y\in \calD_Y$
\begin{equation}\label{eq:lesslog}
\alpha (\phi(d_Y))  \leq O\left(\frac{\log k}{\log\log k}\right) \minext_{Y\to X} (d_Y,\alpha).
\end{equation}

The set $\widetilde{\calD}_Y$ is convex and compact, since the function $\minext_{Y \to X} (d_Y, \alpha)$ is a convex function of the first variable. 
The set $\Phi_{Y\to X}$ is convex and closed. Hence, by the \citeasnoun{Neumann} minimax theorem,
$$
\min_{\phi\in \Phi_{Y\to X}} \max_{d_Y\in \widetilde{\calD}_Y} 
\sum_{i,j\in X} \alpha_{ij}\cdot \phi(d_Y)(i,j) = 
\max_{d_Y\in \widetilde{\calD}_Y} \min_{\phi\in \Phi_{Y\to X}} 
\sum_{i,j\in X} \alpha_{ij}\cdot \phi(d_Y)(i,j).$$ 
We will show that the right hand side is bounded by $O(\log k /\log\log k)$, and therefore
there exists $\phi \in \Phi_{Y\to X}$ satisfying (\ref{eq:lesslog}).
Consider $d^*_Y\in \widetilde{\calD}_Y$ for which the maximum above is attained.
By Theorem~\ref{thm:FHRT} (FHRT 0-extension Theorem), there exists 
a 0-extension $f:X\to Y$ such
that $f(p) = p$ for every $p\in Y$, and 
$$\sum_{i,j\in X} \alpha_{ij}\cdot d^*_Y(f(i), f(j))
 \leq  O\left(\frac{\log k}{\log\log k}\right) \minext_{Y\to X} (d^*_Y,\alpha)
\leq O\left(\frac{\log k}{\log\log k}\right).$$ 
Define $\phi^* (d_Y) (i,j) =  d_Y(f(i), f(j))$. Verify that $\phi^* (d_Y)$
is a metric for every $d_Y \in \calD_Y$:
\begin{itemize}
\item $\phi^* (d_Y) (i,j) = d_Y(f(i), f(j)) \geq 0;$
\item $\phi^* (d_Y) (i,j) + \phi^* (d_Y) (j,k) - \phi^* (d_Y) (i,k) = 
d_Y(f(i), f(j)) + d_Y(f(j), f(k)) - d_Y(f(i), f(k)) \geq 0$.
\end{itemize}
Then, for $p,q \in Y$, $\phi^* (d_Y)(p,q) = d_Y(f(p),f(q)) = d_Y(p,q)$, hence 
$\phi^* (d_Y)$ is an extension of $d_Y$. All coefficients $\phi^*_{ipjq}$ of  
$\phi^*$ (in the matrix representation (\ref{eq:z})) equal 0 or 1.
Thus, $\phi^* \in \Phi_{Y\to X}$. Now,
$$\sum_{i,j\in X} \alpha_{ij}\cdot \phi^*(d^*_Y)(i,j) = 
\sum_{i,j\in X} \alpha_{ij}\cdot d^*_Y(f(i),f(j)) \leq  
O\left(\frac{\log k}{\log\log k}\right).$$
This finishes the the proof, that there exists $\phi \in \Phi_{Y\to X}$ satisfying the upper bound
(\ref{eq:lesslog}).
\end{proof}

\begin{theorem}
Let $X$, $Y$, $k$, and $\alpha$ be as in Theorem~\ref{thrm:exists}. Assume further, that
for the given $\alpha$ and every metric $d_Y\in \calD_Y$, there exists a $0$-extension $f:X\to Y$
such that 
$$\sum_{i,j\in X} \alpha_{ij}\cdot d_Y(f(i), f(j)) \leq Q\times \minext_{Y\to X}(d_Y, \alpha).$$ 
Then there exists a metric extension operator with distortion $Q$. Particularly, if the support of the
weights $\alpha_{ij}$ is a graph with an excluded minor $K_{r,r}$, then $Q=O(r^2)$.
If the graph $G$ has genus $g$, then $Q=O(\log g)$.
\end{theorem}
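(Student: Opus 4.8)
The plan is to repeat the argument in the proof of Theorem~\ref{thrm:exists} almost verbatim, substituting the abstract quantity $Q$ for the $O(\log k/\log\log k)$ guarantee that the FHRT theorem supplied there. First I would fix the weights $\alpha_{ij}$ and set $\widetilde{\calD}_Y = \{d_Y\in\calD_Y : \minext_{Y\to X}(d_Y,\alpha)\le 1\}$; this set is convex and compact because $\minext_{Y\to X}(\cdot,\alpha)$ is a convex function of its first argument, and $\Phi_{Y\to X}$ is convex and closed by Corollary~\ref{cor:PhiYX}. Applying the \citeasnoun{Neumann} minimax theorem to the bilinear form $(\phi,d_Y)\mapsto \sum_{i,j\in X}\alpha_{ij}\,\phi(d_Y)(i,j)$ on $\Phi_{Y\to X}\times\widetilde{\calD}_Y$ reduces the task to bounding $\max_{d_Y\in\widetilde{\calD}_Y}\,\min_{\phi\in\Phi_{Y\to X}}\,\sum_{i,j\in X}\alpha_{ij}\,\phi(d_Y)(i,j)$ by $Q$.

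Second, for the metric $d^*_Y$ attaining this outer maximum I would invoke the hypothesis of the theorem: there is a $0$-extension $f:X\to Y$ with $f(p)=p$ for all $p\in Y$ and $\sum_{i,j\in X}\alpha_{ij}\,d^*_Y(f(i),f(j)) \le Q\cdot\minext_{Y\to X}(d^*_Y,\alpha)\le Q$. Exactly as before, the operator $\phi^*(d_Y)(i,j) := d_Y(f(i),f(j))$ lies in $\Phi_{Y\to X}$: it has $0/1$ coefficients in the representation~(\ref{eq:z}), it sends every metric on $Y$ to a metric on $X$ (nonnegativity and the triangle inequality are inherited pointwise from $d_Y$), and it agrees with $d_Y$ on $Y$ since $f$ fixes $Y$ pointwise. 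Substituting $\phi^*$ into the inner minimum gives $\min_{\phi\in\Phi_{Y\to X}}\sum_{i,j}\alpha_{ij}\,\phi(d^*_Y)(i,j) \le \sum_{i,j}\alpha_{ij}\,d^*_Y(f(i),f(j)) \le Q$, so the minimax value is at most $Q$, and by the minimax identity there is a single $\phi\in\Phi_{Y\to X}$ with $\alpha(\phi(d_Y))\le Q$ for every $d_Y\in\widetilde{\calD}_Y$; homogeneity of $\phi$ and of $\minext_{Y\to X}(\cdot,\alpha)$ then upgrades this to $\alpha(\phi(d_Y))\le Q\cdot\minext_{Y\to X}(d_Y,\alpha)$ for all $d_Y\in\calD_Y$, i.e.\ $\phi$ is a distortion-$Q$ metric extension operator.

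Finally, for the two concrete consequences I would simply feed into this argument the known bounds on the integrality gap of the $0$-extension LP for structured support graphs: when the support of $\alpha$ excludes $K_{r,r}$ as a minor the optimal $0$-extension costs at most $O(r^2)\cdot\minext_{Y\to X}(d_Y,\alpha)$, and when the support has genus $g$ the factor is $O(\log g)$; these play exactly the role that Theorem~\ref{thm:FHRT} played above, and plugging the corresponding $Q$ into the construction yields the claimed operators. I do not expect any genuine obstacle: the statement is an abstraction of the existence proof already carried out, and the only non-self-contained inputs are those two external $0$-extension estimates for minor-closed and bounded-genus families, which I would cite rather than reprove.
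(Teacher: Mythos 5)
Your proposal is correct and is exactly the paper's intended argument: the paper's proof consists of the single remark that the argument of Theorem~\ref{thrm:exists} goes through verbatim with the hypothesized $0$-extension bound $Q$ replacing the FHRT guarantee, citing the $O(r^2)$ bound for $K_{r,r}$-minor-free graphs and the $O(\log g)$ bound for genus-$g$ graphs for the two special cases. No gaps.
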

The proof of this theorem is exactly the same as the proof of Theorem~\ref{thrm:exists}.
For graphs with an excluded minor we use a result of \citeasnoun{CKR} (with 
improvements by~\citeasnoun{FT}). For graphs of genus $g$, we use a result of
\citeasnoun{LS}.

\begin{theorem}\label{thrm:alg}
There exists a polynomial time algorithm that given a set of points $X$, a $k$-point subset $Y\subset X$, and a set of positive weights $\alpha_{ij}$, finds a metric extension operator $\phi:\calD_Y\to\calD_X$ with 
the smallest possible distortion $Q$.
\end{theorem}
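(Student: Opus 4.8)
The plan is to use the ellipsoid method with the separation oracle supplied by Corollary~\ref{cor:PhiYX}. The key observation is that finding a metric extension operator of minimal distortion is a convex optimization problem: the feasible region $\Phi_{Y\to X}$ is a closed convex set (Corollary~\ref{cor:PhiYX}, part 1) with a polynomial-time separation oracle (part 2), and the distortion $Q(\phi) = \max_{d_Y\in\widetilde{\calD}_Y}\alpha(\phi(d_Y))$, where $\widetilde{\calD}_Y = \{d_Y\in\calD_Y : \minext_{Y\to X}(d_Y,\alpha)\le 1\}$, is a convex function of $\phi$ (it is a pointwise maximum of linear functions of the matrix $\phi_{ipjq}$). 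So we are minimizing a convex function over a convex set, and it suffices to exhibit (i) a separation oracle for the sublevel sets $\{\phi\in\Phi_{Y\to X} : Q(\phi)\le t\}$, and (ii) a priori bounds (an outer radius, an inner ball for feasible sublevel sets when $t$ is slightly above the optimum, and a bound on bit complexity) so that the ellipsoid method runs in polynomial time and returns the optimal $Q$.

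First I would handle the evaluation/separation of $Q(\phi)$. Given a candidate $\phi^*$ with nonnegative coefficients that lies in $\Phi_{Y\to X}$ (checked by the oracle of Corollary~\ref{cor:PhiYX}), the quantity $Q(\phi^*)$ is the optimum of $\max\{\alpha(\phi^*(d_Y)) : d_Y\in\calD_Y,\ \minext_{Y\to X}(d_Y,\alpha)\le 1\}$. Since $\alpha(\phi^*(d_Y))$ is linear in $d_Y$, and the feasible set is a polytope — $\calD_Y$ is cut out by polynomially many triangle inequalities, and the constraint $\minext_{Y\to X}(d_Y,\alpha)\le 1$ can be written via the polynomially many variables $d_X$ of the minimum-extension LP together with $d_X|_Y = d_Y$ — this is a polynomial-size linear program. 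Solving it gives the value $Q(\phi^*)$ and, at the optimum, a witness metric $d_Y^\#$. If $Q(\phi^*) > t$, the inequality $\alpha(\phi(d_Y^\#))\le t$, which is linear in the entries $\phi_{ipjq}$, is violated by $\phi^*$ but satisfied by every $\phi$ with $Q(\phi)\le t$; this is the desired separating hyperplane. Combining this with the separation oracle for $\Phi_{Y\to X}$ from Lemma~\ref{lem:PhiAB} gives a polynomial-time separation oracle for $\{\phi\in\Phi_{Y\to X} : Q(\phi)\le t\}$.

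Given the oracle, I would binary-search on $t$: for each $t$, run the ellipsoid method to decide whether $\{\phi\in\Phi_{Y\to X} : Q(\phi)\le t\}$ is (nearly) nonempty. The main obstacle — and the step requiring the most care — is the standard ellipsoid-method bookkeeping: one must pin down an explicit ball of radius $R$ containing the feasible set (e.g.\ each $\phi_{ipjq}\in[0,1]$ essentially follows because the $0$-extension operators used in Theorem~\ref{thrm:exists} have $0/1$ entries and are extreme, and one can argue an optimal $\phi$ may be taken within a bounded box), an explicit lower bound on the volume of the sublevel set when $t$ exceeds the true optimum $Q^*$ by a small margin (so that ellipsoid terminates with a feasible point), and a bound on the encoding length of the vertices of all the LPs involved, so that $Q^*$ is a rational of controlled bit-size and the binary search over $t$ terminates after polynomially many steps and can be rounded to the exact optimum. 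These are routine in principle but must be checked; modulo this, the algorithm outputs a $\phi$ achieving distortion $Q^*$, proving the theorem.
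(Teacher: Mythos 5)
Your proposal is correct and follows essentially the same route as the paper: formulate the problem as a convex (in fact linear) program over $\Phi_{Y\to X}$, use the separation oracle of Corollary~\ref{cor:PhiYX} for membership in $\Phi_{Y\to X}$, and separate the distortion constraints by solving an LP over extensions $d_X\in\calD_X$ (so that $\minext_{Y\to X}(d_Y,\alpha)$ is handled by lifting to the extension variables). The only cosmetic differences are that the paper keeps $Q$ as a decision variable in a single program rather than binary-searching on $t$, and that you spell out the ellipsoid-method bookkeeping that the paper leaves implicit.
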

\begin{proof}
In the algorithm, we represent the linear operator $\phi$ as a matrix $\phi_{ipjq}$ (see (\ref{eq:z})). To find optimal $\phi$, we write a convex program with variables $Q$ and $\phi_{ipjq}$:

\rule{0pt}{12pt}
\hrule height 0.8pt
\rule{0pt}{1pt}
\hrule height 0.4pt
\rule{0pt}{6pt}

\noindent \textbf{minimize:} $Q$

\noindent \textbf{subject to:}
\begin{align}
\alpha(\phi(d_Y)) &\leq Q\times \minext_{Y\to X}(d_Y,\alpha),& \text{ for all } d_Y \in \calD_Y\label{constr:1}\\ 
\phi &\in \Phi_{Y\to X}&\label{constr:2}
\end{align}
\rule{0pt}{1pt}
\hrule height 0.4pt
\rule{0pt}{1pt}
\hrule height 0.8pt
\rule{0pt}{12pt}

The convex problem exactly captures the definition of the extension operator. Thus the solution 
of the program corresponds to the optimal $Q$-distortion extension operator. However, a priori, it
is not clear if this convex program can be solved in polynomial-time. It has exponentially many linear constraints of type (\ref{constr:1}) and one convex non-linear constraint  (\ref{constr:2}). We already know (see Corollary~\ref{cor:PhiYX}) that there exists a separation oracle for $\phi \in \Phi_{Y\to X}$.
We now give a separation oracle for constraints~(\ref{constr:1}).

\medskip

\noindent \textbf{Separation oracle for~(\ref{constr:1}).} The goal of the oracle is 
given a linear operator $\phi^*: d_Y \mapsto \sum_{p,q}\phi^*_{ipjq} d_Y(p,q)$ and a real number $Q^*$ 
find a metric $d^*_Y \in \calD_Y$, such that the constraint 
\begin{equation}\label{eq:dY}
\alpha(\phi^*(d^*_Y)) \leq Q^*\times \minext_{Y\to X}(d^*_Y,\alpha)
\end{equation}
is violated. We write a linear program on $d_Y$. However, 
instead of looking for a metric $d_Y\in \calD_Y$ such that constraint~(\ref{eq:dY}) is violated,
we shall look for a metric $d_X\in \calD_X$, an arbitrary metric extension of $d_Y$ to $X$, such that
$$\alpha(\phi^*(d_Y))\equiv \sum_{i,j\in X}\alpha_{ij} \cdot \phi^*(d_Y)(i,j) > Q^*\times \sum_{i,j\in X}\alpha_{ij} d_X(i,j).$$

The linear program for finding $d_X$ is given below.

\rule{0pt}{12pt}
\hrule height 0.8pt
\rule{0pt}{1pt}
\hrule height 0.4pt
\rule{0pt}{6pt}

\noindent \textbf{maximize:} 
$$\sum_{i,j\in X}\sum_{p,q \in Y}\alpha_{ij} \cdot \phi^*_{ipjq} d_X(p,q) - Q^*\times \sum_{i,j\in X}\alpha_{ij}d_X(i,j)$$

\noindent \textbf{subject to:} $d_X\in \calD_X$

\rule{0pt}{1pt}
\hrule height 0.4pt
\rule{0pt}{1pt}
\hrule height 0.8pt
\rule{0pt}{12pt}

\noindent If the maximum is greater than 0 for some $d^*_X$, then constraint~(\ref{eq:dY})
is violated for $d^*_Y={d^*_{X}}|_Y$ (the restriction of $d^*_X$ to $Y$), because
$$\minext_{Y\to X}(d^*_Y,\alpha) \leq \sum_{i,j\in X} \alpha_{ij} d^*_X(i,j).$$
If the maximum is 0 or negative, then all constraints~(\ref{constr:1}) are satisfied, simply because
$$\minext_{Y\to X}(d^*_Y,\alpha) = \min_{d_X:d_X\text{ is extension of  } d^*_Y}
\sum_{i,j\in X} \alpha_{ij} d_X(i,j).$$
\end{proof}

\section{Lipschitz Extendability}\label{sec:lip-extend}
In this section, we present exact bounds on the quality 
of cut and metric sparsifiers in terms of \textit{Lipschitz extendability constants}. We show that there exist
cut sparsifiers of quality $e_k(\ell_1, \ell_1)$ and
metric sparsifiers of quality  $e_k(\infty, \ell_\infty\oplus_1 \dots\oplus_1\ell_{\infty})$,
where $e_k(\ell_1, \ell_1)$ and $e_k(\infty, \ell_\infty\oplus_1 \dots\oplus_1\ell_{\infty})$
are the Lipschitz extendability constants (see below for the definitions).
We prove that these bounds are tight. 
Then we obtain a lower bound of $\Omega(\sqrt{\log k/\log\log k})$ for the quality
of the metric sparsifiers by proving a lower bound on $e_k(\infty, \ell_\infty\oplus_1 \dots\oplus_1\ell_{\infty})$.
In the first preprint of our paper, we also proved
the bound of $\Omega(\sqrt[4]{\log k/\log\log k})$ on $e_k(\ell_1,\ell_1)$.
After the preprint appeared on arXiv.org, Johnson and Schechtman notified us that a lower
bound of $\Omega(\sqrt{\log k}/\log\log k)$ on $e_k(\ell_1,\ell_1)$ follows from their joint 
work with Figiel~\cite{FJS}. With their permission, we present the proof of this lower bound 
in Section~\ref{sec:improved-l1} of the Appendix. This result implies a lower bound of $\Omega(\sqrt{\log k} / \log\log k)$ 
on the quality of cut sparsifiers.

On the positive side, we show that if 
a certain open problem in functional analysis posed 
by~\citeasnoun{Ball}
(see also \citeasnoun{LN}, and \citeasnoun{Randrianantoanina})
has a positive answer then $e_k(\ell_1, \ell_1) \leq 
\tilde O(\sqrt{\log k})$; and therefore there exist
$\tilde O(\sqrt{\log k})$-quality cut sparsifiers.
This is both an indication that the current upper bound
of $O(\log k/\log\log k)$ might not be optimal and
that improving lower bounds beyond of $\tilde O(\sqrt{\log k})$
will require solving a long standing open problem (negatively).
\begin{question}[~\citeasnoun{Ball}; see also \citeasnoun{LN} and \citeasnoun{Randrianantoanina}]
\label{qst:ball}
Is it true that $e_k(\ell_2, \ell_1)$ is bounded by a constant that does not depend on $k$?
\end{question}

Given two metric spaces $(X, d_X)$ and $(Y, d_Y)$, the Lipschitz extendability
constant $e_k(X, Y)$ is the infimum over all constants $K$ such that for every 
$k$ point subset $Z$ of $X$, every Lipschitz map $f:Z\to Y$ can be extended to
a map $\tilde f: X\to Y$ with $\|\tilde f\|_{Lip} \leq K \|f\|_{Lip}$.
We denote the supremum of $e_k(X,Y)$ over all separable metric spaces $X$ by
$e_k(\infty, Y)$. We refer the reader to \citeasnoun{LN}
for a background on the Lipschitz extension problem (see also
\citeasnoun{Kirszbraun}, \citeasnoun{McShane},
\citeasnoun{MarcusPisier}, \citeasnoun{JL},
\citeasnoun{Ball}, \citeasnoun{MendelNaor},
\citeasnoun{NPSS}). 
Throughout this section, $\ell_1$, $\ell_2$ and $\ell_\infty$ 
denote finite dimensional spaces of arbitrarily large dimension. 

In Section~\ref{subsec:qual-exten}, we establish the connection between
the quality of vertex sparsifiers and extendability constants. 
In Section~\ref{subsec:proj-const}, we prove lower bounds
on extendability constants $e_k(\infty,\ell_1)$ and $e_k(\ell_1, \ell_1)$,
which imply lower bounds on the quality of metric and cut
sparsifiers respectively. Finally, in Section~\ref{subsec:cond-upper-bound},
we show that if Question~\ref{qst:ball} (the open problem of Ball)
has a positive answer then there exist $\tilde O(\sqrt{\log k})$-quality cut sparsifiers.

\subsection{Quality of Sparsifiers and Extendability Constants}
\label{subsec:qual-exten}
Let $Q_k^{cut}$ be the minimum over all $Q$ such that there exists a $Q$-quality
\textit{cut} sparsifier for every graph $G=(V,\alpha)$ and every
subset $U\subset V$ of size $k$.
Similarly, let $Q_k^{metric}$ be the minimum over all $Q$ such that there exists
a $Q$-quality metric sparsifier for every graph $G=(V,\alpha)$
and every subset $U\subset V$ of size $k$.

\begin{theorem} There exist cut sparsifiers of quality $e_k(\ell_1,\ell_1)$ for subsets of size $k$. Moreover, this bound is tight. That is,
$$Q_k^{cut} = e_k(\ell_1,\ell_1).$$
\label{thm:extendcut}
\end{theorem}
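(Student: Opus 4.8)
The plan is to prove the identity $Q_k^{cut} = e_k(\ell_1,\ell_1)$ by establishing two inequalities, each of which translates a cut-sparsifier statement into a Lipschitz-extension statement and back. The key dictionary is the classical identification of cut metrics with $\ell_1$: a weighted sum of cut (semi)metrics on a set $Z$ is exactly an $\ell_1$-embeddable metric, and conversely every finite $\ell_1$ metric is a nonnegative combination of cut metrics. So a map into $\ell_1$ is, up to this correspondence, a ``fractional collection of cuts,'' and the Lipschitz norm of the map corresponds to how the cut sizes compare between the two weightings. The first thing I would do is set up this correspondence carefully, being explicit that $\ell_1$ here means finite-dimensional of arbitrarily large dimension (as the paper stipulates), and recall that for a graph $G=(V,\alpha)$ the minimum cut separating $S$ from $U\setminus S$, as a function of the cut metric $d_S$ on $U$, extends (by LP integrality of mincut, noted in the overview) to $\minext_{U\to V}(\cdot,\alpha)$ on general metrics. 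Then a $Q$-quality cut sparsifier is precisely a $Q$-quality metric sparsifier restricted to the cone of $\ell_1$ (cut) metrics.

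For the direction $Q_k^{cut} \le e_k(\ell_1,\ell_1)$: given $G=(V,\alpha)$ and $U\subset V$ with $|U|=k$, I would build an auxiliary finite metric space $X\subset \ell_1$ on the vertex set $V$ by taking an optimal $\ell_1$ representation of the ``mincut'' structure — concretely, realize $V$ with the metric induced by the weights $\alpha$ in a way that makes $\minext$ visible, restrict to the $k$-point subset $Z$ corresponding to $U$, and let $f:Z\to\ell_1$ be the inclusion. Extending $f$ to $\tilde f:X\to\ell_1$ with Lipschitz norm $\le e_k(\ell_1,\ell_1)$ and then pushing the cut decomposition of $\tilde f$'s image back through the $\ell_1$/cut correspondence yields weights $\beta$ on $U$; the Lipschitz bound controls the sparsifier quality from above, while the fact that $\tilde f$ extends $f$ (i.e. agrees on $U$) gives the lower bound $\minext \le \beta$. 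The cleanest way to organize this is probably to use the functional/duality language already developed (Lemma~\ref{lem:MetricIsFlow} and the surrounding apparatus) so that ``$Q$-quality cut sparsifier'' is literally ``$Q$-quality $(\ell_1,\ell_1)$ metric sparsifier,'' and then the content is exactly: optimal cut sparsifiers correspond to optimal norm-$Q$ simultaneous extensions of the identity on $k$-point subsets of $\ell_1$ into $\ell_1$.

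For the reverse direction $e_k(\ell_1,\ell_1) \le Q_k^{cut}$: given a $k$-point subset $Z$ of (finite-dimensional) $\ell_1$ and a Lipschitz map $f:Z\to\ell_1$, I would encode $f$ via the pair of $\ell_1$ metrics — the source metric on $Z$ and the pulled-back metric $f^*$ — each written as a nonnegative combination of cut metrics on $Z$, and then manufacture a graph $G=(V,\alpha)$ whose terminal set $U=Z$ has the property that $\minext_{U\to V}$ of a cut metric $d_S$ equals the $f$-image cut size. Applying a $Q_k^{cut}$-quality cut sparsifier to this $G$ produces weights $\beta$ on $U$, and re-expanding $\beta$ as a cut decomposition and then as an $\ell_1$ metric gives the extension $\tilde f$ of $f$ with $\|\tilde f\|_{Lip}\le Q_k^{cut}\|f\|_{Lip}$; a limiting/compactness argument handles arbitrary (not necessarily finite) separable domains, and the infimum definition of $e_k$ lets us conclude. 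The main obstacle — and the step I would spend the most care on — is the construction of the auxiliary graph $G$ realizing a prescribed pair of $\ell_1$ metrics as (source metric, $\minext$ of cut metrics): one must check that the ``inner'' vertices $V\setminus U$ and weights $\alpha$ can be chosen so that for \emph{every} cut $(S,U\setminus S)$ the minimum cut in $G$ separating $S$ from $U\setminus S$ reproduces exactly the target value, uniformly in $S$. I expect this to follow from taking a disjoint union of paths/stars corresponding to the coordinates of an $\ell_1$ embedding (each coordinate contributing a threshold cut), with $\alpha$-weights read off from the $\ell_1$ coefficients, but verifying the min-cut values exactly (rather than up to constants) is the crux, since Theorem~\ref{thm:extendcut} asserts an equality, not just an $O(\cdot)$ bound.
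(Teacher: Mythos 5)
Your dictionary between cut metrics and $\ell_1$ is the right starting point, but both directions as sketched have genuine gaps, and in each case the missing ingredient is a minimax argument that your outline does not supply. For $Q_k^{cut}\le e_k(\ell_1,\ell_1)$: a single Lipschitz extension of a single map cannot directly yield a functional $\beta$ satisfying the two-sided bound $\minext_{U\to V}(\delta_S,\alpha)\le\beta(\delta_S)\le Q\,\minext_{U\to V}(\delta_S,\alpha)$ \emph{simultaneously for all cuts} $S$. The paper works in the cone $\mathcal C$ generated by the graph of $d_U\mapsto\minext_{U\to V}(d_U,\alpha)$, shows for each \emph{pair} of points $(d_1,x_1),(d_2,x_2)\in\mathcal C$ that some $\beta$ separates them correctly (this is where extendability enters: assuming no coordinate pair works, Carath\'eodory decompositions of $d_1$ and $d_2$ into cut metrics produce maps $f_1$ on $U$ and $f_2$ on $V$, the map $f_1f_2^{-1}$ is strictly contracting, and its $Q$-Lipschitz extension yields extensions of the cut metrics $\delta_S$ whose costs contradict $x_1=Qx_2$), and then applies von Neumann's minimax theorem \emph{twice} to pass from pairs to a single $\beta$ valid on all of $\mathcal C$. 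Your ``let $f$ be the inclusion and push the cut decomposition of $\tilde f$'s image back'' skips all of this; note also that the map being extended is not an inclusion but a comparison map between two different $\ell_1$ representations, one built from the minimum cuts in $G$.

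For $e_k(\ell_1,\ell_1)\le Q_k^{cut}$ your logic is inverted in a way that leaves the extension undefined off the terminal set. The vertex set $V$ of the graph is not something you get to manufacture: it is the (arbitrary finite) superset of $Z$ in $\ell_1$ to which $\tilde f$ must extend, and the values $\tilde f(v)$ for $v\in V\setminus Z$ come from the \emph{minimum cuts} $S_i^*$ in $(V,\alpha)$ extending the coordinate cuts $S_i=\{p: f_i(p)=0\}$ of the image --- not from a cut decomposition of $\beta$, which is only a weighting on pairs of terminals and carries no information about new points. The sparsifier is applied to $(V,\alpha)$ for \emph{every} nonnegative weighting $\alpha$, producing for each $\alpha$ an extension with $\sum\alpha_{uv}\|\tilde f(u)-\tilde f(v)\|_1\le Q\sum\alpha_{uv}\|u-v\|_1$, and Sion's minimax theorem (using convexity of the set of extensions) converts this family into one extension that is $Q$-Lipschitz pointwise. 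The step you flag as the crux --- building a gadget graph whose min-cut values exactly realize prescribed $\ell_1$ data --- is not needed and would not solve the actual problem. You are also missing the reduction of $f$ to a map into the vertices of a rectangular box (via the isometries $\psi_i$ and the $1$-Lipschitz collapses $\phi_i$), without which the coordinate-wise cut construction of $\tilde f$ does not go through.
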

\begin{proof}
Denote $Q = e_k(\ell_1,\ell_1)$.
First, we prove the existence of $Q$-quality cut sparsifiers.
We consider a graph $G = (V, \alpha)$ and a subset $U\subset V$ of size $k$. Recall that for every
cut $(S, U \setminus S)$ of $U$, the cost of the minimum cut extending $(S, U \setminus S)$
to $V$ is $\minext_{U\to V}(\delta_S, \alpha)$, where $\delta_S$ is the cut metric corresponding 
to the cut $(S, U \setminus S)$. Let $C = \{(\delta_S, \minext_{U\to V}(\delta_S, \alpha)) 
\in \calD_U \times \bbR: \delta_S \text{ is a cut metric}\}$ be the graph of the function
$\delta_S \mapsto \minext_{U\to V}(\delta_S, \alpha)$; and $\cal C$ be the 
convex cone generated by $C$ (i.e., let $\cal C$ be the cone over the convex closure of 
$C$).
Our goal is to construct a linear form $\beta$ (a cut sparsifier) with non-negative coefficients
such that  $x \leq \beta(d_U) \leq Q x$ for every $(d_U, x) \in \cal C$ and, in particular,
for every $(d_U, x) \in C$.
First we prove that for every $(d_1,x_1), (d_2, x_2) \in {\cal C}$ there exists
$\beta$ (with nonnegative coefficients) such that  
$x_1 \leq \beta(d_1)$ and $\beta(d_2) \leq Q x_2$.
Since these two inequalities are homogeneous, 
we may assume by rescaling $(d_2, x_2)$ that $Qx_2 = x_1$. We are going to show 
that for some $p$ and $q$ in $U$: $d_2(p,q) \leq d_1(p,q)$ and $d_1(p,q) \neq 0$. Then the linear 
form 
$$\beta(d_U) = \frac{x_1}{d_1(p,q)} d_U(p,q)$$
satisfies the required conditions:
$\beta(d_1) = x_1$; $\beta(d_2) = 
x_1d_2(p,q)/d_1(p,q) \leq x_1 = Qx_2$. 

Assume to the contrary that that for every $p$ and $q$,
$d_1(p,q) < d_2(p,q)$ or $d_1(p,q) = d_2(p,q) = 0$. 
Since $(d_t(p,q), x_t) \in {\cal C}$ for $t\in\{1,2\}$, by Carath\'eodory's theorem
$(d_t(p,q), x_t)$ is a convex combination of at most 
$\dim {\cal C} + 1 = \binom{k}{2} + 2$ points lying on the
extreme rays of $\cal C$. That is, there exists
a set of $m_t \leq \binom{k}{2} + 2$
positive weights $\mu_t^S$ such that  
$d_t = \sum_S \mu_t^S \delta_S$, where $\delta_S \in {\cal D}_U$ is the cut metric corresponding to the cut $(S, U \setminus S)$,
and $x_t = \sum_S \mu_t^S\minext_{U\to V}(\delta_S, \alpha)$.
We now define two maps $f_1: U \to \bbR^{m_1}$ and $f_2:V \to \bbR^{m_2}$.
Let $f_1(p) \in \bbR^{m_1}$ be a vector with one component $f_1^S(p)$ for each cut
$(S, U\setminus S)$ such that $\mu_1^S > 0$. Define $f_1^S(p) = \mu_1^S$ if $p \in S$; $f_2^S(p) = 0$, otherwise.
Similarly, let $f_2(i) \in \bbR^{m_2}$ be a vector with one component $f_2^S(i)$ for each cut
$(S, U\setminus S)$ such that $\mu_2^S > 0$. Let $(S^*, V\setminus S^*)$ be the minimum cut 
separating $S$ and $U\setminus S$ in $G$. Define $f_2^S(i)$ as follows: $f_2^S(i) = \mu_2^S$ if $i \in S^*$; $f_2^S(i) = 0$, otherwise.
%
%
Note that $\|f_1(p) - f_1(q)\|_1 = d_1(p,q)$ 
and $\|f_2(p) - f_2(q)\|_1 = d_2(p,q)$.
Consider a map  $g = f_1 f_2^{-1}$ from $f_2(U)$ to $f_1(U)$ (note that if $f_2(p) = f_2(q)$ then $d_2(p,q) = 0$,
therefore, $d_1(p,q) = 0$ and $f_1(p) = f_2(q)$; hence $g$ is well-defined).
For every $p$ and $q$ with $d_2(p,q)\neq 0$, 
$$
\|g(f_2(p)) - g(f_2(q))\|_1 = \|f_1(p) - f_1(q)\|_1 = 
d_1(p,q) < d_2(p,q) = \|f_2(p) - f_2(q)\|_1.
$$
That is, $g$ is a strictly contracting map. Therefore, there exists an extension of 
$g$ to a map $\tilde g: f_2(V) \to \bbR^{m_1}$ such that 
$$\|\tilde g(f_2(i)) - \tilde g(f_2(j))\|_1 < Q \|f_2(i) - f_2(j)\|_1
= Q d_2(i,j).$$
Denote the coordinate of $\tilde g(f_2(i))$ corresponding to the cut $(S,U\setminus S)$ 
by $\tilde g^S(f_2(i))$. 
Note that $\tilde g^S(f_2(p))/\mu_1^S = f_1^S(p)/\mu_1^S$ equals $1$ when $p\in S$ and $0$ when
$p\in U\setminus S$. Therefore, the metric $\delta^*_S(i,j) \equiv |\tilde g^S(f_2(i)) - \tilde g^S(f_2(j))|/\mu_1^S$  is an extension
of the metric $\delta_S(i,j)$ to $V$.
Hence,
$$\sum_{i,j\in V} \alpha_{ij}\delta^*_S(i,j)
\geq  \minext_{U\to V}(\delta_S, \alpha).$$
We have, 
\begin{align*}
x_1 &= \sum_{S} \mu_1^S \minext_{U\to V}(\delta_S, \alpha) \leq \sum_S \mu_1^S\sum_{i,j\in V} \alpha_{ij}\delta^*_S(i,j)
= \sum_S \sum_{i,j\in V} \alpha_{ij} |\tilde g^S(f_2(i)) - \tilde g^S(f_2(j))|\\
&= \sum_{i,j\in V} \alpha_{ij} \|\tilde g(f_2(i)) - \tilde g(f_2(j))\|_1
<\sum_{i,j\in V} Q \alpha_{ij} d_2(i,j) = Qx_2.
\end{align*}
We get a contradiction. We proved that for every $(d_1,x_1), (d_2, x_2) \in {\cal C}$ there exists
$\beta$ such that $x_1 \leq \beta(d_1)$ and $\beta(d_2) \leq Q x_2$.

Now we fix a point $(d_1, x_1) \in {\cal C}$ and consider
the set $\cal B$ of all linear functionals with nonnegative coefficients
$\beta$ such that  $x_1 \leq \beta(d_1)$. This is a convex closed set.
We just proved that for every $(d_2, x_2) \in \cal C$
there exists $\beta \in \cal B$ such that 
$Q x_2 - \beta(d_2) \geq 0$. Therefore, 
by the \citeasnoun{Neumann} minimax theorem,
there exist $\beta \in \cal B$ such that  for every 
$(d_2, x_2) \in \cal C$, $Q x_2 - \beta(d_2) \geq 0$.
Now we consider the set $\cal B'$ of all linear functionals $\beta$
with nonnegative coefficients such that  $Q x_2 - \beta(d_2) \geq 0$
for every $(d_2, x_2) \in \cal C$. Again, for every
$(d_1, x_1) \in \cal C$ there exists $\beta \in {\cal B}'$
such that  $\beta(d_1) - x_1 \geq 0$; therefore, by the minimax 
theorem there exists $\beta$ such that 
$x \leq \beta(d_U) \leq Q x$
for every $(d,x) \in \cal C$. We proved that there exists
a $Q$-quality cut sparsifier for $G$.

Now we prove that if for every graph $G = (V,\alpha)$ and a subset $U\subset V$
of size $k$ there exists a cut sparsifier of size $Q$ (for some $Q$) then $e_k(\ell_1, \ell_1) \leq Q$.
Let $U \subset \ell_1$ be a set of points of size $k$ and $f:U \to \ell_1$ be 
a 1-Lipschitz map. By a standard compactness argument (Theorem~\ref{thm:compactness}), it suffices to show how to extend $f$ to a $Q$-Lipschitz map
$\tilde f:V \to \ell_1$ for every finite set $V$: $U\subset V\subset \ell_1$. 
First, we assume that $f$ maps $U$ to the vertices of a rectangular box 
$\{0,a_1\}\times \{0,a_2\} \times \dots \{0,a_r\}$. We consider a
graph $G = (V,\alpha)$ on $V$ with nonnegative edge weights $\alpha_{ij}$.
Let $(U, \beta)$ be the optimal cut sparsifier of $G$. 
Denote $d_1(p,q) = \|p-q\|_1$ and $d_2(p,q) = \|f(p)-f(q)\|_1$.
Since $f$ is 1-Lipschitz, $d_1(p,q) \geq d_2(p,q)$.

Let $S_i = \{p \in U: f_i(p) = 0\}$  (for $1\leq i\leq r$). Let $S_i^*$ be the minimum cut separating
$S_i$ and $U \setminus S_i$ in $G$. By the definition of the cut sparsifier, the cost 
of this cut is at most $\beta(\delta_{S_i})$. Define an extension $\tilde f$ of $f$ by
$\tilde f_i(v) = 0$ if $v \in S_i^*$ and $\tilde f_i(v) = a_i$ otherwise.
Clearly, $\tilde f$ is an extension of $f$. We compute the ``cost'' of $\tilde f$:
$$
\sum_{u,v\in V}\alpha_{uv} \|\tilde f(u) - \tilde f(v)\|_1 
= 
\sum_{i=1}^r \sum_{u,v\in V}\alpha_{uv} |\tilde f_i(u) - \tilde f_i(v)|\leq
\sum_{i=1}^r \beta(a_i \delta_{S_i}) = \beta(d_2) \leq \beta(d_1).$$
(in the last inequality we use that $d_1(p,q) \geq d_2(p,q)$ for $p,q \in U$
and that coefficients of $\beta$ are nonnegative). 
On the other hand, we have
$$\sum_{u,v\in V}\alpha_{uv} \|u - v\|_1 \geq \minext_{U\to V}(d_1, \alpha) 
\geq \beta(d_1)/Q.$$
We therefore showed that for every set of nonnegative weights $\alpha$
there exists an extension $\tilde f$ of $f$ such that 
\begin{equation}
\sum_{u,v\in V}\alpha_{uv} \|\tilde f(u) - \tilde f(v)\|_1 
\leq Q\sum_{u,v\in V}\alpha_{uv} \|u - v\|_1.
\label{inq:weightedlip}
\end{equation}
Note that the set of all extensions of $f$ is a closed convex set;
and $\|f(u) - f(v)\|_1$ is a convex function of $f$:
$$\|(f_1+f_2)(u) - (f_1 + f_2)(v)\|_1 \leq \|f_1(u) - f_1(v)\|_1 +
\|f_2(u) - f_2(v)\|_1.$$
Therefore, by the \citeasnoun{Sion} minimax theorem there exists an extension $\tilde f$
such that  inequality~(\ref{inq:weightedlip}) holds for every nonnegative $\alpha_{ij}$.
In particular, when $\alpha_{uv} =1$ and all other $\alpha_{u'v'} = 0$,
we get
$$\|\tilde f(u) - \tilde f(v)\|_1 \leq Q\|u - v\|_1.$$
That is, $\tilde f$ is $Q$-Lipschitz.

Finally, we consider the general case when the image of $f$ is not necessarily
a subset of $\{0,a_1\}\times \{0,a_2\} \times \dots \{0,a_r\}$. Informally, we are going to replace $f$
with an ``equivalent map'' $g$ that maps $U$ to vertices of a rectangular box, then apply our result to $g$,
obtain a $Q$-Lipschitz extension $\tilde g$ of $f$, and finally replace $\tilde g$ with an extension $\tilde f$ of
$f$. 

Let $f_i(p)$ be the $i$-th coordinate of $f(p)$. Let $b_1, \dots, b_{s_i}$ be the set of values
of $f_i(p)$ (for $p\in U$). Define map $\psi_i: \{b_1,\dots, b_{s_i}\} \to \bbR^{s_i}$ as
$\psi_i(b_j) = (b_1, b_2 - b_1,\dots, b_j - b_{j-1}, 0, \dots, 0)$. The map $\psi_i$ is 
an isometric embedding of $\{b_j\}$ into $(\bbR^{s_i},\|\cdot\|_1)$. 
Define map $\phi_i$ from $(\bbR^{s_i},\|\cdot\|_1)$ to $\mathbb R$ as
$\phi_i(x) = \sum_{t=1}^{s_i} x_t$. Then $\phi_i$ is 1-Lipschitz
and $\phi_i(\psi_i(b_j)) = b_j$. 
Now let
\begin{align*}
g (p) &= \psi_1(f_1(p)) \oplus  \psi_2(f_2(p)) \oplus \dots \oplus 
\psi_r(f_r(p)) \in \bigoplus_{i=1}^r \bbR^{s_i},\\
\phi(y_1 \oplus \dots\oplus y_r) &= \phi_1(y_1) \oplus \phi_2(y_2) \oplus \dots \oplus
\phi_r(y_r) \in \ell_1^r
\end{align*}
(where $r$ is the number of coordinates of $f$).
Since maps $\psi_i$ are isometries and $f$ is 1-Lipschitz, $g$ is 1-Lipschitz
as well. Moreover, the image of $g$ is a subset of vertices of a box.
Therefore, we can apply our extension result to it. We obtain a $Q$-Lipschitz
map $\tilde g:V \to \bigoplus_{i=1}^r \bbR^{s_i}$. 
$$
\xymatrix @C=4.7pc {
U \ar@{^{(}->}[d]^{\subset} \ar[r]^f \ar@/^1.8pc/[rr]^g &
f(U)
\ar@{^{(}->}[d]^{\subset} \ar[r]^{\psi_1\oplus\dots\oplus\psi_r} &
**[r]
\bigoplus_{i=1}^r \bbR^{s_i}\ar@{=}[d]\\
V \ar[r]^{\tilde f} \ar@/_1.4pc/[rr]^{\tilde g} &\ell_1^r \ar@{<-}[r]^{\phi = \phi_1\oplus\dots\oplus\phi_r}  &
**[r]
\bigoplus_{i=1}^r \bbR^{s_i}
}
$$
Note also that $\phi$ is 1-Lipschitz and $\phi(g(p)) = f(p)$.
Finally, we define $\tilde f(u) = \phi(\tilde g(u))$. We have $\|\tilde f\|_{Lip} \leq
\|\tilde g\|_{Lip} \|\phi\|_{Lip} \leq Q$.  This concludes the proof. 
\end{proof}


\begin{theorem}
\label{thm:extendmetric}
There exist metric sparsifiers of quality $e_k(\infty,\ell_{\infty}\oplus_1 \dots \oplus_1 \ell_{\infty})$
for subsets of size $k$ and this bound is tight.
Since $\ell_1$ is a Lipschitz retract of
$\ell_{\infty}\oplus_1 \dots \oplus_1 \ell_{\infty}$ (the retraction projects 
each summand $L_i = \ell_{\infty}$ to the first coordinate of $L_i$),
$e_k(\infty,\ell_{\infty}\oplus_1 \dots \oplus_1 \ell_{\infty}) 
\geq e_k(\infty,\ell_1)$. Therefore, the quality of metric sparsifiers
is at least $e_k(\infty,\ell_1)$ for some graphs. In other words,
$$Q_k^{metric} = e_k(\infty,\ell_{\infty}\oplus_1 \dots \oplus_1 \ell_{\infty}) \geq
e_k(\infty,\ell_1).$$
\end{theorem}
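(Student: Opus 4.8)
My plan is to mirror the structure of the proof of Theorem~\ref{thm:extendcut}, replacing $\ell_1$ by $\ell_{\infty}\oplus_1 \dots \oplus_1 \ell_{\infty}$ throughout; write $Q=e_k(\infty,\ell_{\infty}\oplus_1 \dots \oplus_1 \ell_{\infty})$. The reason $\ell_\infty$-summands take the place of the $\mathbb{R}$-summands of $\ell_1$ is that an arbitrary metric on a set of size $\le k$ embeds isometrically into a finite-dimensional $\ell_\infty$ (the Fr\'echet/Kuratowski embedding), whereas only $\ell_1$-metrics split into sums of cut metrics; the outer $\ell_1$-sum will play exactly the role the cut decomposition plays in Theorem~\ref{thm:extendcut}, namely it will record a Carath\'eodory decomposition of a point of the metric cone. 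Since that decomposition needs only $O(k^2)$ terms, the number of summands stays bounded in terms of $k$, consistent with the notation.

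\emph{Existence, $Q_k^{metric}\le Q$.} I would fix $G=(V,\alpha)$ and $U\subset V$ with $|U|=k$, set $C=\{(d_U,\minext_{U\to V}(d_U,\alpha)):d_U\in\calD_U\}\subset\calD_U\times\bbR$, and let ${\cal C}$ denote the cone over the convex closure of $C$; convexity and positive homogeneity of $\minext_{U\to V}(\cdot,\alpha)$ make ${\cal C}$ a finite-dimensional closed convex cone with $x\ge\minext_{U\to V}(d,\alpha)$ for every $(d,x)\in{\cal C}$. Exactly as in Theorem~\ref{thm:extendcut}, it suffices (via two applications of the \citeasnoun{Neumann} minimax theorem) to prove the pairwise claim: for every $(d_1,x_1),(d_2,x_2)\in{\cal C}$ there is a nonnegative linear functional $\beta$ with $x_1\le\beta(d_1)$ and $\beta(d_2)\le Qx_2$. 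To prove it I would rescale so $x_1=Qx_2$; if some $p,q\in U$ have $0\ne d_1(p,q)\ge d_2(p,q)$ then $\beta(d_U)=\frac{x_1}{d_1(p,q)}d_U(p,q)$ works, so I may assume $d_1(p,q)<d_2(p,q)$, or $d_1(p,q)=d_2(p,q)=0$, for all $p,q$. Writing $(d_t,x_t)=\sum_i\mu_t^i(\rho_t^i,\minext_{U\to V}(\rho_t^i,\alpha))$ by Carath\'eodory, I build $f_1\colon U\to\ell_\infty\oplus_1\dots\oplus_1\ell_\infty$ by isometrically embedding $(U,\rho_1^i)$ into the $i$-th $\ell_\infty$ and scaling that copy by $\mu_1^i$ (so $\|f_1(p)-f_1(q)\|=d_1(p,q)$), and $f_2\colon V\to\ell_\infty\oplus_1\dots\oplus_1\ell_\infty$ by isometrically embedding $(V,\sigma^i)$ into the $i$-th $\ell_\infty$ and scaling by $\mu_2^i$, where $\sigma^i$ is the minimum metric extension of $\rho_2^i$ to $V$ (so $\|f_2(p)-f_2(q)\|=d_2(p,q)$ for $p,q\in U$ and $\sum_{u,v\in V}\alpha_{uv}\|f_2(u)-f_2(v)\|=x_2$). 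The map $g=f_1\circ f_2^{-1}$ on $f_2(U)$ is well-defined and strictly contracting, and $f_2(V)$ is separable, so by the definition of $Q$ it extends to $\tilde g$ on $f_2(V)$ with $\|\tilde g\|_{Lip}<Q$; dividing the $i$-th summand of $\tilde g\circ f_2$ by $\mu_1^i$ and taking norms of differences gives a metric extension $\sigma^{*i}$ of $\rho_1^i$ to $V$, so $\sum_{u,v}\alpha_{uv}\sigma^{*i}(u,v)\ge\minext_{U\to V}(\rho_1^i,\alpha)$. Summing against $\mu_1^i$ then yields
\[
x_1=\sum_i\mu_1^i\minext_{U\to V}(\rho_1^i,\alpha)\le\sum_{u,v}\alpha_{uv}\|\tilde g(f_2(u))-\tilde g(f_2(v))\|<Q\sum_{u,v}\alpha_{uv}\|f_2(u)-f_2(v)\|=Qx_2,
\]
contradicting $x_1=Qx_2$ and completing the pairwise claim.

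\emph{Tightness, $Q_k^{metric}\ge Q$.} I would start from a separable $X$, a $k$-point $Z\subset X$, and a $1$-Lipschitz $f\colon Z\to L_1\oplus_1\dots\oplus_1 L_m$ with $L_i=\ell_\infty^{n_i}$, and reduce to $X$ finite by the compactness argument (Theorem~\ref{thm:compactness}). Letting $f^{(i)}$ be the $i$-th block of $f$ and $\rho_i(p,q)=\|f^{(i)}(p)-f^{(i)}(q)\|_\infty$, I have $\sum_i\rho_i=d_2$ where $d_2(p,q)=\|f(p)-f(q)\|$, and $d_2\le d_1:=d_X|_Z$ since $f$ is $1$-Lipschitz. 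For an arbitrary weighting $\alpha$ of $X\times X$ I take the $Q$-quality metric sparsifier $\beta$ of $(X,\alpha)$ for $Z$ and the minimum metric extension $\rho_i^*$ of each $\rho_i$ to $X$; because $\ell_\infty^{n_i}$ is injective in the $1$-Lipschitz category (equivalently, applying McShane's extension theorem~\cite{McShane} coordinatewise), $f^{(i)}$ extends to a $1$-Lipschitz $\tilde f^{(i)}\colon(X,\rho_i^*)\to\ell_\infty^{n_i}$, and $\tilde f=\bigoplus_i\tilde f^{(i)}$ extends $f$ with
\[
\sum_{u,v}\alpha_{uv}\|\tilde f(u)-\tilde f(v)\|\le\sum_i\sum_{u,v}\alpha_{uv}\rho_i^*(u,v)=\sum_i\minext_{Z\to X}(\rho_i,\alpha)\le\sum_i\beta(\rho_i)=\beta(d_2)\le\beta(d_1)\le Q\sum_{u,v}\alpha_{uv}d_X(u,v),
\]
using linearity and nonnegativity of $\beta$, the two sparsifier inequalities, $d_2\le d_1$, and that $d_X$ extends $d_1$. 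As in Theorem~\ref{thm:extendcut}, the extensions of $f$ form a closed convex set and $f\mapsto\|f(u)-f(v)\|$ is convex, so the \citeasnoun{Sion} minimax theorem produces a single $\tilde f$ for which the display holds for all nonnegative $\alpha$; taking $\alpha$ concentrated on one pair gives $\|\tilde f\|_{Lip}\le Q$. The remaining inequality $Q_k^{metric}=Q\ge e_k(\infty,\ell_1)$ is then the elementary $1$-Lipschitz retraction argument already stated (project each summand onto its first coordinate).

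I expect the crux to be the existence half, and inside it the construction and analysis of $f_1$ and $f_2$: one must set up an isometric picture in which a Lipschitz extension of the contraction $g$ can be taken apart summand by summand into honest metric extensions whose combined $\alpha$-cost is still controlled --- this is where the target is forced to be an $\ell_1$-sum of $\ell_\infty$'s. The tightness half should be comparatively routine once one notices that injectivity of $\ell_\infty$ (coordinatewise McShane) does the job that the cut/box reduction does in the $\ell_1$ case, so that no analogue of the "box" normalization step of Theorem~\ref{thm:extendcut} is needed.
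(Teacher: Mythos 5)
Your proposal is correct and follows essentially the same route as the paper: Carath\'eodory decomposition of $(d_1,x_1)$ into extreme points, isometric Fr\'echet-type embeddings into the $\ell_\infty$ summands, a strictly contracting map whose Lipschitz extension is disassembled summand-by-summand into metric extensions of the $d_1^i$, and on the tightness side coordinatewise McShane plus the Sion minimax argument. The only (immaterial) deviation is on the source side of the existence half: the paper extends the map directly from the abstract metric space $(V,d_2^*)$, exploiting that the domain in $e_k(\infty,\cdot)$ may be an arbitrary separable metric space, whereas you also decompose $(d_2,x_2)$ and embed $V$ into $\ell_\infty\oplus_1\cdots\oplus_1\ell_\infty$ via $f_2$, exactly mirroring Theorem~\ref{thm:extendcut}; both are valid and yield the same cost computation $\sum_{u,v}\alpha_{uv}\|f_2(u)-f_2(v)\|=x_2$.
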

\begin{proof}
Let $Q = e_k(\infty, \ell_{\infty}\oplus_1 \dots \oplus_1 \ell_{\infty})$.
We denote the norm of a vector $v\in \ell_{\infty}\oplus_1 \dots \oplus_1 \ell_{\infty}$
by $\|v\|\equiv \|v\|_{\ell_{\infty}\oplus_1 \dots \oplus_1 \ell_{\infty}}$.
First, we construct a $Q$-quality metric sparsifier for a given 
graph $G = (V, \alpha)$ and $U\subset V$ of size $k$.

Let $C = \{(d_U, \minext_{U\to V}(d_U, \alpha)): d_U \in {\cal D}_U \}$
and $\cal C$ be the convex hull of $C$.
We construct a linear form $\beta$ (a metric sparsifier) with non-negative coefficients
such that $x \leq \beta(d_U) \leq Q x$ for every $(d_U, x) \in \cal C$.

The proof follows the lines of Theorem~\ref{thm:extendcut}.
The only piece of the proof that we need to modify slightly is the
proof that the following is impossible: 
for some $(d_1,x_1)$ and $(d_2, x_2)$ in $\cal C$, $x_1 = Qx_2$ and
for all $p,q\in U$ either $d_1(p,q) < d_2(p,q)$ or $d_1(p,q) = d_2(p,q) = 0$.
Assume the contrary. We represent $(d_1,x)$ as a convex combination of 
points $(d_1^i, x_1^i)$ in $C$ (by Carath\'eodory's theorem). 
Let $f_i$ be an isometric embedding of the metric space $(U,d_1^i)$ into
$\ell_{\infty}$. Then $f\equiv \oplus_i f_i$ is an isometric embedding of $(U, d_1)$
into $\ell_{\infty}\oplus_1 \dots \oplus_1 \ell_{\infty}$.
Let $d_2^*$ be the minimum extension of $d_2$ to $V$.
Note that $f$ is a strictly contracting map from $(U, d_2)$ to $\ell_{\infty}\oplus_1 \dots \oplus_1 \ell_{\infty}$: 
$$\|f(p) - f(q)\|_{\infty} = \sum_i \|f_i(p) - f_i(q)\|_{\infty} = 
\sum_i d_1^i(p,q) = d_1(p,q) < d_2(p,q),$$ 
for all $p, q\in U$ such that $d_2(p,q) > 0$.
Therefore, there exists a Lipschitz extension of $f: (U, d_2) \to \ell_{\infty}\oplus_1 \dots \oplus_1 \ell_{\infty}$ to $\tilde f : (V, d_2^*) \to \ell_{\infty}\oplus_1 \dots \oplus_1 \ell_{\infty}$
with $\|\tilde f\|_{Lip} < Q$. Let $\tilde f_i:V \to \ell_{\infty}$ be the projection of $f$ to the $i$-th summand.
Let $\tilde d_1^i(x,y) = \|\tilde f_i(x) - \tilde f_i(y)\|_{\infty}$ be the metric induced by $\tilde f_i$ on $G$.
Let 
$$\tilde d_1(x,y) = \|\tilde f(x) - \tilde f(y)\|_{\infty} = \sum_i \|\tilde f_i(x) - \tilde f_i(y)\|_{\infty} = \sum_i \tilde d_1^i(x,y)$$
be the metric induced by $\tilde f$ on $G$.
Since $\tilde f_i(p) = f_i(p)$ for all $p\in U$, metric $\tilde d_1^i$ is an extension of
$d_1^i$ to $V$.  Thus $\alpha(\tilde d_1^i) \geq \minext_{U\to V}(d_1^i, \alpha) = x_1^i$.
Therefore, $\alpha(\tilde d_1) = \alpha(\sum \tilde d_1^i) \geq \sum_i x_1^i = x_1$.
Since $\|\tilde f\|_{Lip} < Q$, 
$\tilde d_1(x,y) = \|\tilde f(x) - \tilde f(y)\|_{\infty} < Q d_2^*(x, y)$
(for every $x,y\in V$ such that $d_2^*(x,y) > 0$).
We have,
$$\alpha(\tilde d_1) < \alpha(Q d_2^*) = Q \minext_{U\to V}(d_2, \alpha) \leq Qx_2= x_1.$$
We get a contradiction.


Now we prove that if for every graph $G = (V,\alpha)$ and a subset $U\subset V$
of size $k$ there exists a metric sparsifier of size $Q$ (for some $Q$) then 
$e(\infty, \ell_{\infty}\oplus_1 \dots \oplus_1 \ell_{\infty}) \leq Q$.
Let $(V, d_V)$ be an arbitrary metric space; and $U \subset V$ be a subset of size $k$.
Let $f:(U, d_V|_U) \to \ell_{\infty}\oplus_1 \dots \oplus_1 \ell_{\infty}$ be a 1-Lipschitz map.
We will show how to extend $f$ to a $Q$-Lipschitz map $\tilde f:(V, d_V) \to \ell_{\infty}\oplus_1 \dots \oplus_1 \ell_{\infty}$. 
We consider graph $G = (V, \alpha)$ with nonnegative edge weights and
a $Q$-quality metric sparsifier $\beta$.

Let $f_i:U \to \ell_{\infty}$ be the projection of $f$ onto its $i$-th summand.
Map $f_i$ induces metric $d^i(p,q) = \|f_i(p) - f_i(q)\|$ on $U$. 
Let $\tilde d^i$ be the minimum metric extension of $d^i$ to $V$; let $\tilde d_*(x,y) = \sum_i \tilde d^i(x, y)$.
Note that since $f$ is 1-Lipschitz 
$$\tilde d_*(p,q) = \sum_i \tilde d^i(p,q) = \sum_i \|f_i(p) - f_i(q)\| = \|f(p) - f(q)\| \leq d_V(p,q)$$
for $p,q\in U$.
Therefore,
$$\alpha(\tilde d_*) = \sum_i \alpha(\tilde d^i) \leq \sum_i \beta(d^i) = \beta(\tilde d_*|_U) \leq \beta(\left.d_V\right|_U) \leq Q \alpha(d_V)$$
(we use that all coefficients of $\beta$ are nonnegative).

Each map $f_i$ is an isometric embedding of $(U, d^i)$ to $\ell_{\infty}$ (by the definition of $d^i$).
Using the McShane extension theorem\footnote{The McShane extension theorem states that $e_k(M, \bbR) =1$ for every metric space $M$.} \cite{McShane}, we extend each $f_i$ a 1-Lipschitz map $\tilde f_i$ from $(V, \tilde d^i)$ to $\ell_{\infty}$.
Finally, we let $\tilde f = \oplus_i \tilde f_i$. Since each $\tilde f_i$ is an extension of $f_i$,
$\tilde f$ is an extension of $f$. For every $x,y\in V$, we have $\|\tilde f(x) - \tilde f(y)\| 
= \sum_i \|\tilde f_i(x) - \tilde f_i(y)\| = \tilde d_*(x, y)$. Therefore,
$$\sum_{x,y\in V} \alpha_{xy} \|\tilde f(x) - \tilde f(y)\|  = \alpha(\tilde d_*) \leq Q\alpha(d_V) =
Q\times \sum_{x,y\in V} \alpha_{xy} d_V(x,y).$$
We showed that for every set of nonnegative weights $\alpha$
there exists an extension $f$ such that the inequality above holds.
Therefore, by the minimax theorem there exists an extension $\tilde f$
such that this inequality holds for every nonnegative $\alpha_{xy}$.
In particular, when $\alpha_{xy} =1$ and all other $\alpha_{x'y'} = 0$,
we get
$$\|\tilde f(x) - \tilde f(y)\| \leq Q d_V(x,y).$$
That is, $\tilde f$ is $Q$-Lipschitz.
\end{proof}
\begin{remark} We proved in Theorem~\ref{thm:extendcut}
that $Q_k^{cut} = e_k(\ell_1^M, \ell_1^N)$ for $\binom{k}{2} + 2 \leq M, N <
\infty$; by a simple compactness argument the equality also holds when either one or both
of $M$ and $N$ are equal to infinity. Similarly,  we proved in Theorem~\ref{thm:extendmetric}
that $Q_k^{metric} = e_k(\infty, \underbrace{\ell_\infty^M \oplus_1 \dots \oplus_1
\ell_\infty^M}_N)$ for $k-1 \leq M < \infty$ and $\binom{k}{2} + 2 \leq N <
\infty$; this equality also holds when either one or both
of $M$ and $N$ are equal to infinity. (We will not use use this observation.)
\end{remark}

\subsection{Lower Bounds and Projection Constants}\label{subsec:proj-const}
We now prove lower bounds on the quality of metric and cut sparsifiers.
We will need several definitions from analysis.  
The operator norm of a linear operator $T$ from a normed space
$U$ to a normed space $V$ is 
$\|T\| \equiv \|T\|_{U\to V} = \sup_{u\neq 0} \|Tu\|_V/\|u\|_U$.
The Banach--Mazur distance between two normed
spaces $U$ and $V$ is 
$$d_{BM}(U,V) = \inf\{\|T\|_{U\to V} \|T^{-1}\|_{V\to U}: 
T \text{ is a linear operator from } U \text{ to } V\}.$$
We say that two Banach spaces are $C$-isomorphic if the Banach--Mazur
distance between them is at most $C$; two Banach spaces are isomorphic
if the Banach--Mazur distance between them is finite.
A linear operator $P$ from a Banach space $V$ to a subspace $L\subset V$ is 
a projection if the restriction of $P$ to $L$ is the identity operator
on $L$ (i.e., $P|_L = I_L$).

Given  a Banach space $V$ and subspace $L \subset V$, we define the relative 
projection constant $\lambda(L, V)$ as:
$\lambda(L, V) = \inf \{\|P\|: P \text{ is a linear projection from } V \text{ to }L\}$.

\begin{theorem}\label{thm:boundQmetric}
$$Q_k^{metric} = \Omega(\sqrt{\log k / \log\log k}).$$
\end{theorem}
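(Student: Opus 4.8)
The plan is to deduce the lower bound entirely from the extendability framework already set up in Theorem~\ref{thm:extendmetric}, which gives $Q_k^{metric} = e_k(\infty,\ell_\infty\oplus_1\cdots\oplus_1\ell_\infty) \ge e_k(\infty,\ell_1)$; so it suffices to prove $e_k(\infty,\ell_1) = \Omega(\sqrt{\log k/\log\log k})$. I will obtain this by exhibiting, for a suitable $d = \Theta(\log k/\log\log k)$, a $k$-point instance for which every extension into $\ell_1$ must lose a factor of order $\lambda(\ell_1^d,\ell_\infty)$, and then invoking \citeasnoun{Grunbaum}'s estimate $\lambda(\ell_1^d,\ell_\infty)=\lambda(\ell_1^d)=\Omega(\sqrt d)$. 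This is exactly the ``extendability versus projection constant'' connection of \citeasnoun{JL}, specialized to $\ell_1$.

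Here is the instance. Embed $\ell_1^d$ isometrically into $\ell_\infty^m$ for a suitable finite $m$ (e.g.\ $m=2^{d-1}$, via the sign functionals $u\mapsto(\langle\sigma,u\rangle)_\sigma$), and write $J\colon\ell_1^d\to\ell_\infty^m$ for this embedding. Let $\mathcal N$ be a $\delta$-net of the unit ball $B_{\ell_1^d}$, so $|\mathcal N|\le(3/\delta)^d$; put $k=|\mathcal N|$, $Z=J(\mathcal N)\subset\ell_\infty^m$, and let $f\colon Z\to\ell_1^d\subset\ell_1$ be the $1$-Lipschitz (in fact isometric) map $f(J(w))=w$. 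Assume $e_k(\infty,\ell_1)\le K$. Then $f$ extends to a $K$-Lipschitz map $\ell_\infty^m\to\ell_1^D$; composing with the norm-one coordinate projection $\ell_1^D\to\ell_1^d$ (which fixes $f(Z)\subset\ell_1^d$) we may take the extension to be a $K$-Lipschitz map $\tilde f\colon\ell_\infty^m\to\ell_1^d$ that is the identity on $Z$. In particular $\tilde f\circ J\colon\ell_1^d\to\ell_1^d$ is $K$-Lipschitz and, since $\mathcal N$ is a $\delta$-net, it is within $(K+1)\delta$ of the identity on all of $B_{\ell_1^d}$.

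The technical heart is to upgrade $\tilde f$ to a genuine linear projection onto $J(\ell_1^d)$ of nearly the same norm. Mollify $\tilde f$ to a smooth $K$-Lipschitz map $\tilde f_\sigma$ with $\sigma=\delta$; then $h:=\tilde f_\sigma\circ J$ is smooth, $K$-Lipschitz, and $O(K\delta)$-close to $\mathrm{id}_{\ell_1^d}$ on $B_{\ell_1^d}$. Let $T:=\operatorname{avg}_{u\in B_{\ell_1^d}}D\tilde f_\sigma(J(u))$, a linear map $\ell_\infty^m\to\ell_1^d$ with $\|T\|\le K$; its restriction $T\circ J=\operatorname{avg}_{u\in B_{\ell_1^d}}Dh(u)$ equals, by the divergence theorem applied componentwise, a surface integral of $h$ against the outer normal of $B_{\ell_1^d}$, and hence differs from $\mathrm{id}_{\ell_1^d}$ in operator norm by at most $O(K\delta)\cdot\bigl(\operatorname{area}\partial B_{\ell_1^d}/\vol B_{\ell_1^d}\bigr)=O(K\cdot\mathrm{poly}(d)\cdot\delta)$. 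Once this quantity is below $1$, $T\circ J$ is invertible and $P:=(T\circ J)^{-1}\circ T$ is a projection of $\ell_\infty^m$ onto $J(\ell_1^d)$ with $\|P\|\le K/(1-O(K\,\mathrm{poly}(d)\,\delta))$. Since $\ell_\infty^m$ is $1$-injective, $\lambda(\ell_1^d)=\lambda(J(\ell_1^d),\ell_\infty^m)\le\|P\|$, so $K\ge(1-O(K\,\mathrm{poly}(d)\,\delta))\,\lambda(\ell_1^d)$.

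Finally I balance parameters. If $K\ge\sqrt d$ then $e_k(\infty,\ell_1)=\Omega(\sqrt d)$ already, so assume $K<\sqrt d$; then choosing $\delta$ a small absolute constant times a fixed negative power of $d$ makes the correction factor at least $\tfrac12$, while the net has size $(3/\delta)^d=\exp(\Theta(d\log d))$. Thus, given $k$, set $d:=\lfloor c_0\log k/\log\log k\rfloor$ with $c_0$ a small absolute constant so that $(3/\delta)^d\le k$; the construction then produces a $k$-point instance with $e_k(\infty,\ell_1)\ge\tfrac12\lambda(\ell_1^d)=\Omega(\sqrt d)=\Omega(\sqrt{\log k/\log\log k})$ by \citeasnoun{Grunbaum}. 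Combined with $Q_k^{metric}\ge e_k(\infty,\ell_1)$ this proves the theorem. I expect the linearization/finitization step above to be the main obstacle: a $K$-Lipschitz map that agrees with the identity only on a \emph{finite} net of the ball (rather than on the whole subspace) need not resemble a projection at all, so one has to argue through averaging of derivatives over the subspace and the divergence theorem; and forcing the averaged derivative close to the identity requires the net to be polynomially (in $d$) fine, which is exactly what caps $d$ at $\Theta(\log k/\log\log k)$ and accounts for the $\sqrt{\log\log k}$ loss compared with $\sqrt{\log k}$.
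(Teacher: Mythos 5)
Your proposal is correct and follows the same skeleton as the paper's proof: reduce to $e_k(\infty,\ell_1)$ via Theorem~\ref{thm:extendmetric}, show that a $K$-Lipschitz extension forces a linear projection onto an isometric copy of $\ell_1^d$ in $\ell_\infty^m$ of norm $O(K)$ for $d=\Theta(\log k/\log\log k)$, and invoke Gr\"unbaum's bound $\lambda(\ell_1^d,\ell_\infty^m)=\Theta(\sqrt d)$. The one place you genuinely diverge is the linearization step, which the paper outsources to \citeasnoun{JL} (Theorem~\ref{thm:JL}, sketched in Appendix~\ref{sec:JL-overview}): Johnson--Lindenstrauss put the net on the unit \emph{sphere} of the subspace, pass to a positively homogeneous extension, and then linearize by an abstract averaging lemma (their Proposition~1), finally correcting with a Neumann series; you instead put the net on the unit \emph{ball}, mollify, average the derivative of the extension over the ball, and control the deviation from the identity by the divergence theorem on $\partial B_{\ell_1^d}$, again finishing with a Neumann series. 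Both routes hinge on the same quantitative fact — the net must be $1/\mathrm{poly}(d)$-fine, so it has $\exp(\Theta(d\log d))$ points, which is exactly what caps $d$ at $\Theta(\log k/\log\log k)$ and produces the $\sqrt{\log\log k}$ loss — so the two proofs are quantitatively equivalent. Your version is arguably more self-contained and concrete (the surface-area-to-volume ratio of $B_{\ell_1^d}$ is only polynomial in $d$, which is all you need), at the cost of some analytic bookkeeping (mollification on $\ell_\infty^{2^{d-1}}$, the piecewise-smooth boundary) that the JL argument avoids by working with homogeneous maps. Two cosmetic points: your $P=(T\circ J)^{-1}\circ T$ should be composed with $J$ to literally be a projection of $\ell_\infty^m$ onto $J(\ell_1^d)$ (this does not change its norm), and the identification $\lambda(J(\ell_1^d),\ell_\infty^m)=\lambda(\ell_1^d)$ is justified exactly as you say, by the $1$-injectivity of $\ell_\infty^m$, so the bound is independent of the particular isometric embedding used.
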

\begin{proof}
To establish the theorem, we prove lower bounds for $e_k(\ell_\infty, \ell_1)$.
Our proof is a modification of the proof of \citeasnoun{JL}
that $e_k(\ell_1,\ell_2) = \Omega(\sqrt{\log k / \log\log k})$. 
Johnson and Lindenstrauss showed that for every space $V$ and subspace $L\subset V$
of dimension $d = \lfloor {c\log k}/{\log\log k}\rfloor$, $e_k(V, L) = \Omega(\lambda(L, V))$ 
(\citeasnoun{JL}, see Appendix~\ref{sec:JL-overview}, Theorem~\ref{thm:JL}, for a sketch of the proof).

Our result follows from the lower bound of \citeasnoun{Grunbaum}: for a certain
isometric embedding of $\ell_1^d$ into $\ell_\infty^N$,
$\lambda(\ell_1^d, \ell_\infty^N) = \Theta(\sqrt{d})$ (for large enough $N$).
Therefore, $e_k(\ell_\infty^N, \ell_1^d) = \Omega(\sqrt{\log k / \log\log k})$.
\end{proof}

We now prove a lower bound on $Q_k^{cut}$. 
Note that the argument from Theorem~\ref{thm:boundQmetric} shows that
$Q_k^{cut} = e_k(\ell_1^d, \ell_1^N) = \Omega(\lambda(L, \ell_1^N))$,
where $L$ is a subspace of $\ell_1^N$ isomorphic to $\ell_1^d$. \citeasnoun{Bourgain} proved that 
there is a non-complemented subspace isomorphic to $\ell_1^{\infty}$ in $L_1$.
This implies that $\lambda(L, \ell_\infty^N)$ (for some $L$) and, therefore, $Q_k^{cut}$ are unbounded.
However, quantitatively Bourgain's result gives a very weak bound of (roughly) $\log\log\log k$.
It is not known how to improve Bourgain's bound.
So instead we present an explicit family of non-$\ell_1$ subspaces $\{L\}$ of $\ell_1$ with
$\lambda(L, \ell_1) = \Theta(\sqrt{\dim L})$
and $d_{BM}(L, \ell_1^{\dim L}) =O(\sqrt[4]{\dim L})$.  

\begin{theorem}\label{thm:boundQcut}
$$Q_k^{cut} \geq \Omega(\sqrt[4]{\log k/\log\log k}).$$
\end{theorem}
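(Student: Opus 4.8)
The plan is to combine the result established just above—namely that $Q_k^{cut} = e_k(\ell_1^d,\ell_1^N) = \Omega(\lambda(L,\ell_1^N))$ for a subspace $L\subset\ell_1^N$ isomorphic to $\ell_1^d$ with $d = \lfloor c\log k/\log\log k\rfloor$—with an explicit construction of a subspace $L$ of $\ell_1$ of dimension $d$ that simultaneously has large relative projection constant, $\lambda(L,\ell_1)=\Omega(\sqrt{d})$, and is not too far from $\ell_1^d$ in the Banach--Mazur sense, $d_{BM}(L,\ell_1^d)=O(\sqrt[4]{d})$. Once such an $L$ is in hand, the Johnson--Lindenstrauss machinery (Theorem~\ref{thm:JL} in the appendix) applies to the pair $(\ell_1^N, L)$ and gives $e_k(\ell_1^N, L) = \Omega(\lambda(L,\ell_1^N))=\Omega(\sqrt{d})=\Omega(\sqrt{\log k/\log\log k})$. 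Then, since $L$ is $O(\sqrt[4]{d})$-isomorphic to $\ell_1^d$, composing a $Q$-Lipschitz extension into $\ell_1$ with the two isomorphism maps changes the extendability constant by at most a factor of $d_{BM}(L,\ell_1^d)=O(\sqrt[4]{d})$, so $e_k(\ell_1^N,\ell_1^N) \geq e_k(\ell_1^N, L)/O(\sqrt[4]{d}) = \Omega(\sqrt{d}/\sqrt[4]{d}) = \Omega(\sqrt[4]{d}) = \Omega(\sqrt[4]{\log k/\log\log k})$.

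The heart of the argument is therefore the construction of $L$. Here I would take $L$ to be a random subspace of $\ell_1^N$ of dimension $d$ (spanned by $d$ independent vectors with i.i.d.\ Gaussian or $\pm1$ coordinates, suitably normalized), or, for an explicit family, the span of the first $d$ Walsh/Rademacher-type functions inside $\ell_1^N$. The point is that such an $L$ behaves like $\ell_2^d$ rather than $\ell_1^d$ on its unit ball interior (by concentration of measure / Khintchine's inequality, the $\ell_1^N$ norm restricted to $L$ is within a factor $O(1)$ of a Euclidean norm on all of $L$), which gives $d_{BM}(L,\ell_2^d)=O(1)$ and hence, via $d_{BM}(\ell_2^d,\ell_1^d)=O(\sqrt d)$... — but that last bound is too weak, so instead I would exploit that $L$ is only $\sqrt[4]{d}$-far from $\ell_1^d$ by a more careful two-sided estimate: the lower $\ell_1$--$\ell_2$ comparison on $L$ loses only $\Theta(1)$, while embedding $\ell_2^d$ into $\ell_1^d$ with distortion $O(\sqrt[4]{d})$ is achieved because one only needs to embed the particular subspace, not all of $\ell_2^d$, or by using that a random $d$-dimensional section of $\ell_1^N$ with $N=d^{O(1)}$ has distortion $O(\sqrt[4]{d})$ from $\ell_1^d$ (a known phenomenon in the local theory of $L_1$). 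For the projection constant, I would use the standard averaging/trace argument: any projection $P:\ell_1^N\to L$ satisfies $\|P\|\geq \mathrm{tr}(P)/\big(\text{(number of terms)}\cdot\text{(norm of basis coordinate functionals)}\big)$, and since $L$ is Euclidean-like of dimension $d$ sitting in $\ell_1^N$, $\lambda(L,\ell_1^N)=\Theta(\sqrt d)$—the same lower bound Grünbaum established for $\ell_1^d$ inside $\ell_\infty^N$, transported to this setting.

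The main obstacle I anticipate is getting the two requirements on $L$ to hold \emph{simultaneously} with the right quantitative trade-off: making $\lambda(L,\ell_1)$ as large as $\Omega(\sqrt d)$ pushes $L$ to be ``Euclidean'', but a genuinely Euclidean $d$-dimensional subspace of $\ell_1$ is $\Theta(\sqrt d)$-far from $\ell_1^d$, which would only yield a vacuous bound. The slack comes from the fact that one does not need $L$ fully Euclidean—only Euclidean enough that the projection-constant lower bound goes through—and a random section of $\ell_1^N$ lands in exactly the right intermediate regime, being $\Theta(\sqrt[4]{d})$-isomorphic to $\ell_1^d$ while still forcing projections to have norm $\Omega(\sqrt d/\sqrt[4]{d})=\Omega(\sqrt[4]{d})$; making this balance precise, with explicit constants and a verification that the relevant concentration bounds hold for $N$ polynomial in $k$ (so that the ambient dimension is controlled and the reduction to $e_k$ is legitimate), is the delicate part. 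I would also need to double-check that the Johnson--Lindenstrauss lower bound $e_k(V,L)=\Omega(\lambda(L,V))$ indeed applies with $V=\ell_1^N$ and this particular $L$ of dimension $d=\Theta(\log k/\log\log k)$, i.e.\ that the dimension hypothesis in Theorem~\ref{thm:JL} is met, which fixes the relationship between $d$ and $k$ and hence the final $\sqrt[4]{\log k/\log\log k}$ form of the bound.
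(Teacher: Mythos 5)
Your high-level reduction is exactly the paper's: combine the Johnson--Lindenstrauss bound $e_k(\ell_1,L)=\Omega(\lambda(L,\ell_1))$ for a $d$-dimensional subspace $L$ with $d=\lfloor c\log k/\log\log k\rfloor$, with a subspace satisfying $\lambda(L,\ell_1)=\Omega(\sqrt d)$ and $d_{BM}(L,\ell_1^d)=O(\sqrt[4]{d})$, and then divide by the Banach--Mazur distance via the inequality $e_k(X,U)\leq e_k(X,V)\,d_{BM}(U,V)$. That part is fine. But the heart of the theorem is the construction of $L$, and that is precisely what your proposal does not supply. The candidates you suggest would fail: a random $d$-dimensional section of $\ell_1^N$ (for $N$ polynomial, or even just $N=2d$, in the Kashin regime) is $O(1)$-Euclidean, hence $\Theta(\sqrt d)$-far from $\ell_1^d$, and the same is true of the span of Rademacher/Walsh functions by Khintchine's inequality. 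You correctly identify this as the obstacle, but the escape you invoke --- that a random section of $\ell_1^N$ with $N=d^{O(1)}$ is only $O(\sqrt[4]{d})$-far from $\ell_1^d$ --- is not a known phenomenon and is false for random subspaces; you would need a genuinely different, non-Euclidean construction, and you do not produce one. The trace/averaging bound for $\lambda(L,\ell_1)$ is likewise only gestured at and cannot be checked without a concrete $L$.

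The paper's construction is explicit and sits deliberately between $\ell_1^d$ and $\ell_2^d$: take $d=m^2$, let $\calS\subset\{-1,0,1\}^d$ be the vectors with exactly $m$ nonzero coordinates, and let $L\subset L_1(\calS,\mu)$ be the span of the coordinate functions $f_i(S)=S_i$. The projection lower bound $\lambda(L,\ell_1)=\Omega(m)=\Omega(\sqrt d)$ comes from computing $\|P^{\perp}e_A\|_1$ for the orthogonal projection (two random $m$-sparse sign vectors overlap in exactly one nonzero coordinate with constant probability) and then a Gr\"unbaum-style symmetrization over the group $\mathbb{S}_d\ltimes\mathbb{Z}_2^d$ showing $P^{\perp}$ is optimal among all projections. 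The distance bound uses that $f_1,\dots,f_d$ is a symmetric basis, so $d_{BM}(L,\ell_1^d)\leq d\|f_1\|_1/\|f_1+\dots+f_d\|_1$, and $\|\sum_i f_i\|_1=\Omega(\sqrt m\,|\calS|)$ while $\|f_1\|_1=|\calS|/m$, giving $O(d/m^{3/2})=O(\sqrt[4]{d})$. Without this (or an equivalent) construction your argument does not close, so the proposal has a genuine gap at its central step.
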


We shall construct a $d$ dimensional subspace $L$ of $\ell^N_1$, with the
projection constant $\lambda(L, \ell_1) \geq \Omega(\sqrt{d})$ and
with Banach--Mazur distance $d(L, \ell_1^d) \leq O(\sqrt[4]{d})$.
By Theorem~\ref{thm:JL} (as in Theorem~\ref{thm:boundQmetric}),
$e_k(\ell_1,L) \geq \Omega(\sqrt{d})$
for $d = \lfloor c \log k/\log\log k\rfloor$. 
The following lemma then implies that $e_k(\ell_1, \ell^d_1) \geq \Omega(\sqrt[4]{d})$.

\begin{lemma}\label{lem:ineq-extension-bm}
For every metric space $X$ and finite dimensional normed spaces $U$ and $V$,
$$e_k(X,U) \leq e_k(X,V) d_{BM}(U,V).$$
\end{lemma}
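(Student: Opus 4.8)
The plan is to prove the inequality $e_k(X,U) \leq e_k(X,V)\, d_{BM}(U,V)$ by transporting a Lipschitz extension problem in $U$ over to $V$ using a near-optimal Banach--Mazur isomorphism, solving it there, and transporting the extension back. Fix a $k$-point subset $Z \subset X$ and a Lipschitz map $f : Z \to U$. Given any $\varepsilon > 0$, choose a linear isomorphism $T : U \to V$ with $\|T\|_{U\to V}\,\|T^{-1}\|_{V\to U} \leq d_{BM}(U,V) + \varepsilon$; by rescaling $T$ we may assume $\|T\|_{U\to V} \leq 1$ and $\|T^{-1}\|_{V\to U} \leq d_{BM}(U,V) + \varepsilon$ (the Banach--Mazur distance is scale-invariant, so only the product matters).

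Next I would push $f$ forward: the composition $T \circ f : Z \to V$ is Lipschitz with $\|T\circ f\|_{Lip} \leq \|T\|_{U\to V}\,\|f\|_{Lip} \leq \|f\|_{Lip}$. By the definition of the extendability constant $e_k(X,V)$ — applied to this $k$-point subset $Z$ and this Lipschitz map into $V$ — for any $\varepsilon' > 0$ there is an extension $g : X \to V$ of $T\circ f$ with $\|g\|_{Lip} \leq (e_k(X,V) + \varepsilon')\,\|T\circ f\|_{Lip} \leq (e_k(X,V)+\varepsilon')\,\|f\|_{Lip}$. Then pull back: set $\tilde f = T^{-1} \circ g : X \to U$. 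Since $g$ extends $T\circ f$ and $T^{-1}$ is linear, $\tilde f|_Z = T^{-1}\circ T \circ f = f$, so $\tilde f$ is an extension of $f$. Its Lipschitz norm satisfies
$$\|\tilde f\|_{Lip} \leq \|T^{-1}\|_{V\to U}\,\|g\|_{Lip} \leq (d_{BM}(U,V)+\varepsilon)(e_k(X,V)+\varepsilon')\,\|f\|_{Lip}.$$
Letting $\varepsilon, \varepsilon' \to 0$ shows that $f$ extends with Lipschitz norm at most $e_k(X,V)\, d_{BM}(U,V)\,\|f\|_{Lip}$, and since $Z$ and $f$ were arbitrary, $e_k(X,U) \leq e_k(X,V)\,d_{BM}(U,V)$.

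There is essentially no hard step here — the argument is purely formal, just bookkeeping of operator norms. The only points requiring a little care are: (i) making sure the definition of $e_k$ is used correctly as an infimum, so that the $\varepsilon$-slack in both the Banach--Mazur distance and the extendability constant is handled by a limiting argument rather than assuming optima are attained; and (ii) observing that composing a Lipschitz map with a bounded linear operator multiplies Lipschitz norms, which holds because $\|T u_1 - T u_2\|_V = \|T(u_1 - u_2)\|_V \leq \|T\|_{U\to V}\,\|u_1 - u_2\|_U$. Finite-dimensionality of $U$ and $V$ is used only to guarantee that $d_{BM}(U,V)$ is finite (so the bound is non-vacuous) and that isomorphisms realizing it up to $\varepsilon$ exist; the inequality itself would hold trivially if $d_{BM}(U,V) = \infty$.
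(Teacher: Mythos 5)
Your proof is correct and follows essentially the same route as the paper's: push $f$ forward by a near-optimal isomorphism $T$, extend in $V$ using $e_k(X,V)$, and pull back by $T^{-1}$, multiplying the operator norms. The only difference is that you handle the infima in $d_{BM}$ and $e_k$ with explicit $\varepsilon$-slack, whereas the paper assumes they are attained; this is a minor point of rigor, not a different argument.
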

\begin{proof}
 Let $T:U \to V$ be a linear operator with $\|T\|\|T^{-1}\| = d_{BM}(U,V)$.
Consider a $k$-point subset $Z\subset X$ and a Lipschitz map $f:Z\to U$.
Then $g = Tf$ is a Lipschitz map from $Z$ to $V$. Let $\tilde g$ be an extension
of $g$ to $X$ with $\|\tilde g\|_{Lip} \leq e_k(X,V)\|g\|_{Lip}$. Then 
$\tilde f = T^{-1} \tilde g$ is an extension of $f$ and
\begin{align*}
\|\tilde f\|_{Lip} &\leq \|T^{-1}\| \|\tilde g\|_{Lip}\leq 
\|T^{-1}\| \cdot e_k(X,V)\cdot \|g\|_{Lip}\\
&\leq  \|T^{-1}\| \cdot e_k(X,V) \cdot \|T\| \|f\|_{Lip} 
= e_k(X,V) d_{BM}(U,V) \|f\|_{Lip}.
\end{align*}
\end{proof}

\begin{proof}[Proof of Theorem~\ref{thm:boundQcut}]
Fix numbers  $m>0$ and $d=m^2$. Let $\calS\subset \bbR^d$ be the set
of all vectors in $\{-1,0,1\}^d$ having exactly $m$ nonzero coordinates.
Let $f_1,\dots, f_d$ be functions from $\calS$ to $\bbR$ defined as $f_i(S) = S_i$ 
($S_i$ is the $i$-th coordinate of $S$). These functions belong to the space 
$V = L_1(\calS,\mu)$ (where $\mu$ is the counting measure on $\calS$). 
The space $V$ is equipped with the $L_1$ norm
$$\|f\|_1 = \sum_{S\in \calS} |f(S)|;$$
and the inner product $$\langle f, g\rangle = \sum_{S\in \calS} f(S)g(S).$$
The set of indicator functions $\{e_S\}_{S\in\calS}$
$$e_S(A) = \begin{cases}1,&\text{if } A=S;\\0,&\text{otherwise}\end{cases}$$
is the standard basis in $V$.

Let $L\subset V$ be the subspace spanned by $f_1,\dots, f_d$. We prove that
the norm of the orthogonal projection operator $P^{\perp}:V\to L$ is at least $\Omega(\sqrt{d})$ and then using symmetrization show 
that $P^{\perp}$ has the smallest norm among
all linear projections. This approach is analogues to the approach 
of~\citeasnoun{Grunbaum}.

All functions $f_i$ are orthogonal and $\|f_i\|^2_2 = |\calS|/m$ (since for a random $S\in\calS$, 
$\Prob{f_i(S)\in \{\pm 1\}} = 1/m$). We find the projection of an arbitrary basis vector $e_A$ (where $A\in \calS$) on $L$,
\begin{eqnarray*}
P^{\perp}(e_A) &=&
\sum_{i=1}^d \frac{\langle e_A, f_i\rangle}{\|f_i\|^2}f_i 
=
\sum_{i=1}^d \sum_{B\in \calS}\frac{\langle e_A, 
f_i\rangle}{\|f_i\|^2}\langle f_i,e_B\rangle e_B\\
&=& 
\frac{m}{|\calS|}\sum_{B\in \calS} 
\left(\sum_{i=1}^d \langle e_A, f_i\rangle\langle f_i,e_B\rangle\right) e_B.
\end{eqnarray*}
Hence,
\begin{equation}\label{eq:perpnorm}
\|P^{\perp}(e_A)\|_1 = 
\frac{m}{|\calS|}\sum_{B\in \calS} 
\left|\sum_{i=1}^d \langle e_A, f_i\rangle\langle f_i,e_B\rangle\right|.
\end{equation}
Notice, that 
$$\sum_{i=1}^d \langle e_A, f_i\rangle\langle f_i,e_B\rangle = \sum_{i=1}^d A_i B_i
= \langle A, B\rangle.$$ 
For a fixed $A\in \calS$ and a random (uniformly distributed) $B\in\calS$ the probability that $A$ and $B$ 
overlap by exactly one nonzero coordinate (and thus $|\langle A, B \rangle| =1$) is
at least $1/e$. Therefore (from~(\ref{eq:perpnorm})),
$$\|P^{\perp}(e_A)\|_1 \geq \Omega(m) = \Omega(\sqrt{d}),$$
and $\|P^{\perp}\|\geq \|P^{\perp}(e_A)\|_1/\|e_A\|_1 \geq \Omega(\sqrt{d})$.

We now consider an arbitrary linear projection $P:L\to V$. We shall prove that
$$\sum_{A\in \calS} \|P(e_A)\|_1 - \|P^{\perp}(e_A)\|_1 \geq 0,$$
and hence for some $e_A$, $\|P(e_A)\|_1 \geq \|P^{\perp}(e_A)\|_1\geq \Omega(\sqrt{d})$.
Let $\sigma_{AB} = \sgn(\langle P^{\perp}(e_A), e_B\rangle) =
\sgn(\langle A, B \rangle)$. 
Then,
$$\|P^{\perp}(e_A)\|_1 = \sum_{B\in\calS}|\langle P^{\perp}(e_A), e_B\rangle| =
\sum_{B\in\calS}\sigma_{AB}\langle P^{\perp}(e_A), e_B\rangle,$$
and, since $\sigma_{AB}\in [-1,1]$,
$$\|P(e_A)\|_1 = \sum_{B\in\calS}|\langle P(e_A), e_B\rangle| \geq
\sum_{B\in\calS}\sigma_{AB}\langle P(e_A), e_B\rangle.$$
Therefore,
$$\sum_{A\in \calS} \|P(e_A)\|_1 - \|P^{\perp}(e_A)\|_1
\geq \sum_{A\in \calS}\sum_{B\in\calS}\sigma_{AB}\langle P(e_A) - P^{\perp}(e_A), e_B\rangle.$$
Represent operator $P$ as the sum
$$P(g) = P^{\perp}(g) + \sum_{i=1}^d \psi_i(g) f_i,$$
where $\psi_i$ are linear functionals\footnote{The explicit expression for $\psi_i$ is as follows $\psi_i (g) = \langle P(g) - P^{\perp}(g), f_i \rangle/\|f_i\|^2$.} with $\ker \psi_i \supset L$.
We get
\begin{eqnarray*}
\sum_{A\in \calS}\sum_{B\in\calS}\sigma_{AB}\langle P(e_A) - P^{\perp}(e_A), e_B\rangle
&=& 
\sum_{A\in \calS}\sum_{B\in\calS}\sigma_{AB}\langle \sum_{i=1}^d \psi_i(e_A) f_i, e_B\rangle\\
&=& 
\sum_{i=1}^d
\psi_i\left(\sum_{A\in \calS} \sum_{B\in\calS} \sigma_{AB} \langle  e_B, f_i\rangle e_A\right).
\end{eqnarray*}
We now want to show that each vector 
$$
g_i = \sum_{A\in \calS} \sum_{B\in\calS} \sigma_{AB} \langle  e_B, f_i\rangle e_A
$$
is collinear with $f_i$, and thus $g_i\in L\subset \ker \psi_i$ and $\psi_i(g_i)=0$.
We need to compute $g_i(S)$ for every $S\in \calS$,
$$g_i(S) = \sum_{A\in \calS} \sum_{B\in\calS} 
\sigma_{AB} \langle  e_B, f_i\rangle e_A(S)=
\sum_{B\in\calS} \sigma_{SB} B_i,$$
we used that $e_A(S) = 1$ if $A=S$, and $e_A(S) = 0$ otherwise.
We consider a group $H \cong \mathbb{S}_d \ltimes \mathbb{Z}_2^d$ of symmetries
of $\calS$. The elements of $H$ are pairs $h = (\pi,\delta)$, where each $\pi \in {\mathbb S}_d$ is 
a permutation on $\{1,\dots, d\}$, and each $\delta\in \{-1,1\}^d$.
The group acts on $\calS$ as follows: it first permutes the coordinates of every 
vector $S$ according to $\pi$ and then changes the signs of the $j$-th coordinate
if $\delta_j = -1$ i.e.,
$$h: S = (S_1,\dots, S_d)\mapsto hS = (\delta_1 S_{\pi^{-1}(1)},\dots, \delta_d S_{\pi^{-1}(d)}).$$
The action of $G$ preserves the inner product between $A,B\in \calS$ i.e.,
$\langle hA, hB\rangle  = \langle A,B\rangle$ and thus $\sigma_{(hA)(hB)}=\sigma_{AB}$.
It is also transitive. Moreover, for every $S, S'\in \calS$, if $S_i = S'_i$,
then there exists $h\in G$ that maps $S$ to $S'$, but does not change 
the $i$-th coordinate (i.e., $\pi(i) = i$ and $\delta_i = 1$). 
Hence, if $S_i=S'_i$, then for some $h$
$$g_i(S') = g_i(hS) = 
\sum_{B\in\calS} \sigma_{(hS)B} B_i=
\sum_{B\in\calS} \sigma_{(hS)(hB)} (hB)_i
= \sum_{B\in\calS} \sigma_{SB} (hB)_i
= \sum_{B\in\calS} \sigma_{SB} B_i = g_i(S)
.$$
On the other hand,  $g_i(S) = -g_i(-S)$. Thus, if $S_i=-S'_i$, then 
$g_i(S) = -g_i(S')$. Therefore, 
$g_i(S) = \lambda S_i$ for some $\lambda$, and $g_i = \lambda f_i$.
This finishes the prove that $\|P\|\geq \Omega(\sqrt{d})$.

We now estimate the Banach--Mazur distance from $\ell_1^d$ to $L$.

\begin{lemma}
We say that a basis $f_1, \dots, f_d$ of a normed space $(L,\|\cdot\|_L)$ is symmetric if 
the norm of vectors in $L$  does not depend on the order and signs of coordinates in this basis:
$$\Bigl\|\sum_{i=1}^d c_i f_i\Bigr\|_L = \Bigl\|\sum_{i=1}^d\delta_i c_{\pi(i)} f_i\Bigr\|_L,$$
for every $c_1,\dots, c_d\in \bbR$, $\delta_1,\dots,\delta_d\in\{\pm1\}$ and $\pi \in {\mathbb S}_d$.

Let $f_1,\dots, f_d$ be a symmetric basis. Then 
$$d_{BM}(L, \ell_1^d) \leq \frac{d\|f_1\|_L}{\|f_1 + \dots + f_d\|_L}.$$ 
\end{lemma}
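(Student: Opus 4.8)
The plan is to bound the Banach--Mazur distance by estimating the operator norm of the natural coordinate isomorphism between $L$ and $\ell_1^d$ determined by the symmetric basis. Let $S:L\to\ell_1^d$ be the linear isomorphism $S\bigl(\sum_{i=1}^d c_i f_i\bigr) = (c_1,\dots,c_d)$. By the definition of $d_{BM}$ it suffices to show that $\|S\|_{L\to\ell_1^d} \le d/\|f_1+\dots+f_d\|_L$ and $\|S^{-1}\|_{\ell_1^d\to L} \le \|f_1\|_L$; multiplying these two bounds then yields $d_{BM}(L,\ell_1^d) \le d\|f_1\|_L/\|f_1+\dots+f_d\|_L$.

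The bound on $\|S^{-1}\|$ is immediate. Applying the symmetry identity to the vector $e_1$ with a permutation taking $i$ to $1$ shows $\|f_i\|_L = \|f_1\|_L$ for every $i$, so for $c = (c_1,\dots,c_d)$ the triangle inequality gives $\|S^{-1}c\|_L = \bigl\|\sum_i c_i f_i\bigr\|_L \le \sum_i |c_i|\,\|f_i\|_L = \|f_1\|_L\,\|c\|_1$.

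The crux is the bound on $\|S\|$, which I would obtain by a symmetrization (averaging) argument. Fix a nonzero $x = \sum_i c_i f_i \in L$; we must show $\sum_i |c_i| \le \bigl(d/\|f_1+\dots+f_d\|_L\bigr)\,\|x\|_L$. By sign symmetry we may assume $c_i \ge 0$ for all $i$, since flipping the signs of coordinates in the symmetric basis does not change $\|x\|_L$. Averaging over all permutations $\pi\in{\mathbb S}_d$ and using the triangle inequality together with permutation invariance of the norm,
$$
\Bigl\|\tfrac{1}{d!}\sum_{\pi\in{\mathbb S}_d}\sum_i c_{\pi(i)}f_i\Bigr\|_L \;\le\; \tfrac{1}{d!}\sum_{\pi}\Bigl\|\sum_i c_{\pi(i)}f_i\Bigr\|_L \;=\; \|x\|_L .
$$
For each fixed $i$ the index $\pi(i)$ takes every value in $\{1,\dots,d\}$ equally often as $\pi$ ranges over ${\mathbb S}_d$, so $\tfrac{1}{d!}\sum_\pi c_{\pi(i)} = \tfrac1d\sum_j c_j$, and hence the averaged vector on the left equals $\bigl(\tfrac1d\sum_j c_j\bigr)(f_1+\dots+f_d)$. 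Combining, $\tfrac1d\bigl(\sum_j c_j\bigr)\|f_1+\dots+f_d\|_L \le \|x\|_L$, which is exactly the desired inequality (recall $c_j \ge 0$, so $\sum_j c_j = \sum_j |c_j|$). Thus $\|S\| \le d/\|f_1+\dots+f_d\|_L$, which completes the proof.

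I expect the only point requiring care to be bookkeeping of which instances of the symmetry hypothesis are invoked — sign flips to reduce to nonnegative coefficients, and arbitrary permutations for the averaging step — but none of the estimates is delicate. The heart of the argument is the single observation that averaging a symmetric basis vector over all coordinate permutations yields a scalar multiple of $f_1+\dots+f_d$ without increasing the norm.
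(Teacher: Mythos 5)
Your proof is correct and follows essentially the same route as the paper's: the coordinate isomorphism (your $S$ is the inverse of the paper's $T(\eta_i)=f_i$), the triangle-inequality bound $\|S^{-1}\|\le\|f_1\|_L$, and the symmetrization over ${\mathbb S}_d$ after reducing to nonnegative coefficients via sign flips. The only difference is cosmetic (you make explicit the step $\|f_i\|_L=\|f_1\|_L$, which the paper leaves implicit).
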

\begin{proof}
Denote by $\eta_1,\dots \eta_d$ the standard basis of $\ell_1^d$. Define a linear operator 
$T:\ell_1^d \to L$ as $T(\eta_i) = f_i$. Then $d_{BM}(L,\ell_1^d) \leq \|T\|\cdot\|T^{-1}\|$.
We have,
\begin{eqnarray*}
\|T\| &=& \max_{c\in \ell_1^d: \|c\|_1=1} 
\|T(c_1\eta_1 + \dots + c_d\eta_d)\|_L \leq 
\max_{c\in \ell_1^d: \|c\|_1=1} (\|T(c_1\eta_1)\|_L + \dots + \|T(c_d\eta_d)\|_L)\\
&=& \max_{i} \|T(\eta_i)\|_L = \max_{i} \|f_i\|_L = \|f_1\|_L. 
\end{eqnarray*}
On the other hand,
\begin{eqnarray*}
(\|T^{-1}\|)^{-1} &=& 
\min_{c\in \ell_1^d: \|c\|_1=1} \|T^{-1}(c_1\eta_1 + \dots + c_d\eta_d)\|_L =
\min_{c\in \ell_1^d: \|c\|_1=1} \|c_1 f_1 + \dots + c_d f_d\|_L.
\end{eqnarray*}
Since the basis $f_1,\dots, f_d$ is symmetric, we can assume that all $c_i\geq 0$.
We have,
$$\Bigl\|\sum_{i=1}^d c_i f_i\Bigr\|_L = \Exp_{\pi\in {\mathbb S}_d}
\Bigl\|\sum_{i=1}^d c_{\pi(i)} f_i\Bigr\|_L \geq \Bigl\|\Exp_{\pi\in {\mathbb S}_d}
\sum_{i=1}^d c_{\pi(i)} f_i\Bigr\|_L = \Bigl\|\frac{1}{d}\sum_{i=1}^d f_i\Bigr\|_L.$$
\end{proof}
We apply this lemma to the space $L$ and basis $f_1,\dots, f_d$.
Note that $\|f_i\|_1 = |\calS|/m$ and
$$\|f_1 + \dots + f_d\|_1 = \sum_{S\in \calS}\Bigl|\sum_{i=1}^d S_i\Bigr|.$$
Pick a random $S\in \calS$. Its $m$ nonzero coordinates distributed
according to the Bernoulli distribution, thus $\Bigl|\sum_i S_i\Bigr|$ equals in
expectation $\Omega(\sqrt{m})$ and therefore the Banach--Mazur distance between
$\ell_1^d$ and $L$ equals 
$$d_{BM}(L,\ell_1^d) 
=  O\left(d \times \frac{|\calS|}{m} \times \frac{1}{\sqrt{m}|\calS|}\right)
= O(\sqrt[4]{d}).$$
\end{proof}

\subsection{Conditional Upper Bound and Open Question of Ball}\label{subsec:cond-upper-bound}
We show that if Question~\ref{qst:ball} (see page \pageref{qst:ball}) has a positive answer then there exist $\tilde O(\sqrt{\log k})$-quality cut sparsifiers.

\begin{theorem}
$$Q_k^{cut} = e_k(\ell_1, \ell_1) \leq O(e(\ell_2, \ell_1) \sqrt{\log k} \log\log k).$$
\end{theorem}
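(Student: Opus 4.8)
The plan is to deduce the bound from Theorem~\ref{thm:extendcut} together with a good Euclidean embedding of $\ell_1$-metrics. By Theorem~\ref{thm:extendcut} we already have $Q_k^{cut}=e_k(\ell_1,\ell_1)$, so everything reduces to proving $e_k(\ell_1,\ell_1)\le O(e(\ell_2,\ell_1)\,D_k)$, where $D_k=O(\sqrt{\log k}\,\log\log k)$ is the worst-case Euclidean distortion of a $k$-point subspace of $\ell_1$ (the embedding theorem of Arora, Lee and Naor, which I use as a black box). Fix a $k$-point set $Z\subset\ell_1$ and a $1$-Lipschitz map $f:Z\to\ell_1$; by the compactness argument (Theorem~\ref{thm:compactness}) it suffices to extend $f$, on every finite $V$ with $Z\subset V\subset\ell_1$, to a map of Lipschitz norm $O(e(\ell_2,\ell_1)D_k)$.

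The construction I would aim for routes the extension through $\ell_2$ so that the factor $e(\ell_2,\ell_1)$ and the distortion $D_k$ are each paid exactly once. Embed $Z$ into $\ell_2$ by a non-expanding Arora--Lee--Naor embedding $j:Z\to\ell_2$, so that $\|j(p)-j(q)\|_2\ge\|p-q\|_1/D_k$ for $p,q\in Z$. Then $f\circ j^{-1}$ is a well-defined $D_k$-Lipschitz map from the $k$-point set $j(Z)\subset\ell_2$ into $\ell_1$ (since $\|f(p)-f(q)\|_1\le\|p-q\|_1\le D_k\|j(p)-j(q)\|_2$), so by definition of $e_k(\ell_2,\ell_1)$ it extends to $\tilde g:\ell_2\to\ell_1$ with $\|\tilde g\|_{Lip}\le e_k(\ell_2,\ell_1)\,D_k\le e(\ell_2,\ell_1)\,D_k$. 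If one could push $j$ to a map $j_V:V\to\ell_2$ with $j_V|_Z=j$ and $\|j_V\|_{Lip}=O(1)$, then $\tilde f:=\tilde g\circ j_V$ would extend $f$ with $\|\tilde f\|_{Lip}\le O(e(\ell_2,\ell_1)D_k)=O(e(\ell_2,\ell_1)\sqrt{\log k}\log\log k)$, and substituting into $Q_k^{cut}=e_k(\ell_1,\ell_1)$ finishes the proof.

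The real difficulty is exactly the step of extending the Euclidean embedding: a naive $1$-Lipschitz extension of $j$ from $Z$ to $V$ into $\ell_2$ costs an extra $e_{|V|}(\ell_1,\ell_2)=\Omega(\sqrt{\log|V|})$ factor, which is unbounded and ruins the estimate (and for adversarial $V$ one genuinely cannot keep $j_V$ both $O(1)$-Lipschitz and $O(D_k)$-distorting on $Z$). To get around this I would not argue by composing globally defined maps, but instead re-run the cone/minimax argument of Theorem~\ref{thm:extendcut}, arranging that the only sets ever embedded into $\ell_2$ have $k$ points. Concretely, reduce to building an $O(e(\ell_2,\ell_1)D_k)$-quality cut sparsifier for $G=(V,\alpha)$ with terminal set $U$, $|U|=k$; as there, it suffices to rule out a bad pair $(d_1,x_1),(d_2,x_2)\in{\cal C}$ with $x_1=Qx_2$ and, for all $p,q\in U$, either $d_1(p,q)<d_2(p,q)$ or $d_1(p,q)=d_2(p,q)=0$. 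Write $d_1=\sum_S\mu_1^S\delta_S$ and $d_2=\sum_S\mu_2^S\delta_S$ by Carath\'eodory (at most $\binom k2+2$ cuts each), let $f_1$ be the isometric $\ell_1$-embedding of $(U,d_1)$ built from the cuts of $d_1$, and let $\iota:U\to\ell_2$ be an Arora--Lee--Naor embedding of the $k$-point metric $(U,d_2)$ with distortion $D_k$; then $f_1\circ\iota^{-1}$ is strictly $D_k$-contracting on $\iota(U)\subset\ell_2$, and extending it via $e_k(\ell_2,\ell_1)$ gives $G:\ell_2\to\ell_1$ with $\|G\|_{Lip}\le e(\ell_2,\ell_1)D_k$. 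Reading off, for each cut $S$ of $d_1$, the extension of $\delta_S$ to $V$ from the corresponding coordinate of $G\circ\bar\iota$ (for an extension $\bar\iota$ of $\iota$ to $V$), its total $\alpha$-cost is at least $x_1$ and at most $\|G\|_{Lip}$ times the $\alpha$-weighted sum of $\|\bar\iota(i)-\bar\iota(j)\|_2$. The point that still requires care — and which I expect to be the technical heart — is to choose an extension metric $d_2^{*}$ of $d_2$ to $V$ with $\alpha(d_2^{*})=x_2$ and an extension $\bar\iota:(V,d_2^{*})\to\ell_2$ of $\iota$ of Lipschitz constant $O(1)$; here I would exploit that $d_2^{*}$ may be taken to be a nonnegative combination of at most $\binom k2+2$ cut metrics each restricting on $U$ to one of the cuts of $d_2$, so that $\iota$ only needs to be extended across a very structured, low-complexity $\ell_1$-metric rather than an arbitrary one. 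Granting this, $x_1\le O(e(\ell_2,\ell_1)D_k)\,x_2<Qx_2$ for $Q$ a suitable constant multiple of $e(\ell_2,\ell_1)D_k$, which is the required contradiction.
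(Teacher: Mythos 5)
You have correctly identified both the overall strategy (factorization through $\ell_2$, paying $e(\ell_2,\ell_1)$ once and the Euclidean distortion $D_k=O(\sqrt{\log k}\,\log\log k)$ once) and the precise obstruction: the Euclidean embedding of the $k$-point set must be extended to the ambient space with Lipschitz constant $O(1)$, and a generic extension into $\ell_2$ costs an extra unbounded (or at best $\Omega(\sqrt{\log k/\log\log k})$, by the Johnson--Lindenstrauss lower bound on $e_k(\ell_1,\ell_2)$ quoted in Section~\ref{subsec:proj-const}) factor. But your proposed workaround does not close this gap. In the cone/minimax variant you still need an extension $\bar\iota:(V,d_2^{*})\to\ell_2$ of $\iota$ with $\|\bar\iota\|_{Lip}=O(1)$, and you explicitly write ``granting this.'' The structural observation that $d_2^{*}$ is a nonnegative combination of $\binom{k}{2}+2$ cut metrics does not help: such a $(V,d_2^{*})$ is still an essentially arbitrary finite $\ell_1$-metric, and $\iota$ is an arbitrary low-distortion Euclidean embedding of its $k$-point restriction, so you are facing exactly the instance of the $\ell_1\to\ell_2$ extension problem that is obstructed by the lower bound. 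Nothing in your argument exploits any special compatibility between $\iota$ and $d_2^{*}$, so the ``technical heart'' is not a technicality --- it is the whole difficulty, and as stated it would at best reintroduce a factor of $e_k(\ell_1,\ell_2)=\Theta(\sqrt{\log k})$ up to $\log\log$ terms, ruining the bound.

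The missing idea, which is how the paper proceeds, is that the Arora--Lee--Naor embedding should not be used as a black-box ``distortion-$D_k$ embedding of a $k$-point set'': it is a \emph{Fr\'echet-type} embedding $x\mapsto (d(x,A))_{A\sim\mu}$ into $L_2(\mu)$, where $\mu$ is a distribution over subsets $A$ of the $k$-point set $U$. The formula $d(x,A)=\min_{a\in A}\|x-a\|_1$ is defined for \emph{every} $x\in\ell_1$, and the triangle inequality gives $|d(x,A)-d(y,A)|\le\|x-y\|_1$ pointwise, hence the map $g:\ell_1\to L_2(\mu)$ is globally $1$-Lipschitz with no extension step at all; the ALN guarantee is only needed to lower-bound $\|g(p)-g(q)\|_{L_2(\mu)}$ on $U$, i.e.\ to bound $\|g^{-1}\|_{Lip}$ on $g(U)$ by $O(\sqrt{\log k}\log\log k)$. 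One then sets $h=f\circ g^{-1}$ on $g(U)$, extends it to $\tilde h:L_2(\mu)\to\ell_1$ paying $e_k(\ell_2,\ell_1)$, and takes $\tilde f=\tilde h\circ g$. With this substitution your first (global composition) paragraph already works verbatim and the cone/minimax detour is unnecessary.
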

\begin{proof}
We show how to extend a map $f$ that maps a $k$-point subset $U$ of $\ell_1$ to $\ell_1$ to a map $\tilde f:\ell_1\to \ell_1$
via factorization through $\ell_2$. In our proof, we use 
a low distortion Fr\'echet embedding of a subset of $\ell_1$ into $\ell_2$
constructed by \citeasnoun{ALN}:
\begin{theorem}[\citeasnoun{ALN}, Theorem 1.1]\label{thm:ALN}
Let $(U, d)$ be a $k$-point subspace 
of $\ell_1$. Then there exists a probability measure $\mu$
over random non-empty subsets $A\subset U$ such that  for every $x,y \in U$
$$\Exp_\mu[|d(x,A) - d(y,A)|^2]^{1/2} = 
\Omega\left(\frac{d(x,y)}{\sqrt{\log k} \log\log k}\right).$$
\end{theorem}
We apply this theorem to the set $U$ with $d(x,y) = \|x-y\|_1$. We get a probability distribution $\mu$ of sets $A$.
Let $g$ be the map that maps each $x\in \ell_1$ to the random variable $d(x,A)$ in $L_2(\mu)$.
Since for every $x$ and $y$ in $\ell_1$,
$\Exp_\mu[|d(x,A) - d(y,A)|^2]^{1/2} \leq \Exp_\mu[\|x-y\|_1^2]^{1/2} = \|x-y\|_1$,
the map $g$ is a 1-Lipschitz map from $\ell_1$ to $L_2(\mu)$. On the other hand, 
Theorem~\ref{thm:ALN} guarantees that
the Lipschitz constant of $g^{-1}$ restricted to $g(U)$ is at most 
$O(\sqrt{\log k}\log\log k)$.

$$
\xymatrix @C=4.7pc {
U \ar@{^{(}->}[d]^{\subset} \ar[r]^g \ar@/^1.3pc/[rr]^f &g(U)\ar@{^{(}->}[d]^{\subset} \ar[r]^h &\ell_1\ar@{=}[d]\\
\ell_1 \ar[r]^g \ar@/_1.2pc/[rr]_{\tilde f} &L_2(\mu) \ar[r]^{\tilde h}  &\ell_1
}
$$

Now we define a map $h: g(U) \to \ell_1$ as $h(y) = f(g^{-1}(y))$.
The Lipschitz constant of $h$ is at most 
$\|f\|_{Lip} \|g^{-1}\|_{Lip} = O(\sqrt{\log k}\log\log k)$. We extend $h$ to a map $\tilde h: L_2(\mu) \to \ell_1$ such that 
$\|\tilde h\|_{Lip} \leq e_k(\ell_2,\ell_1) \|h\|_{Lip} = 
O(e_k(\ell_2,\ell_1) \sqrt{\log k}\log\log k).$
We finally define $\tilde f (x) = \tilde h(g(x))$.
For every $p\in U$, $\tilde f(p) =  \tilde h(g(p)) = h(g(p)) = f(p)$;
$\|\tilde f\|_{Lip} \leq \|\tilde h\|_{Lip} \|g\|_{Lip} = O(e_k(\ell_2,\ell_1) \sqrt{\log k}\log\log k)$.
This concludes the proof.
\end{proof}

\begin{corollary}
If Question~\ref{qst:ball} has a positive answer then there exist $\tilde O(\sqrt{\log k})$
cut sparsifiers. On the other hand, any lower bound on cut sparsifiers better
than $\tilde \Omega(\sqrt{\log k})$ would imply a negative answer to Question~\ref{qst:ball}.
\end{corollary}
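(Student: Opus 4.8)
The plan is to obtain both halves of the Corollary as immediate consequences of the Theorem just established, namely
$$Q_k^{cut} = e_k(\ell_1,\ell_1) \le O\bigl(e_k(\ell_2,\ell_1)\,\sqrt{\log k}\,\log\log k\bigr).$$

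For the first assertion I would assume that Question~\ref{qst:ball} has a positive answer, so that $e_k(\ell_2,\ell_1) \le C$ for some absolute constant $C$ independent of $k$. Substituting this into the displayed inequality yields $Q_k^{cut} \le O\bigl(C\sqrt{\log k}\,\log\log k\bigr) = \tilde O(\sqrt{\log k})$. By the definition of $Q_k^{cut}$ (Section~\ref{subsec:qual-exten}), this is precisely the statement that every weighted graph $G=(V,\alpha)$ together with a distinguished $k$-vertex set $U\subset V$ admits a cut sparsifier $H=(U,\beta)$ of quality $\tilde O(\sqrt{\log k})$.

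The second assertion is the contrapositive of the first. Suppose one establishes a lower bound better than $\tilde\Omega(\sqrt{\log k})$, i.e., $Q_k^{cut}$ is \emph{not} $\tilde O(\sqrt{\log k})$ (it asymptotically exceeds $\sqrt{\log k}$ times every fixed power of $\log\log k$). If Question~\ref{qst:ball} had a positive answer, the previous paragraph would force $Q_k^{cut} = \tilde O(\sqrt{\log k})$, a contradiction; hence $e_k(\ell_2,\ell_1)$ must be unbounded in $k$, so Question~\ref{qst:ball} has a negative answer.

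I do not expect a genuine obstacle here: all the substance is already contained in the Theorem, whose proof factors a Lipschitz map $f\colon U\to\ell_1$ through $L_2(\mu)$ via the Arora--Lee--Naor embedding (Theorem~\ref{thm:ALN}) together with the identity $Q_k^{cut} = e_k(\ell_1,\ell_1)$ from Theorem~\ref{thm:extendcut}. The only point needing a little care is the bookkeeping of the tilde notation: the $\log\log k$ loss from the ALN embedding and the further $\log\log k$ in the Lipschitz constant of the inverse map are both absorbed into $\tilde O$, so that ``better than $\tilde\Omega(\sqrt{\log k})$'' is exactly the negation of the conclusion ``$\tilde O(\sqrt{\log k})$'' obtained under a positive answer to Ball's question, which is what makes the two directions genuinely dual.
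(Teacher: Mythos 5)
Your proposal is correct and matches the paper's (implicit) argument: the corollary is stated as an immediate consequence of the preceding theorem $Q_k^{cut} = e_k(\ell_1,\ell_1) \le O(e_k(\ell_2,\ell_1)\sqrt{\log k}\log\log k)$, obtained exactly by substituting a uniform bound on $e_k(\ell_2,\ell_1)$ for the first direction and taking the contrapositive for the second. Nothing further is needed.
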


\begin{remark} There are no pairs of Banach spaces $(X,Y)$ for which
$e_k(X,Y)$ is known to be greater than $\omega(\sqrt{\log k})$ (see 
e.g. \citeasnoun{LN}).
If indeed $e_k(X,Y)$ is always $O(\sqrt{\log k})$ then there
exist $O(\sqrt{\log k})$-quality metric sparsifiers.
\end{remark}

\section{Certificates for Quality of Sparsification} \label{sec:discussion}
In this section, we show that there exist
``combinatorial certificates'' for cut and metric sparsification that certify 
that $Q_k^{cut} \geq Q$ and $Q_k^{metric} \geq Q$.

\begin{definition}
A $(Q,k)$-certificate for cut sparsification is a tuple $(G, U, \mu_1, \mu_2)$ where $G = (V,\alpha)$
is a graph (with non-negative edge weights $\alpha$), $U\subset V$ is a subset of $k$ terminals, 
and $\mu_1$ and $\mu_2$ are distributions of cuts on $G$ such that for some (``scale'') $c > 0$
\begin{align*}
\Probb{S\sim\mu_1}{p\in S, q\notin S} &\leq c \Probb{S\sim \mu_2}{p\in S, q\notin S} &\forall p,q\in U,\\
\Exp_{S\sim\mu_1} \minext_{U \to V}(\delta_S,\alpha) &\geq c \cdot Q\cdot \Exp_{S\sim\mu_2} \minext_{U \to V}(\delta_S,\alpha) > 0,
\end{align*}
where $\minext_{U \to V}(\delta_S,\alpha)$ is the cost of the minimum cut in $G$ that separates $S$ and $U\setminus S$
(w.r.t. to edge weights $\alpha$).

Similarly, a $(Q,k)$-certificate for metric sparsification is a tuple $(G, U, \{d_i\}_{i=1}^{m_1})$
where $G = (V,\alpha)$ is a graph (with non-negative edge weights $\alpha$), $U\subset V$ is a subset of $k$ terminals, 
and $\{d_i\}_{i=1}^{m}$ is a family of metrics on $U$ such that
$$
\sum_{i=1}^m\minext_{U \to V}(d_i,\alpha) \geq Q \minext_{U \to V}\Bigl(\sum_{i=1}^m d_i,\alpha\Bigr)  > 0.
$$
\end{definition}

\begin{theorem}
If there exists a $(Q,k)$-certificate for cut or metric sparsification, then $Q_k^{cut} \geq Q$ or $Q_k^{metric} \geq Q$, respectively.
For every $k$, there exist $(Q_k^{cut},k)$-certificate for cut sparsification, and $(Q_k^{metric}-\varepsilon,k)$-certificate
for metric sparsification (for every $\varepsilon > 0$).
\end{theorem}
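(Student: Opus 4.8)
The plan is to prove the easy ``soundness'' direction by testing an optimal sparsifier against the certificate, and the ``completeness'' direction by dualizing the linear program that computes the best sparsifier of the extremal instance.

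\emph{Soundness.} Let $(G,U,\mu_1,\mu_2)$ be a $(Q,k)$-certificate for cut sparsification and let $\beta$ be any $Q'$-quality cut sparsifier of $G$. Since $\beta(\delta_S)=\sum_{p,q\in U}\beta_{pq}\delta_S(p,q)$ and all $\beta_{pq}\ge 0$, the first certificate inequality (used for both orderings of each pair) gives $\Exp_{S\sim\mu_1}\beta(\delta_S)\le c\,\Exp_{S\sim\mu_2}\beta(\delta_S)$, while the sparsifier bounds give $\Exp_{S\sim\mu_1}\minext_{U\to V}(\delta_S,\alpha)\le\Exp_{S\sim\mu_1}\beta(\delta_S)$ and $\Exp_{S\sim\mu_2}\beta(\delta_S)\le Q'\,\Exp_{S\sim\mu_2}\minext_{U\to V}(\delta_S,\alpha)$. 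Chaining these with the second certificate inequality $\Exp_{\mu_1}\minext\ge cQ\,\Exp_{\mu_2}\minext>0$ yields $Q\le Q'$, so $Q_k^{cut}\ge Q$. For a metric certificate $(G,U,\{d_i\})$ and a $Q'$-quality metric sparsifier $\beta$, linearity of $\beta$ gives $\sum_i\minext_{U\to V}(d_i,\alpha)\le\sum_i\beta(d_i)=\beta(\sum_i d_i)\le Q'\minext_{U\to V}(\sum_i d_i,\alpha)$, which combined with $\sum_i\minext(d_i,\alpha)\ge Q\minext(\sum_i d_i,\alpha)>0$ gives $Q\le Q'$, hence $Q_k^{metric}\ge Q$.

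\emph{Existence of a cut certificate.} Write $Q^{\star}_G$ for the optimal cut-sparsifier quality of $G$; then $Q_k^{cut}=\sup_G Q^{\star}_G$, and this supremum is attained on a finite graph (by the remark following Theorem~\ref{thm:extendmetric}, $Q_k^{cut}=e_k(\ell_1^M,\ell_1^N)$ for finite $M,N$, which is attained on a finite configuration by compactness, and the proof of Theorem~\ref{thm:extendcut} then produces a finite extremal graph). Fix such a $G=(V,\alpha)$, $|U|=k$. Then $Q^{\star}_G$ is the value of the finite linear program: minimize $Q$ over $(Q,\beta)$ subject to $\beta_{pq}\ge 0$, $\beta(\delta_S)\ge\minext_{U\to V}(\delta_S,\alpha)$ and $\beta(\delta_S)\le Q\,\minext_{U\to V}(\delta_S,\alpha)$ for each of the $2^{k-1}$ cuts $S$ of $U$. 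I would take an optimal dual solution, with multipliers $\mu_1(S)\ge 0$, $\mu_2(S)\ge 0$ on the lower and upper constraints and $\gamma_{pq}\ge 0$ on the sign constraints. Dual feasibility reads $\sum_S\mu_2(S)\,\minext(\delta_S,\alpha)=1$ and, as vectors indexed by pairs, $\sum_S\mu_1(S)\delta_S+\gamma=\sum_S\mu_2(S)\delta_S$ (so $\sum_S\mu_1(S)\delta_S\le\sum_S\mu_2(S)\delta_S$ coordinatewise), and strong duality gives $\sum_S\mu_1(S)\,\minext(\delta_S,\alpha)=Q^{\star}_G$. A short check rules out $\mu_1\equiv 0$ and $\mu_2\equiv 0$, so normalizing to probability distributions $\mu_1',\mu_2'$ and setting $c=(\sum_S\mu_2(S))/(\sum_S\mu_1(S))$ turns the coordinatewise inequality into $\Pr_{\mu_1'}[p,q\ \text{separated}]\le c\,\Pr_{\mu_2'}[p,q\ \text{separated}]$ and the two scalar identities into $\Exp_{\mu_1'}\minext=c\,Q^{\star}_G\,\Exp_{\mu_2'}\minext>0$. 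Finally, replacing $\mu_1',\mu_2'$ by their symmetrizations under $S\mapsto U\setminus S$ (which leaves $\delta_S$, $\minext(\delta_S,\alpha)$ and all expectations unchanged and replaces each ``separated'' probability by the ordered-pair probability $\Pr[p\in S,q\notin S]$) yields a $(Q^{\star}_G,k)$-certificate, i.e., since $Q^{\star}_G=Q_k^{cut}$, a $(Q_k^{cut},k)$-certificate.

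\emph{Existence of a metric certificate.} Pick $G$ with optimal metric-sparsifier quality $Q^{\star}_G\ge Q_k^{metric}-\varepsilon$ (the supremum over graphs need not be attained here, which is the source of the $\varepsilon$). The new ingredient is the constraint reduction: ``$\beta(d)\ge\minext(d,\alpha)$ for all $d\in\calD_U$'' reduces, by subadditivity of $\minext$, to the finitely many extreme rays of $\calD_U$, while ``$\beta(d)\le Q\minext(d,\alpha)$ for all $d\in\calD_U$'' is equivalent to ``$\beta(d_X|_U)\le Q\,\alpha(d_X)$ for all $d_X\in\calD_V$'', which is linear in $d_X$ and so reduces to the extreme rays of $\calD_V$; also, by the Remark after the definition of metric vertex sparsifier, dropping the sign constraints on $\beta$ does not change the optimum. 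Thus $Q^{\star}_G$ is again a finite LP. Take an optimal dual solution with multipliers $\nu$ and $\rho$ on the lower and upper constraints: dual feasibility gives $\sum_{d_X}\rho(d_X)\alpha(d_X)=1$ and $\sum_d\nu(d)\,d=\sum_{d_X}\rho(d_X)\,d_X|_U=:D$, and strong duality gives $\sum_d\nu(d)\minext(d,\alpha)=Q^{\star}_G$. Absorbing the weights via positive homogeneity of $\minext$ produces a finite family $d_i=\nu(d^{(i)})\,d^{(i)}$ with $\sum_i d_i=D$ and $\sum_i\minext(d_i,\alpha)=Q^{\star}_G$. Since $D_X:=\sum_{d_X}\rho(d_X)d_X\in\calD_V$ restricts to $D$ on $U$ and has $\alpha(D_X)=1$, we get $\minext(D,\alpha)\le 1$; and since the optimal primal $\beta^{\star}$ is feasible, applying it to the minimum extension $D_X^{\star}$ of $D$ gives $Q^{\star}_G\,\minext(D,\alpha)=Q^{\star}_G\,\alpha(D_X^{\star})\ge\beta^{\star}(D)=\sum_d\nu(d)\beta^{\star}(d)\ge\sum_d\nu(d)\minext(d,\alpha)=Q^{\star}_G$, so $\minext(D,\alpha)\ge 1$. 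Hence $\minext(\sum_i d_i,\alpha)=1$ and $\sum_i\minext(d_i,\alpha)=Q^{\star}_G=Q^{\star}_G\,\minext(\sum_i d_i,\alpha)>0$, a $(Q^{\star}_G,k)$-certificate, which is in particular a $(Q_k^{metric}-\varepsilon,k)$-certificate.

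\emph{Main obstacle.} I expect the difficulty to be bookkeeping rather than a new idea: massaging the LP dual into the exact shape of the certificate definitions (the symmetrization converting ``separated'' probabilities into ordered-pair probabilities; checking non-degeneracy of the dual multipliers so the scale $c$ and the normalizations are well defined), verifying the finitely-many-constraints reductions in the metric LP, and—for the cut statement, which must be exact—pinning down that $Q_k^{cut}=\sup_G Q^{\star}_G$ is attained, for which I would lean on the finite-dimensional identity $Q_k^{cut}=e_k(\ell_1^M,\ell_1^N)$ from the remark after Theorem~\ref{thm:extendmetric} together with a compactness argument.
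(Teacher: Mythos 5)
Your proof is correct and follows essentially the same route as the paper: the soundness half is the same chaining of the certificate inequalities through an optimal sparsifier, and the existence half is exactly the "duality arguments" the paper invokes but omits (your explicit finite LP dual, with the symmetrization over $S\mapsto U\setminus S$ for cuts and the reduction to extreme rays of $\calD_U$ and $\calD_V$ for metrics, is a correct instantiation of the minimax/Carath\'eodory arguments in Theorems~\ref{thm:extendcut} and~\ref{thm:extendmetric}). The only point you rightly flag as delicate --- that $\sup_G Q^{\star}_G$ is attained in the cut case, so that no $\varepsilon$ is lost there --- is also left implicit by the paper itself.
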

\begin{proof}
Let $(G, U, \mu_1, \mu_2)$ be a $(Q,k)$-certificate for cut sparsification. Let $(U,\beta)$ be a $Q_k^{cut}$-quality cut sparsifier for $G$.
Then 
\begin{align*}
\Exp_{S\sim\mu_1} & \minext_{U \to V}(\delta_S,\alpha) 
\leq \Exp_{S\sim\mu_1} \sum_{p\in S, q\in U\setminus S}\beta_{pq}
= \sum_{p,q\in U}\beta_{pq} \Probb{S\sim\mu_1}{p\in S, q\in U\setminus S} \\
&\leq 
c \sum_{p,q\in U}\beta_{pq} \Probb{S\sim\mu_2}{p\in S, q\in U\setminus S} =
\Exp_{S\sim\mu_2} c\sum_{p\in S, q\in U\setminus S}\beta_{pq} \leq c\cdot Q_k^{cut}\cdot 
\Exp_{S\sim\mu_2} \minext_{U \to V}(\delta_S,\alpha).
\end{align*}
Therefore, $Q_k^{cut} \geq Q$.

Now, let $(G, U, \{d_i\}_{i=1}^{m})$ be a $(Q,k)$-certificate for metric sparsification. Let 
$(U,\beta)$ be a $Q_k^{metric}$-quality metric sparsifier for $G$. Then
\begin{align*}
\sum_{i=1}^{m} \minext_{U \to V}(d_i,\alpha) &\leq 
\sum_{i=1}^{m}\sum_{p,q\in U}\beta_{pq} d_i(p,q)
= \sum_{p,q\in U}\beta_{pq}\sum_{i=1}^{m} d_i(p,q) \\
&\leq Q_k^{metric}\minext_{U \to V}\Bigl(\sum_{i=1}^md_i,\alpha\Bigr).
\end{align*}
Therefore, $Q_k^{metric} \geq Q$.

The existence of $(Q_k^{cut},k)$-certificates for cut sparsification, and $(Q_k^{metric}-\varepsilon,k)$-certificates for metric sparsification
follows immediately from the duality arguments in Theorems~\ref{thm:extendcut} and~\ref{thm:extendmetric}. We omit the details in
this version of the paper.
\end{proof}

\section*{Acknowledgements}
We are grateful to William Johnson and Gideon Schechtman for notifying us that a lower bound of $\Omega(\sqrt{\log k}/\log\log k)$ on $e_k(\ell_1,\ell_1)$ follows from their joint work with Figiel~\cite{FJS} and for giving us a permission to present the proof in this paper.

\pagebreak

\appendix
\section{Flow Sparsifiers are Metric Sparsifiers}
We have already established (in Lemma~\ref{lem:MetricIsFlow})
that every metric sparsifier is a flow sparsifier. We now prove 
that, in fact, every flow sparsifier is a metric sparsifier.
We shall use the same (standard) dual LP for the concurrent multi-commodity
flow as we used in the proof of Lemma~\ref{lem:MetricIsFlow}.
Denote the sum $\sum_{r} d_Y(s_r,t_r)\dem_k$ by 
$\gamma (d_Y)$. Then the definition of flow sparsifiers can be reformulated 
as follows: The graph $(Y,\beta)$ is a $Q$-quality flow sparsifier
for $(X,\alpha)$, if for every linear functional 
$\gamma:\calD_Y \to \bbR$ with nonnegative coefficients,
$$\min_{d_X\in\calD_X: \gamma({d_X}|_Y)\geq 1} \alpha(d_X) \leq
\min_{d_Y\in\calD_Y: \gamma(d_Y)\geq 1} \beta(d_Y)
\leq Q\times \min_{d_X\in\calD_X: \gamma({d_X}|_Y)\geq 1} \alpha(d_X).$$

\begin{lemma}\label{lem:FlowIsMetric}
Let $(X,\alpha)$ be a weighted graph and let $Y\subset X$ be a subset of vertices.
Suppose, that $(Y,\beta)$ is a $Q$-quality flow sparsifier, then 
$(Y,\beta)$ is also a $Q$-quality metric sparsifier.
\end{lemma}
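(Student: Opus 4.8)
The goal is to deduce both inequalities of the definition of a $Q$-quality metric sparsifier, $\minext_{Y\to X}(d_Y,\alpha)\le\beta(d_Y)\le Q\,\minext_{Y\to X}(d_Y,\alpha)$ for every $d_Y\in\calD_Y$, by feeding a well-chosen demand functional $\gamma$ into the reformulated flow-sparsifier inequality stated just above this lemma. Two ingredients are needed. The first, which I expect to be the main technical point, is \emph{monotonicity of the minimum extension}: if $d_Y',d_Y''\in\calD_Y$ and $d_Y'(p,q)\le d_Y''(p,q)$ for all $p,q\in Y$, then $\minext_{Y\to X}(d_Y',\alpha)\le\minext_{Y\to X}(d_Y'',\alpha)$. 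This is not immediate, since one cannot simply decrease some entries of a minimum extension of $d_Y''$ and remain a metric extension of $d_Y'$. Instead, given a minimum extension $d_X''$ of $d_Y''$, I would let $d_X'$ be the shortest-path metric of the complete graph on $X$ whose edge $\{i,j\}$ has length $d_Y'(i,j)$ when $i,j\in Y$ and length $d_X''(i,j)$ otherwise. Then $d_X'\le d_X''$ pointwise, so $\alpha(d_X')\le\alpha(d_X'')=\minext_{Y\to X}(d_Y'',\alpha)$; and by splitting any $p$--$q$ path ($p,q\in Y$) at its terminal vertices, bounding each terminal-free subpath via the triangle inequality for $d_X''$ and the hypothesis $d_Y'\le d_Y''$, and finally applying the triangle inequality for $d_Y'$, one gets $d_X'|_Y=d_Y'$, whence $\minext_{Y\to X}(d_Y',\alpha)\le\alpha(d_X')$.

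The second ingredient is a separation fact. Suppose $S\subseteq\calD_Y$ is a polyhedron that is downward closed inside $\calD_Y$ (i.e.\ $d\le d'$, $d'\in S$, $d\in\calD_Y$ imply $d\in S$), and $\bar d\in\calD_Y\setminus S$. Then $\bar d$ does not lie in the polyhedron $S-\bbR^{\binom{k}{2}}_{\ge 0}$ (otherwise $\bar d\le s$ for some $s\in S$, contradicting downward closedness), so there is a hyperplane separating $\bar d$ from $S-\bbR^{\binom{k}{2}}_{\ge 0}$; its normal must be a linear functional $\gamma$ with nonnegative coefficients, because $S-\bbR^{\binom{k}{2}}_{\ge 0}$ recedes in the direction of decreasing any single coordinate $d(p,q)$. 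After rescaling we may assume $\gamma(\bar d)=1$ and $\gamma(d)<1$ for all $d\in S$, so that for $d\in\calD_Y$ the condition $\gamma(d)\ge 1$ forces $d\notin S$.

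For the lower bound $\minext_{Y\to X}(d_Y,\alpha)\le\beta(d_Y)$, suppose it fails for some $\bar d\in\calD_Y$ and pick $t$ with $\beta(\bar d)<t<\minext_{Y\to X}(\bar d,\alpha)$. Apply the separation fact to $S=\{d\in\calD_Y:\minext_{Y\to X}(d,\alpha)\le t\}$, which is a polyhedron (the value function $\minext_{Y\to X}(\cdot,\alpha)$ is convex and piecewise linear) and is downward closed inside $\calD_Y$ by monotonicity. This produces a demand functional $\gamma$ with nonnegative coefficients, $\gamma(\bar d)=1$, such that $\gamma(d)\ge 1$ forces $\minext_{Y\to X}(d,\alpha)>t$ for $d\in\calD_Y$. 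Plugging $\gamma$ into the reformulated flow-sparsifier inequality: every $d_X\in\calD_X$ with $\gamma(d_X|_Y)\ge 1$ has $\alpha(d_X)\ge\minext_{Y\to X}(d_X|_Y,\alpha)>t$, so $\min_{d_X\in\calD_X:\gamma(d_X|_Y)\ge1}\alpha(d_X)\ge t$; but $\bar d$ is feasible for the middle term, so $\min_{d_Y\in\calD_Y:\gamma(d_Y)\ge1}\beta(d_Y)\le\beta(\bar d)<t$, contradicting $\min_{d_X\in\calD_X:\gamma(d_X|_Y)\ge1}\alpha(d_X)\le\min_{d_Y\in\calD_Y:\gamma(d_Y)\ge1}\beta(d_Y)$.

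The upper bound $\beta(d_Y)\le Q\,\minext_{Y\to X}(d_Y,\alpha)$ is symmetric: if it fails for some $\bar d\in\calD_Y$, pick $t$ with $Q\,\minext_{Y\to X}(\bar d,\alpha)<t<\beta(\bar d)$ and run the same argument with $S=\{d\in\calD_Y:\beta(d)\le t\}$, a polyhedron that is downward closed inside $\calD_Y$ since $\beta$ has nonnegative coefficients. One gets $\gamma$ with nonnegative coefficients, $\gamma(\bar d)=1$, such that $\gamma(d)\ge1$ forces $\beta(d)>t$ for $d\in\calD_Y$; hence $\min_{d_Y\in\calD_Y:\gamma(d_Y)\ge1}\beta(d_Y)\ge t$, while the minimum extension of $\bar d$ witnesses $\min_{d_X\in\calD_X:\gamma(d_X|_Y)\ge1}\alpha(d_X)\le\minext_{Y\to X}(\bar d,\alpha)$, so $Q\min_{d_X\in\calD_X:\gamma(d_X|_Y)\ge1}\alpha(d_X)\le Q\,\minext_{Y\to X}(\bar d,\alpha)<t$, contradicting the right half of the reformulated inequality. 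In both parts the degenerate cases, where $\beta(\bar d)=0$ or $\minext_{Y\to X}(\bar d,\alpha)=0$, are handled identically with any admissible $t>0$.
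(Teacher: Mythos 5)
Your proposal is correct, and its overall strategy coincides with the paper's: both proofs work from the dual (metric) reformulation of the flow-sparsifier condition, both rely on monotonicity of $\minext_{Y\to X}(\cdot,\alpha)$ proved by the same shortest-path-completion argument, and both derive a contradiction by separating a violating metric from a sublevel set with a linear functional having nonnegative coefficients. The one genuine difference is how that nonnegative separating functional is produced. The paper (Lemma~\ref{lem:poscoef}) shows that for each individual point of $\widetilde{\calD}_Y$ some coordinate functional $d_Y(p,q)/d^*_Y(p,q)$ separates it, and then invokes the von Neumann minimax theorem to get a single functional working uniformly; you instead observe that the sublevel set is a polyhedron, that downward closedness makes it disjoint from $S-\bbR^{\binom{k}{2}}_{\ge 0}$, and that the recession cone of this set forces the normal of any separating hyperplane to be nonnegative. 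Your route is arguably cleaner: it avoids the minimax theorem and the slightly delicate issue of extracting a uniformly strict separation from it, at the cost of needing polyhedrality (which does hold, since $\minext$ is the value function of a parametric LP and $\beta$ is linear). One place where you do more work than necessary: for the upper bound $\beta(d_Y)\le Q\,\minext_{Y\to X}(d_Y,\alpha)$ the paper simply plugs in $\gamma=\beta/\beta(d^*_Y)$ and reads off the inequality from the right half of the reformulated condition, with no separation argument needed; your symmetric separation argument is valid but heavier than required.
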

\begin{proof}
We need to verify that for every $d_Y\in \calD_Y$,
$$
\minext_{Y \to X}(d_Y,\alpha) \leq \beta (d_Y) \leq 
Q\times \minext_{Y \to X}(d_Y,\alpha).$$

Verify the first inequality. Suppose that it does not hold for some $d^*_Y\in \calD_Y$.
Let $\widetilde{\calD}_Y = \{d_Y\in \calD_Y: \minext_{Y \to X}(d_Y,\alpha) \leq \beta (d^*_Y)\}.$
The set $\widetilde{\calD}_Y$ is closed (and compact, if the graph is connected) and convex (because $\minext$ is a convex function of the first variable). Since $\minext_{Y \to X}(d^*_Y,\alpha) > \beta (d^*_Y)$, $d^*_Y\notin \widetilde{\calD}_Y$.
Hence, there exists a linear functional $\gamma$ separating $d^*_Y$ 
from $\widetilde{\calD}_Y$. That is, $\gamma (d^*_Y) \geq 1$, but for every 
$d_Y\in \widetilde{\calD}_Y$, $\gamma (d_Y) < 1$. We show in Lemma~\ref{lem:poscoef}, that there exists such 
$\gamma$ with nonnegative coefficients. Then, by the definition of the flow sparsifier,
$$\min_{d_X\in\calD_X: \gamma({d_X}|_Y)\geq 1} \alpha(d_X) \leq
\min_{d_Y\in\calD_Y: \gamma(d_Y)\geq 1} \beta(d_Y).$$
But, the left hand side
$$\min_{d_X\in\calD_X: \gamma({d_X}|_Y)\geq 1} \alpha(d_X) = 
\min_{d_Y\in\calD_Y:\gamma(d_Y)\geq 1} \minext_{Y\to X}(d_Y,\alpha) \geq 
\min_{d_Y\notin\widetilde{\calD}_Y} \minext_{Y\to X}(d_Y,\alpha) > \beta(d^*_Y);$$
and the right hand side is at most $\beta(d^*_Y)$, since $\gamma(d^*_Y)\geq 1$. We get 
a contradiction.

Verify the second inequality. Let $\gamma (d_Y) = \beta (d_Y)/\beta (d^*_Y)$. 
By the definition of the flow sparsifier,
$$\min_{d_Y\in\calD_Y: \gamma(d_Y)\geq 1} \beta(d_Y)
\leq Q\times \min_{d_X\in\calD_X: \gamma({d_X}|_Y)\geq 1} \alpha(d_X).$$
The left hand side is at least $\beta (d^*_Y)$ (by the definition of $\gamma$).
Thus, for every $d_X\in\calD_X$ satisfying $\gamma({d_X}|_Y)\geq 1$, and particularly, for $d_X$ equal 
to the minimum extension of $d_Y$,
$Q\times \alpha(d_X) \geq \beta (d^*_Y)$.
\end{proof}

\begin{lemma}[Minimum extension is monotone]\label{lem:mono}
Let $X$ be an arbitrary set, $Y\subset X$, and $\alpha_{ij}$ be a nonnegative set of weights
on pairs $(i,j)\in X\times X$. Suppose that a metric $d^{*}_Y\in\calD_Y$ dominates 
metric $d^{**}_Y\in\calD_Y$ i.e., $d^{*}_Y(p,q)\geq d^{**}_Y(p,q)$ for every $p,q\in Y$.
Then,
$$\minext_{Y\to X}(d^{*}_Y, \alpha) \geq \minext_{Y\to X}(d^{**}_Y, \alpha).$$
\end{lemma}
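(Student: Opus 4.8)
The plan is to start from a minimum extension of the larger metric and ``shortcut'' it down to an extension of the smaller metric without increasing its $\alpha$-cost. Let $d^*_X\in\calD_X$ be a minimum extension of $d^*_Y$, so $\alpha(d^*_X)=\minext_{Y\to X}(d^*_Y,\alpha)$ and $d^*_X(p,q)=d^*_Y(p,q)\geq d^{**}_Y(p,q)$ for all $p,q\in Y$. I would define a weight function $w$ on pairs of points of $X$ by $w(i,j)=d^*_X(i,j)$ whenever $i\notin Y$ or $j\notin Y$, and $w(p,q)=d^{**}_Y(p,q)$ whenever $p,q\in Y$; note $w(i,j)\leq d^*_X(i,j)$ for every pair, since $d^{**}_Y\leq d^*_Y$ on $Y$. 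Let $d^{**}_X$ be the shortest-path (semi)metric induced by $w$, i.e. $d^{**}_X(i,j)$ is the infimum of $\sum_t w(v_t,v_{t+1})$ over finite walks $i=v_0,v_1,\dots,v_\ell=j$ in $X$. As a shortest-path metric, $d^{**}_X\in\calD_X$ automatically, and $d^{**}_X(i,j)\leq w(i,j)\leq d^*_X(i,j)$ for all $i,j$, so $\alpha(d^{**}_X)\leq\alpha(d^*_X)=\minext_{Y\to X}(d^*_Y,\alpha)$.

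It then remains to check that $d^{**}_X$ is genuinely an extension of $d^{**}_Y$, i.e. $d^{**}_X(p,q)=d^{**}_Y(p,q)$ for all $p,q\in Y$. The inequality $d^{**}_X(p,q)\leq d^{**}_Y(p,q)$ is immediate from the single-edge walk $p,q$ of weight $w(p,q)=d^{**}_Y(p,q)$. For the reverse inequality I would take an arbitrary finite walk from $p$ to $q$ and transform it into a walk that stays inside $Y$ without increasing its length: replace each maximal sub-walk $v_i,v_{i+1},\dots,v_j$ whose interior vertices lie outside $Y$ (so $v_i,v_j\in Y$) by the single edge $\{v_i,v_j\}$; its old length is $\sum_t d^*_X(v_t,v_{t+1})\geq d^*_X(v_i,v_j)=d^*_Y(v_i,v_j)\geq d^{**}_Y(v_i,v_j)=w(v_i,v_j)$ by the triangle inequality for $d^*_X$. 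After all such replacements the walk visits only vertices of $Y$, uses only edges of weight $d^{**}_Y(\cdot,\cdot)$, still runs from $p$ to $q$, and has length no larger than the original; by the triangle inequality for $d^{**}_Y$ its length is at least $d^{**}_Y(p,q)$. Hence every walk from $p$ to $q$ has length at least $d^{**}_Y(p,q)$, so $d^{**}_X(p,q)\geq d^{**}_Y(p,q)$, giving equality.

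Putting the pieces together, $d^{**}_X$ is an extension of $d^{**}_Y$ to $X$, hence $\minext_{Y\to X}(d^{**}_Y,\alpha)\leq\alpha(d^{**}_X)\leq\alpha(d^*_X)=\minext_{Y\to X}(d^*_Y,\alpha)$, which is the claim. The only mildly delicate point is the walk-surgery argument establishing $d^{**}_X|_Y=d^{**}_Y$; the metric property of $d^{**}_X$ and the cost comparison are automatic once the shortcut construction is set up this way. If one prefers to avoid shortest-path metrics over a possibly infinite $X$, the same $d^{**}_X$ can be written in closed form as $d^{**}_X(i,j)=\min\bigl(d^*_X(i,j),\ \min_{p,q\in Y}d^*_X(i,p)+d^{**}_Y(p,q)+d^*_X(q,j)\bigr)$, and one checks directly (a short case analysis, using the triangle inequalities of $d^*_X$ and $d^{**}_Y$) that a single shortcut already yields a metric agreeing with $d^{**}_Y$ on $Y$ and dominated by $d^*_X$.
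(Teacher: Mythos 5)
Your proof is correct and follows essentially the same route as the paper's (which is itself only a sketch): take the minimum extension $d^*_X$ of $d^*_Y$, lower the weights on pairs inside $Y$ to $d^{**}_Y$, and pass to the induced shortest-path metric, which is dominated by $d^*_X$ and extends $d^{**}_Y$. The only difference is that you supply the walk-surgery verification that the shortest-path metric actually restricts to $d^{**}_Y$ on $Y$, a step the paper asserts without proof.
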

\begin{proof}[Proof sketch]
Let $d^{*}_X$ be the minimum extension of $d^{*}_Y$. Consider the
distance function 
$$d^{**}_X (i,j) = 
\begin{cases}
d_Y^{**}(i,j),&\text{if } i,j \in Y;\\
d_X^{*}(i,j),&\text{otherwise}.
\end{cases}
$$
The function $d^{**}_X (i,j)$ does not necessarily satisfy the 
triangle inequalities. However, the shortest path metric $d^{s}_X$ induced by $d^{**}_X$ does satisfy the triangle inequalities, and 
is an extension of $d^{**}_Y$. Since, $d^{*}_X (i,j) \geq d^{**}_X (i,j) \geq d^{s}_X (i,j)$
for every $i,j\in X$, 
$$\minext_{Y\to X}(d^{*}_Y, \alpha) = \alpha (d^*_X) \geq \alpha (d^{s}_X)
\geq \minext_{Y\to X}(d^{**}_Y, \alpha).$$
\end{proof}

\begin{lemma}\label{lem:poscoef}
Let $\widetilde{\calD}_Y = \{d_Y\in \calD_Y: \minext_{Y\to X}(d_Y,\alpha) \leq 1 \}$, and 
$d^*_Y\in \calD_{Y}\setminus \widetilde{\calD}_Y$. Then, there exists a linear
functional 
$$\gamma (d_Y) = \sum_{p,q\in \calD_Y} \gamma_{pq}  d_Y(p,q),$$
with nonnegative coefficients $\gamma_{pq}$ separating $d^*_Y$ from 
$\widetilde{\calD}_Y$, i.e., $\gamma(d^*_Y)\geq 1$, but for every 
$d_Y\in \widetilde{\calD}_Y$, $\gamma(d_Y) < 1$.
\end{lemma}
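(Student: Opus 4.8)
The plan is \emph{not} to separate $d^*_Y$ from $\widetilde{\calD}_Y$ directly: a separating linear functional for that pair need not have nonnegative coefficients, precisely because $\widetilde{\calD}_Y$ is only downward closed \emph{within} $\calD_Y$ and not within the whole nonnegative orthant. Instead I would pass to the ``downward closure'' of $\widetilde{\calD}_Y$. Writing $E$ for the set of unordered pairs of distinct points of $Y$, so that $\calD_Y\subset\bbR^E_{\geq 0}$, set
$$\widehat{\calD}=\bigl\{x\in\bbR^E_{\geq 0}\ :\ \exists\,d_X\in\calD_X\text{ with }d_X(p,q)\geq x(p,q)\text{ for all }p,q\in Y\text{ and }\alpha(d_X)\leq 1\bigr\}.$$
This $\widehat{\calD}$ is the image of a polyhedron (the triangle inequalities for $d_X$, together with $d_X\geq 0$, $d_X(p,q)\geq x(p,q)$, $x\geq 0$ and $\alpha(d_X)\leq 1$, are all linear) under the coordinate projection $(x,d_X)\mapsto x$, hence a closed convex set; it contains $0$ (take $d_X\equiv 0$); it contains $\widetilde{\calD}_Y$ (for $d_Y\in\widetilde{\calD}_Y$ take $d_X$ to be its minimum extension); and it is downward closed in $\bbR^E_{\geq 0}$, since a $d_X$ witnessing membership of $x$ also witnesses membership of any $x'$ with $0\leq x'\leq x$. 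The one substantive point is that $d^*_Y\notin\widehat{\calD}$, and this is exactly where monotonicity is used: if some $d_X$ witnessed $d^*_Y\in\widehat{\calD}$, then $d_X|_Y\in\calD_Y$ would dominate $d^*_Y$ coordinatewise, so Lemma~\ref{lem:mono} would give $\minext_{Y\to X}(d^*_Y,\alpha)\leq\minext_{Y\to X}(d_X|_Y,\alpha)\leq\alpha(d_X)\leq 1$, contradicting $d^*_Y\notin\widetilde{\calD}_Y$.

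With this in hand, the remaining steps are routine. First I would apply the strict separation theorem in the finite-dimensional space $\bbR^E$ to the closed convex set $\widehat{\calD}$ and the point $d^*_Y\notin\widehat{\calD}$, obtaining a linear functional $\gamma^0$ with $\gamma^0(d^*_Y)>M$, where $M:=\sup_{x\in\widehat{\calD}}\gamma^0(x)$ is finite and $M\geq\gamma^0(0)=0$. Next I would replace every negative coefficient of $\gamma^0$ by $0$ to get a functional $\gamma$ with nonnegative coefficients. Since $d^*_Y\geq 0$, increasing coefficients can only increase its value, so $\gamma(d^*_Y)\geq\gamma^0(d^*_Y)>M$; and for any $x\in\widehat{\calD}$, zeroing out those coordinates of $x$ where $\gamma^0$ was negative produces (by downward closedness) a point $\bar x\in\widehat{\calD}$ with $\gamma(x)=\gamma^0(\bar x)\leq M$, so that $\sup_{x\in\widehat{\calD}}\gamma(x)\leq M$ and in particular $\gamma(d_Y)\leq M$ for every $d_Y\in\widetilde{\calD}_Y$. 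Finally I would recover the strict inequality by rescaling: choosing any $t$ with $M<t<\gamma(d^*_Y)$ (such $t$ is positive since $M\geq 0$) and outputting the functional with coefficients $\gamma_{pq}/t$ gives $(\gamma/t)(d^*_Y)>1\geq 1$ and $(\gamma/t)(d_Y)\leq M/t<1$ for every $d_Y\in\widetilde{\calD}_Y$, which is exactly the claim.

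The main obstacle is thus not a computation but the structural observation that one must not separate from $\widetilde{\calD}_Y$ itself --- where truncating the negative coefficients of a separating functional can destroy the separation --- but from the enlarged set $\widehat{\calD}$, on which the truncation step provably preserves separation, and that $\widehat{\calD}$ has to be exhibited as a (closed) polyhedral projection so that \emph{strict} separation, and hence the concluding rescaling, is legitimately available.
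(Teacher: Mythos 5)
Your proof is correct, and it takes a genuinely different route from the paper's. The paper fixes the convex set $\Gamma$ of nonnegative functionals $\gamma$ with $\gamma(d^*_Y)\geq 1$, observes via the monotonicity lemma (Lemma~\ref{lem:mono}) that for each individual $d_Y^{**}\in\widetilde{\calD}_Y$ there is a coordinate $(p,q)$ with $d_Y^{**}(p,q)<d^*_Y(p,q)$, hence a single-coordinate functional $d_Y\mapsto d_Y(p,q)/d^*_Y(p,q)$ in $\Gamma$ that is $<1$ at $d_Y^{**}$, and then invokes the von Neumann minimax theorem to produce one $\gamma\in\Gamma$ working for all of $\widetilde{\calD}_Y$ simultaneously. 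You instead enlarge $\widetilde{\calD}_Y$ to its downward closure $\widehat{\calD}$ (exhibited as a polyhedral projection, hence closed), use the same monotonicity lemma only once --- to show $d^*_Y\notin\widehat{\calD}$ --- and then get nonnegativity for free by truncating the negative coefficients of a strictly separating functional, with downward closedness guaranteeing the truncation does not destroy the separation. Both proofs rest on Lemma~\ref{lem:mono} as the essential structural input. Your version buys two things: it avoids the minimax theorem altogether, and it makes completely explicit the passage from ``each point of $\widetilde{\calD}_Y$ can be separated'' to ``one functional separates uniformly and strictly,'' a step the paper's minimax application handles more tersely (and which, as stated there, leans on compactness of $\widetilde{\calD}_Y$); your argument works verbatim even when $\widetilde{\calD}_Y$ is unbounded, since strict separation of a point from a closed convex set in finite dimensions needs no boundedness. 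The paper's proof is shorter and produces an explicit one-coordinate witness per point, which is reused elsewhere in their exposition.
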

\begin{proof}
Let $\Gamma$ be the set of linear functionals $\gamma$ with nonnegative coefficients
such that $\gamma(d^*_Y)\geq 1$. This set is convex. We need to show that there 
exists $\gamma\in \Gamma$ such that $\gamma(d_Y) < 1$ for every 
$d_Y \in \widetilde{\calD}_Y$. By the \citeasnoun{Neumann} minimax theorem,
it suffices to show that for every $d_Y^{**}\in \widetilde{\calD}_Y$, there exists
a linear functional $\gamma\in \Gamma$ such that $\gamma (d_Y^{**}) < 1$. By Lemma~\ref{lem:mono}, since
$$\minext_{Y\to X} (d_Y^{**},\alpha) < 1 \leq \minext_{Y\to X} (d^{*}_Y,\alpha),$$
there exist $p,q\in Y$, such that $d_Y^{**}(p,q) < d_Y^{*}(p,q)$. 
The desired linear functional is $\gamma (d_Y) = d_Y(p,q)/d^{*}_Y(p,q)$.
\end{proof}
\section{Compactness Theorem for Lipschitz Extendability Constants}
In this section, we prove a compactness theorem for Lipschitz
extendability constants.
\begin{theorem}
Let $X$ be an arbitrary metric space and $V$ be a finite dimensional normed space.
Assume that for some $K$ and every $Z \subset \tilde Z \subset V$
with $|Z| = k$, $|\tilde Z| < \infty$, every map $f:Z\to V$ can be extended 
to a map $\tilde f:\tilde Z \to V$
so that $\|\tilde f\|_{Lip} \leq K \|f\|_{Lip}$. Then $e_k(X,V) \leq K$.
\label{thm:compactness}
\end{theorem}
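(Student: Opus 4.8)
The plan is to reduce the extension problem over the (possibly huge, non-separable) space $X$ to extension problems over \emph{finite} subsets of $X$ — for which the hypothesis applies directly — and then to glue the finite extensions into a single map by a compactness argument that uses that $V$ is finite-dimensional. Fix a $k$-point subset $Z\subseteq X$ and a Lipschitz map $f:Z\to V$; if $f$ is constant we extend it by that constant, so we may assume $\|f\|_{Lip}>0$ and, after rescaling, $\|f\|_{Lip}=1$. It then suffices to produce an extension $\tilde f:X\to V$ of $f$ with $\|\tilde f\|_{Lip}\le K$.

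First I would record the finite extensions. Let $\mathcal I$ be the set of all finite subsets $W$ of $X$ with $Z\subseteq W$, directed by inclusion. For each $W\in\mathcal I$ the hypothesis (applied with $\tilde Z:=W$ and the given $f$ on $Z$) yields a map $f_W:W\to V$ with $f_W|_Z=f$ and $\|f_W\|_{Lip}\le K$. Fix a basepoint $z_0\in Z$. For every $x\in X$ and every $W\in\mathcal I$ containing $x$ we have $f_W(z_0)=f(z_0)$, hence $\|f_W(x)-f(z_0)\|_V\le K\,d_X(x,z_0)$; thus $f_W(x)$ lies in the ball $B_x:=\{v\in V:\|v-f(z_0)\|_V\le K\,d_X(x,z_0)\}$, which is compact because $V$ is finite-dimensional.

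Next I would pass to a simultaneous limit over $\mathcal I$. Since $\mathcal I$ is directed, the tail sets $\{W\in\mathcal I:W\supseteq W_0\}$, $W_0\in\mathcal I$, have the finite intersection property, so they lie in some ultrafilter $\mathcal U$ on $\mathcal I$. For each $x\in X$ the set $\{W:x\in W\}$ belongs to $\mathcal U$ (it contains the tail of $\{x\}\cup Z$), and along it $f_W(x)$ ranges over the compact set $B_x$, so $\tilde f(x):=\lim_{\mathcal U}f_W(x)\in B_x$ is well defined; this gives $\tilde f:X\to V$. (Equivalently, realize $\tilde f$ as the limit of a convergent subnet of $(f_W)$ inside the compact product $\prod_{x\in X}B_x$, via Tychonoff's theorem.) For $p\in Z$ we have $f_W(p)=f(p)$ for all $W$, so $\tilde f(p)=f(p)$ and $\tilde f$ extends $f$. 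For arbitrary $p,q\in X$, every $W\supseteq\{p,q\}\cup Z$ satisfies $\|f_W(p)-f_W(q)\|_V\le K\,d_X(p,q)$, and the collection of such $W$ lies in $\mathcal U$; since the norm is continuous, $\|\tilde f(p)-\tilde f(q)\|_V\le K\,d_X(p,q)$. Hence $\|\tilde f\|_{Lip}\le K$, and since $Z$ and $f$ were arbitrary, $e_k(X,V)\le K$.

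There is no deep obstacle here; the argument is essentially bookkeeping. The step that needs the most care is the limit-extraction: one genuinely uses finite-dimensionality of $V$ to make the balls $B_x$ compact, and one needs an ultrafilter (or Tychonoff/subnet) argument rather than a sequential diagonalization, since $X$ need not be separable. The only input drawn from the hypothesis is the existence of the finite extensions $f_W:W\to V$ of $f$ for finite $W$ with $Z\subseteq W\subseteq X$, which is exactly the hypothesis applied with $\tilde Z=W$ (legitimate in the setting in which the theorem is used, where the finite sets $W\subseteq X$ are in particular finite subsets of the target space).
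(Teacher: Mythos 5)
Your proposal is correct and follows essentially the same route as the paper: both arguments realize the candidate extensions as points of the compact product $\prod_{x\in X}B_V(f(z_0),K\,d(z_0,x))$ (using finite-dimensionality of $V$ and Tychonoff) and invoke the finite-extension hypothesis on finite supersets of $Z$; your ultrafilter/subnet limit is just the dual formulation of the paper's finite-intersection-property argument. Your parenthetical about $\tilde Z\subset V$ versus $\tilde Z\subset X$ correctly flags a typo in the statement — the paper's own proof applies the hypothesis to finite subsets of $X$, as you do.
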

\begin{proof}
Fix a set $Z$ and a map $f:Z\to V$. Without loss of generality we may assume that
$\|f\|_{Lip} = 1$. We shall construct a $K$-Lipschitz extension $\hat f:X\to V$ of
$f$.

Choose an arbitrary $z_0\in Z$. 
Consider the following topological 
space of maps from $X$ to $V$:
$${\cal F} = \{h:X\to V: \forall x\in X\, \|h(x) - f(z_0)\|_V \leq K d(z_0, x)\} 
\cong \prod_{x\in X} B_V(f(z_0), K d(z_0, x)),
$$
equipped with the product topology (the topology of pointwise convergence); 
i.e., a sequence of functions $f_i$ converges to $f$ if for every $x\in X$, $f_i(x) \to f(x)$.
Note that every ball $B_V(f(z_0), K d(z_0, x))$ is a compact set. By Tychonoff's theorem
the product of compact sets is a compact set. Therefore, $\cal F$ is also a compact set.

Let $M$ be the set of maps in $\cal F$ that extend $f$: 
$M = \{h\in {\cal F}: h(z) = f(z)\text{ for all } z\in Z\}$. Let $C_{x,y}$ (for $x,y \in X$)
be the set of 
functions in $\cal F$ that increase the distance between points $x$ and $y$
by at most a factor of $K$: $C_{x,y} = \{h\in {\cal F}: \|h(x) - h(y)\|_V 
\leq K d(x,y)\}$. Note that all sets $M$ and $C_{x,y}$ are closed.
We prove that every finite family of sets $C_{x,y}$ has a non-empty
intersection with $M$. Consider a finite family of sets:
$C_{x_1,y_1}, \dots, C_{x_n,y_n}$. Let $\tilde Z = Z\cup \bigcup_{i=1}^n\{x_i, y_i\}$.
By the condition of the theorem there exists a $K$-Lipschitz map 
$\tilde f: \tilde Z \to V$ extending $f$. Then 
$\tilde f\in \bigcap_{i=1}^n C_{x_i,y_i} \cap M$. Therefore,
$\bigcap_{i=1}^n C_{x_i,y_i} \cap M \neq \varnothing$.

Since every finite family of closed sets in $\{M, C_{x,y}\}$
has a non-empty intersection and $\cal F$ is compact,
all sets $M$ and $C_{x,y}$ have a non-empty intersection.
Let $\hat f \in M \cap \bigcap_{x,y\in X} C_{x,y}$. Since
$\hat f\in M$, $\hat f$ is an extension of $f$. Since
$\hat f \in C_{x,y}$ for every $x,y\in X$, the map
$\hat f$ is $K$-Lipschitz.
\end{proof}

\section{Lipschitz Extendability and Projection Constants}
\label{sec:JL-overview}
In Section~\ref{subsec:proj-const}, we use the following theorem of~\citeasnoun{JL}.
In their paper, however, this theorem is stated in a slightly different form. 
We sketch here the original proof of Johnson and Lindenstrauss for completeness. 

\begin{theorem}[\citeasnoun{JL}, Theorem 3]\label{thm:JL}
Let $V$ be a Banach space, $L \subset V$ be a $d$-dimensional 
subspace of $V$, and $U$ be a finite dimensional normed space. 
Then every linear operator $T:L \to U$, with  $\|T\|\|T^{-1}\| = O(d)$, 
can be extended to a linear operator $\tilde T: V \to U$
so that $\|\tilde T\| = O(e_k(V, U)) \|T\|$, where $k$ is such that 
$d \leq {c\log k}/{\log\log k}$ (where $c$ is an absolute constant).

In particular, for $U=L$, the identity operator $I_L$ on $L$ can be extended to a projection
$P: V\to L$ with $\|P\| \leq O(e_k(V, L))$. Therefore, $\lambda(L, V) = O(e_k(V, L))$.
\end{theorem}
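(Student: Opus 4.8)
The plan is to combine two ingredients: a Lipschitz extension supplied by the hypothesis, and a linearization (averaging) step that turns it back into a linear operator. First I would produce the Lipschitz extension. Given the linear operator $T:L\to U$, restrict it to a finite $\varepsilon$-net $N$ of a large ball of $L$; this is a $\|T\|$-Lipschitz map from the finite set $N\subset V$ into $U$, so by the definition of $e_k(V,U)$ it extends to a map $g:V\to U$ with $\|g\|_{Lip}\le e_k(V,U)\|T\|$ agreeing with $T$ on $N$. Letting $\varepsilon\to 0$ and the ball grow, and invoking a compactness argument of the type proved in Theorem~\ref{thm:compactness} (Tychonoff over a product of balls in $U$), one obtains a single map $g:V\to U$ with $\|g\|_{Lip}\le e_k(V,U)\|T\|$ and $g|_L=T$.

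Next comes linearization. The idea is to replace $g$ by an averaged difference quotient $\tilde T(v)=\mathop{\mathrm{avg}}_{x}\,[\,g(x+v)-g(x)\,]$, where the average is an invariant mean over the additive group of $L$ (equivalently, a Ces\`aro limit along translates of a fine net in a ball of $L$). Averaging over $L$ rather than over all of $V$ forces $\tilde T|_L=T$, because $g(x+\ell)-g(x)=T(x+\ell)-T(x)=T\ell$ whenever $x,\ell\in L$; and boundedness of the difference quotients of $g$ gives $\|\tilde T(v)\|\le\|g\|_{Lip}\|v\|=O(e_k(V,U)\|T\|)\,\|v\|$. The work is to show that $\tilde T$ is (nearly) additive and homogeneous in $v$ even for $v\notin L$: replacing $x$ by an $\varepsilon$-net of a ball of radius $R$ in $L$, a translate by $v$ moves each net point by $\|v\|$, and after rescaling by $R$ the discrepancy between the net and its translate is of order $\varepsilon$. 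Choosing $\varepsilon\asymp 1/d$, and using the hypothesis $\|T\|\,\|T^{-1}\|=O(d)$ to absorb the loss incurred when one renormalizes so that the operator agrees \emph{exactly} with $T$ on $L$, the defect from linearity is swept into the constant; a final compactness/minimax step produces a genuinely linear $\tilde T:V\to U$ with $\|\tilde T\|=O(e_k(V,U))\|T\|$. Crucially, the $(1/d)$-net of a ball in $L$ that one averages over, together with its relevant translates, must be a set of at most $k$ points, and such a net of a $d$-dimensional ball has $d^{\Theta(d)}$ points — this is exactly the constraint $d\le c\log k/\log\log k$.

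For the ``in particular'' statement one takes $U=L$ and $T=I_L$, so that $\|T\|=\|T^{-1}\|=1$ and the hypothesis $\|T\|\,\|T^{-1}\|=O(d)$ holds vacuously. The resulting linear extension $\tilde T:V\to L$ satisfies $\tilde T|_L=I_L$, i.e.\ it is a projection of $V$ onto $L$, with $\|\tilde T\|=O(e_k(V,L))$; hence $\lambda(L,V)=O(e_k(V,L))$, which is the form used in Section~\ref{subsec:proj-const}.

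The step I expect to be the main obstacle is the linearization: averaging over $L$ is what pins down the correct behaviour on $L$, whereas averaging over all of $V$ would make additivity in $v$ transparent, and these two requirements pull against each other. Quantifying the resulting defect — in terms of the net fineness $\asymp 1/d$ and the condition number $\|T\|\,\|T^{-1}\|=O(d)$ — is the technical heart of the argument, and is precisely the place where the relation $d=O(\log k/\log\log k)$ is forced.
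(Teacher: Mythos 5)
Your overall architecture (finite net on $L$, Lipschitz extension, linearization, correction using the condition number) matches the Johnson--Lindenstrauss argument the paper sketches in Appendix~\ref{sec:JL-overview}, but two of your steps have genuine gaps. First, the opening claim that one can obtain $g:V\to U$ with $\|g\|_{Lip}\le e_k(V,U)\|T\|$ and $g|_L=T$ \emph{exactly} is false as stated: the constant $e_k(V,U)$ only controls extensions of maps defined on at most $k$ points, so you are entitled only to a map agreeing with $T$ on a single $\varepsilon$-net $A$ of $S(L)$ with $|A|\le k-1$, which caps how small $\varepsilon$ can be (a net of $S(L)$ has size up to $(1+4/\varepsilon)^d$); letting $\varepsilon\to 0$ replaces $e_k$ by $e_{|N|}$ with $|N|\to\infty$, and no compactness argument recovers the constant $e_k(V,U)$. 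The actual proof never has agreement on all of $L$; it carries an error of order $\varepsilon$ on $S(L)$ through the entire argument and repairs it only at the very end. Second, your linearization device --- averaging the difference quotients $g(x+v)-g(x)$ over an invariant mean on $L$ --- does not yield a map additive in $v$: for $v,w\notin L$ the defect $\mathrm{avg}_x\left[g(x+v+w)-g(x+v)-g(x+w)+g(x)\right]$ has no reason to vanish, because translation by $v\notin L$ does not preserve a mean supported on $L$. You correctly identify this tension as the heart of the matter, but you do not resolve it, and the resolution is not a matter of bookkeeping.

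For comparison, the paper's route is: (i) the case $e_k(V,U)\ge\sqrt d$ is dispatched immediately by the Kadec--Snobar theorem (take $TP_L$ with $\|P_L\|\le\sqrt d$); (ii) the Lipschitz extension of $T|_{A\cup\{0\}}$ is made positively homogeneous (JL Lemma~2), then replaced by a positively homogeneous map whose restriction to $L$ is within $(8d+2)\sup_{v\in S(L)}\|T_3v-Tv\|$ of $T$ in Lipschitz norm (JL Lemma~5), and then linearized by JL Proposition~1, which uses reflexivity of $U$ rather than an invariant mean over $L$; (iii) the resulting linear $T_5$ agrees with $T$ on $L$ only up to $O(d\, e_k(V,U)\|T\|\varepsilon)$, and exact agreement is restored by forming $\phi = T_5T^{-1}P+(I_U-P)$ with $P:U\to T(L)$ a Kadec--Snobar projection, showing $\|\phi-I_U\|\le 1/2$ --- this is precisely where the choice $\varepsilon\sim 1/(\alpha\log^2 k)$ with $\alpha=\|T\|\|T^{-1}\|=O(d)$ and hence the constraint $d\le c\log k/\log\log k$ are used --- and setting $\tilde T=\phi^{-1}T_5$ via a Neumann series. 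Your ``in particular'' paragraph is fine, but without a working linearization and a working final correction the main statement is not established.
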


First, we address a simple case when $e_k(V, U) \geq \sqrt{d}$.
By the Kadec--Snobar theorem there exists a projection $P_L$ from
$V$ to $L$ with $\|P_L\| \leq \sqrt{d}$. Therefore, $TP_L$ is an extension of $T$
with the norm bounded by $\sqrt{d}\|T\|$ and we are done. So we assume below that 
$e_k(V, U) \leq \sqrt{d}$
  
We construct the extension $\tilde T$ in several steps. Denote $\alpha = \|T\|\|T^{-1}\|$.
First, we choose an $\varepsilon$-net $A$ of size at most $k-1$ on the unit
sphere $S(L) = \{v\in L:\|v\|_{V} = 1\}$
for $\varepsilon \sim 1/(\alpha\log^2 k)$ (to be specified later).
\begin{lemma}[\citeasnoun{JL}, Lemma~3]
If $L$ is a $d$-dimensional normed space and $\varepsilon > 0$ then
$S(L)$ admits an $\varepsilon$-net of cardinality at most $(1 + 4/\varepsilon)^d$.
\end{lemma}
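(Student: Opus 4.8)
The plan is to use the standard volumetric packing argument. First I would let $A$ be a \emph{maximal} $\varepsilon$-separated subset of the unit sphere $S(L)$, that is, a maximal subset with $\|a-a'\|_V\ge\varepsilon$ for all distinct $a,a'\in A$ (such a set exists; see the remark on finiteness below). Maximality immediately forces $A$ to be an $\varepsilon$-net of $S(L)$: if some $x\in S(L)$ were at distance at least $\varepsilon$ from every point of $A$, then $A\cup\{x\}$ would still be $\varepsilon$-separated, contradicting maximality; hence every $x\in S(L)$ lies within distance $\varepsilon$ of $A$.

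Next I would bound $|A|$ by a packing estimate. The open balls $B_V(a,\varepsilon/2)$ for $a\in A$ are pairwise disjoint, since their centers are at distance at least $\varepsilon$ from one another, and each is contained in the ball $B_V(0,1+\varepsilon/2)$ because $\|a\|_V=1$ and the triangle inequality. Fix any linear identification of the $d$-dimensional space $L$ with $\bbR^d$ and let $\vol$ denote the resulting Lebesgue measure on $L$; it is translation invariant and satisfies $\vol(t\cdot\Omega)=t^d\vol(\Omega)$ for $t>0$, and the unit ball $B\equiv B_V(0,1)$ of $L$ is a bounded convex body with $0<\vol(B)<\infty$. Comparing the volume of the disjoint union with that of the enclosing ball gives
\[
|A|\cdot(\varepsilon/2)^d\,\vol(B)=\sum_{a\in A}\vol\bigl(B_V(a,\varepsilon/2)\bigr)\le \vol\bigl(B_V(0,1+\varepsilon/2)\bigr)=(1+\varepsilon/2)^d\,\vol(B),
\]
and dividing through by $(\varepsilon/2)^d\,\vol(B)$ yields $|A|\le\bigl((1+\varepsilon/2)/(\varepsilon/2)\bigr)^d=(1+2/\varepsilon)^d\le(1+4/\varepsilon)^d$, as claimed.

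I do not expect a genuine obstacle here. The two points worth a sentence of care are: (i) volume on an abstract $d$-dimensional real vector space is defined only up to a positive scalar, but every estimate above is a \emph{ratio} of volumes, so this normalization cancels; and (ii) a finite maximal $\varepsilon$-separated set does exist, with no circularity, since the packing estimate already shows that \emph{every} finite $\varepsilon$-separated subset of $S(L)$ has at most $(1+2/\varepsilon)^d$ elements, so an $\varepsilon$-separated set of maximum cardinality is automatically maximal (and finite).
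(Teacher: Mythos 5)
Your proof is correct, and it is the standard volumetric packing argument; the paper itself gives no proof of this lemma (it is imported verbatim from Johnson--Lindenstrauss, where essentially the same volume-comparison argument is used). In fact your computation yields the slightly stronger bound $(1+2/\varepsilon)^d$, and your two points of care --- that volume on an abstract $d$-dimensional space is only defined up to a scalar that cancels in the ratio, and that the packing bound itself guarantees the existence of a finite maximal $\varepsilon$-separated set --- are exactly the right ones.
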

Let $T_1$ be the restriction of $T$ to $A\cup \{0\}$.
Let $S(V) = \{v\in V:\|v\|_{V} = 1\}$.
By the definition of the Lipschitz extendability constant $e_k(V, U)$,
there exists an extension $T_2:S(V) \to U$ of $T_1$ with $\|T_2\|_{Lip}  \leq e_k(V, U)  \|T_1\|_{Lip} 
\leq e_k(V, U) \|T\|$.
Now we consider the \textit{positively homogeneous extension} 
$T_3:V\to U$ of $T_2$ defined as 
$$T_3(v) = \|v\|_V T_2\left(\frac{v}{\|v\|_V}\right).$$
The following lemma gives a bound on the norm of $T_3$.
\begin{lemma}[\citeasnoun{JL}, Lemma~2]
Suppose that $V$ and $U$ are normed spaces, and 
$f: S(V) \cup \{0\} \to U$ is a Lipschitz map with $f(0) = 0$. 
Then the positively homogeneous extension
$\tilde f$ of $f$ is Lipschitz and 
$$\|\tilde f\|_{Lip} \leq 2 \|f\|_{Lip} + \sup_{v\in S(V)} \|f(v)\|_{U}.$$
\end{lemma}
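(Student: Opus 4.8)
The plan is to verify the Lipschitz estimate directly from the explicit formula for the positively homogeneous extension, $\tilde f(v) = \|v\|_V\, f\!\left(v/\|v\|_V\right)$ for $v\ne 0$ and $\tilde f(0)=0$, distinguishing the degenerate case where one of the two arguments is the origin from the generic case where both are nonzero. Write $L = \|f\|_{Lip}$ and $M = \sup_{v\in S(V)}\|f(v)\|_U$. First I would note that $f(0)=0$ forces $\|f(v)\|_U = \|f(v)-f(0)\|_U \le L$ for every $v\in S(V)$, so $M \le L < \infty$ and the right-hand side of the claim is meaningful. In the degenerate case, say $y=0$, we get $\|\tilde f(x)-\tilde f(0)\|_U = \|x\|_V\,\|f(x/\|x\|_V)\|_U \le M\|x\|_V = M\|x-y\|_V \le (2L+M)\|x-y\|_V$, and symmetrically for $x=0$.

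For the generic case, take $x,y\ne 0$ and, by symmetry, assume $a:=\|x\|_V \ge b:=\|y\|_V>0$; put $u=x/a$, $w=y/b\in S(V)$, so $\tilde f(x)=a\,f(u)$ and $\tilde f(y)=b\,f(w)$. The key algebraic step is the decomposition
$$\tilde f(x)-\tilde f(y) = a\bigl(f(u)-f(w)\bigr) + (a-b)\,f(w),$$
which separates the angular part (the Lipschitz behaviour of $f$ on the sphere $S(V)$) from the radial part (the rescaling). Taking norms and using $\|f(u)-f(w)\|_U \le L\|u-w\|_V$ and $\|f(w)\|_U \le M$ gives $\|\tilde f(x)-\tilde f(y)\|_U \le aL\,\|u-w\|_V + (a-b)M$.

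The one place that requires a short computation — and the only (mild) obstacle — is estimating $a\|u-w\|_V$ by $\|x-y\|_V$. Inserting $\tfrac{a}{b}\,y$ and using $a\ge b$,
$$a\|u-w\|_V = \Bigl\|\,x - \tfrac{a}{b}\,y\,\Bigr\|_V \le \|x-y\|_V + \Bigl|1-\tfrac{a}{b}\Bigr|\,\|y\|_V = \|x-y\|_V + (a-b).$$
Combining this with the elementary inequality $a-b = \bigl|\,\|x\|_V-\|y\|_V\,\bigr| \le \|x-y\|_V$ yields
$$\|\tilde f(x)-\tilde f(y)\|_U \le L\bigl(\|x-y\|_V + (a-b)\bigr) + (a-b)M \le (2L+M)\,\|x-y\|_V,$$
which is exactly the assertion, and together with the degenerate case it finishes the proof. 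Nothing delicate is involved beyond keeping the normalization $a\ge b$ consistent and checking that the degenerate case obeys the same constant.
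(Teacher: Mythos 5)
Your proof is correct and complete: the decomposition $\tilde f(x)-\tilde f(y) = a(f(u)-f(w)) + (a-b)f(w)$ together with the estimates $a\|u-w\|_V \le \|x-y\|_V + (a-b)$ and $a-b \le \|x-y\|_V$ is exactly the standard argument for this lemma (the paper itself only cites it from Johnson--Lindenstrauss without reproducing a proof). Nothing to correct.
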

Since $T_2(0) = 0$ and $\|T_2 \|_{Lip} \leq e_k(V,U)\|T\|$, 
$\sup_{v\in S(V)} \|T_2v\|_{V} \leq \|T_2\|_{Lip}\leq e_k(V,U)\|T\|$.
Therefore, $\|T_3\|_{Lip} \leq 3e_k(V,U)\|T\|$.
Now we prove that there exists a Lipschitz map $T_4:V \to U$,
whose restriction to $L$ is very close to $T$.
We apply the following lemma to $F = T_3$ and obtain a map $T_4= \tilde F:V\to U$.
\begin{lemma}[\citeasnoun{JL}, Lemma 5]
Suppose $L\subset V$ and $U$ are Banach spaces
with $\dim L = d < \infty$, $F:V \to U$
is Lipschitz with $F$ positively homogeneous (i.e.
$F(\lambda v) = \lambda F(v)$ for $\lambda > 0$,
$v\in V$) and $T: L \to V$ is linear. Then there is
a positively homogeneous map
$\tilde F:V \to U$ which satisfies
\begin{itemize}
\item $\|\tilde F|_{L} - T\|_{Lip} \leq (8d + 2) \sup_{v\in S(L)} \|F(v) - T(v)\|_{V}$,
\item $\|\tilde F\|_{Lip} \leq 4 \|F\|_{Lip}.$
\end{itemize}
\end{lemma}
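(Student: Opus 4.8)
The plan is to present the smoothing argument of Johnson and Lindenstrauss. Write $\varepsilon=\sup_{v\in S(L)}\|F(v)-T(v)\|_U$ (throughout I read $T$ as a map $L\to U$, and the norms in the two displayed inequalities as the norm of $U$). Since $L$ is finite dimensional it carries a Lebesgue measure, unique up to scaling; let $\mu$ be the uniform probability measure on the ball $B_L(0,c):=\{\ell\in L:\|\ell\|_V\le c\}$, where $c=3/2$ is a fixed absolute constant. Define
$$\tilde F(v)=\int_{B_L(0,c)}F\bigl(v+\|v\|_V\,\ell\bigr)\,d\mu(\ell),\qquad v\in V.$$
After the affine change of variables $w=v+\|v\|_V\ell$ (which has constant Jacobian), $\tilde F(v)$ is exactly the average of $F$ over the $\|\cdot\|_V$-ball $B_L(v,c\|v\|_V)$ of the affine subspace $v+L$. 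Since $B_L(0,c)$ is symmetric, $\int\ell\,d\mu(\ell)=0$, so subtracting a ``linear correction'' $\|v\|_V T(\ell)$ inside the integral would integrate to zero; the formula above already is the right object.

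First I would dispatch the two easy points. Positive homogeneity: for $\lambda>0$, substituting in the integral and using $F(\lambda x)=\lambda F(x)$ gives $\tilde F(\lambda v)=\lambda\tilde F(v)$, and $\tilde F(0)=F(0)=0$. The bound $\|\tilde F\|_{Lip}\le 4\|F\|_{Lip}$ is immediate: for $v,v'\in V$,
$$\|\tilde F(v)-\tilde F(v')\|_U\le\|F\|_{Lip}\int_{B_L(0,c)}\bigl\|(v-v')+(\|v\|_V-\|v'\|_V)\ell\bigr\|_V\,d\mu(\ell)\le(1+c)\,\|F\|_{Lip}\,\|v-v'\|_V,$$
using $\bigl|\,\|v\|_V-\|v'\|_V\,\bigr|\le\|v-v'\|_V$ and $\|\ell\|_V\le c$, and $1+c=5/2\le 4$.

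The substance is the bound $\|\tilde F|_L-T\|_{Lip}\le(8d+2)\varepsilon$. For $v\in L$, linearity of $T$ and $\int\ell\,d\mu=0$ give $\tilde F(v)-T(v)=\int_{B_L(0,c)}G\bigl(v+\|v\|_V\ell\bigr)\,d\mu(\ell)$, where $G:=F|_L-T$ is positively homogeneous with $G(0)=0$ and, by homogeneity, $\|G(w)\|_U\le\varepsilon\|w\|_V$ for every $w\in L$. Thus $\tilde F|_L-T$ is obtained from a function bounded by $\varepsilon\|w\|_V$ by averaging over a $d$-dimensional $\|\cdot\|_V$-ball of radius proportional to $\|v\|_V$, and the key classical fact is that averaging a function bounded by $M$ over a $d$-dimensional norm-ball of radius $r$ produces a Lipschitz function with constant $O(dM/r)$; here $M\asymp\varepsilon\|v\|_V$ and $r=c\|v\|_V$, so the factors of $\|v\|_V$ cancel and we get $O(d\varepsilon)$. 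To make this quantitative I would take $v,v'\in L$ with $\|v\|_V\ge\|v'\|_V>0$ and split the difference of the two ball-averages of $G$ into a \emph{translation} of the center from $v$ to $v'$ at the fixed radius $c\|v\|_V$, whose effect is at most $\bigl(\sup\|G\|_U\bigr)\cdot 2\bigl((1+\|v-v'\|_V/(c\|v\|_V))^d-1\bigr)$ (the volume of the symmetric difference of two equal-radius norm-balls with centers $\|v-v'\|_V$ apart, relative to the ball's volume), and a \emph{rescaling} of the radius from $c\|v\|_V$ down to $c\|v'\|_V$ about $v'$, whose effect is at most $\bigl(\sup\|G\|_U\bigr)\cdot 2\bigl(1-(\|v'\|_V/\|v\|_V)^d\bigr)$ (the volume of the corresponding annulus); on the regions involved $\sup\|G\|_U\le(1+c)\varepsilon\|v\|_V$. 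Using $(1+x)^d-1\le 2dx$ for $dx\le1$, $1-(1-y)^d\le dy$, and $\bigl|\,\|v\|_V-\|v'\|_V\,\bigr|\le\|v-v'\|_V$, this handles small displacements; for large displacements (in particular the degenerate case $v'=0$) one bounds $(\tilde F-T)(v)-(\tilde F-T)(v')$ crudely by $\|(\tilde F-T)(v)\|_U+\|(\tilde F-T)(v')\|_U\le(1+c)\varepsilon(\|v\|_V+\|v'\|_V)$, which in that range is again $O(d\varepsilon)\|v-v'\|_V$. Optimizing $c$ and using a sharper estimate for the lune volume reconciles the two regimes within the stated $(8d+2)$; the exact constant is immaterial for our use of the lemma.

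The step I expect to be the main obstacle is this quantitative smoothing estimate: one must produce an explicit Lipschitz bound that is linear in $d$, valid uniformly over \emph{all} norms on the $d$-dimensional space $L$ (so only dilation-homogeneity of Lebesgue measure and the two elementary volume comparisons above are available, not Euclidean structure), and one must glue together the small- and large-displacement regimes so that neither the translation term, the dilation term, nor the crude bound exceeds the target — while simultaneously keeping $c$ small enough that $\|\tilde F\|_{Lip}\le 4\|F\|_{Lip}$. Everything else is formal.
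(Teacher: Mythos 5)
The paper does not prove this lemma: it is quoted verbatim from Johnson--Lindenstrauss (their Lemma~5) and used as a black box in the sketch of Theorem~\ref{thm:JL}, so there is no in-paper proof to compare against. Your smoothing construction is exactly the original Johnson--Lindenstrauss argument, and it is correct. The three checks all go through: positive homogeneity is immediate from $F(\lambda v)=\lambda F(v)$ and the scaling of the integral; the global Lipschitz bound $(1+c)\|F\|_{Lip}\le 4\|F\|_{Lip}$ is right; and for $v\in L$ the identity $\tilde F(v)-T(v)=\int G(v+\|v\|_V\ell)\,d\mu(\ell)$ with $G=F|_L-T$ (using $\int\ell\,d\mu=0$) reduces the main bound to the Lipschitz constant of a ball-average of a function satisfying $\|G(w)\|_U\le\varepsilon\|w\|_V$. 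Your translation-plus-dilation decomposition of the difference of two ball-averages is the right way to get a bound linear in $d$ using only dilation-homogeneity of Lebesgue measure; with the sharper lune estimate you allude to (the intersection of two radius-$r$ balls with centers $\delta$ apart contains a ball of radius $r-\delta/2$, giving relative symmetric-difference volume at most $d\delta/r$) the translation term contributes $(1+c)d\varepsilon\delta/c$ and the dilation term $2(1+c)d\varepsilon\delta$, totalling under $7d\varepsilon\delta$ for $c=3/2$ — and since $\delta=\|v-v'\|_V\le 2\max(\|v\|_V,\|v'\|_V)$ always, the ``small-displacement'' regime in fact covers every pair except $v'=0$, which is handled by $\|(\tilde F-T)(v)\|_U\le(1+c)\varepsilon\|v\|_V$. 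So the constant $(8d+2)$ is met and, as you say, is immaterial anyway. Your reading of the statement's typos ($T:L\to U$ and $\|\cdot\|_U$ in the displayed bounds) is the correct one given how the lemma is applied.
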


\noindent Note that for every $u\in S(L)$ there exists $v\in A$ with $\|u-v\|_{V}\leq \varepsilon$.
Therefore,
\begin{align*}
\|T_3u - Tu\|_{V} &\leq \|T_3u - T_3v\|_{V} + \|T_3v - Tv\|_{V} + \|Tv-Tu\|_{V}\\
&\leq \|T_3\|_{Lip} \cdot \varepsilon + 0 + \|T\| \varepsilon \leq (3e_k(V, U) + 1)\|T\| \varepsilon.
\end{align*}
Hence, 
$$\|T_4|_{L} - T\|_{Lip} \leq (8d + 2)(3e_k(V, U) + 1)\|T\| \varepsilon
\leq 40 d e_k(V, U) \|T\|  \varepsilon,$$
and $\|T_4\|_{Lip} \leq 12 e_k(V, U)\|T\|$.
Finally, we approximate $T_4$ with a linear bounded map 
$T_5: V \to U$, whose restriction to $L$ is very close to $T$.
\begin{lemma}[\citeasnoun{JL}, Proposition 1]
Suppose $L \subset V$ and $U$ are Banach spaces, $U$ is a reflexive 
space, $f:V\to L$ is Lipschitz, and 
$T:L\to U$ is bounded, linear. Then there is a linear operator $F:V\to U$ that satisfies 
$\|F\| \leq \|f\|_{Lip}$ and 
$\|F|_{L} - T\|_{L\to U} \leq \|f_{L} - U\|_{Lip}$.
\end{lemma}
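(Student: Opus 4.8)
The plan is to linearize the Lipschitz map $f$ by averaging it against a translation-invariant mean, the classical device for extracting a linear map from a Lipschitz one with no loss in the relevant norms. First I would fix an invariant mean $\mathfrak m$ on $\ell_\infty(V)$; such a mean exists because the additive group $(V,+)$ is abelian, hence amenable. For each $x\in V$ the function $y\mapsto f(x+y)-f(y)$ on $V$ is bounded, with values in the ball $B(0,\|f\|_{Lip}\|x\|)$ of the target space; that target space is reflexive (it is $U$ by hypothesis, and in the application it is the finite-dimensional $L$, which is reflexive as well), so this ball is weakly compact. That is exactly the hypothesis needed to define $F(x)$ as the $\mathfrak m$-barycenter of $y\mapsto f(x+y)-f(y)$, i.e.\ the unique vector of that ball satisfying $\langle u^*,F(x)\rangle=\mathfrak m\bigl(y\mapsto\langle u^*,f(x+y)-f(y)\rangle\bigr)$ for every functional $u^*$ on the target.

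Next I would check that $F$ is a bounded linear operator with $\|F\|\le\|f\|_{Lip}$. The norm bound is immediate, since $F(x)\in B(0,\|f\|_{Lip}\|x\|)$. Additivity comes from invariance of $\mathfrak m$: writing $f(x_1+x_2+y)-f(y)=\bigl(f(x_1+(x_2+y))-f(x_2+y)\bigr)+\bigl(f(x_2+y)-f(y)\bigr)$ and applying $\mathfrak m$ in the variable $y$, invariance allows replacing $x_2+y$ by $y$ in the first summand, which gives $F(x_1+x_2)=F(x_1)+F(x_2)$. Additivity together with the estimate $\|F(x)\|\le\|f\|_{Lip}\|x\|$ forces $\mathbb Q$-homogeneity and then continuity, so $F$ is $\mathbb R$-linear. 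These are the routine parts.

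The remaining point is the restriction estimate $\|F|_L-T\|_{L\to U}\le\|f|_L-T\|_{Lip}$, and I expect this to be the main obstacle. For $x\in L$ one has $F(x)-Tx=\mathfrak m\bigl(y\mapsto f(x+y)-f(y)-Tx\bigr)$, and \emph{when $y$ is also restricted to $L$} the integrand equals $(f-T)(x+y)-(f-T)(y)$ (here using that $T$ is linear on $L$), hence is bounded by $\|f|_L-T\|_{Lip}\|x\|$; the difficulty is that $y$ a priori ranges over all of $V$, where only the crude bound $\|f\|_{Lip}+\|T\|$ is available. To handle this I would not take $\mathfrak m$ to be an arbitrary invariant mean but construct it as an iterated mean adapted to $L$ — averaging first along the cosets of $L$ with an $L$-invariant mean and then over $V/L$ — and control the contribution of each coset so that the estimate collapses onto the behaviour of $f$ along $L$; pinning down this reduction to yield the clean constant $\|f|_L-T\|_{Lip}$ (rather than the crude one) is the delicate step, and is where I would expect to lean on the particular structure of the maps produced by the earlier lemmas. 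Once $F$ is in hand, it is plugged back into Theorem~\ref{thm:JL}: with $U=L$ and $T$ close to the identity on $L$, the map $F|_L$ is invertible on $L$ and $(F|_L)^{-1}\circ F$ is the sought projection, giving $\lambda(L,V)=O(e_k(V,L))$.
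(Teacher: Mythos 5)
This lemma is quoted from Johnson--Lindenstrauss and the paper gives no proof of it, so your argument must stand on its own. Your strategy --- linearize $f$ by averaging its difference quotients against an invariant mean on the abelian group $(V,+)$, using reflexivity of $U$ to define the weak barycenter --- is the right one, and the routine parts (the norm bound, additivity from translation invariance, $\mathbb{Q}$-homogeneity plus continuity giving $\mathbb{R}$-linearity) are fine. But the step you yourself flag as delicate, the restriction estimate $\|F|_L - T\| \le \|f|_L - T\|_{Lip}$, is exactly where the content of the lemma lies, and you leave it open; moreover you suggest it may require ``the particular structure of the maps produced by the earlier lemmas,'' which it does not --- the statement is completely general and the missing step has a short, self-contained resolution.

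The fix is a two-stage average, not a single mean cleverly decomposed over cosets and applied to $f$ directly. Stage one: let $\mathfrak{m}^L$ be an invariant mean on the subgroup $L$ and set $g(x)=\mathfrak{m}^L_e\bigl[f(x+e)-f(e)\bigr]$ with $e$ ranging over $L$. Then (i) $\|g\|_{Lip}\le\|f\|_{Lip}$; (ii) for every $x\in V$ and $x'\in L$ one has $g(x+x')=g(x)+g(x')$, by splitting $f(x+x'+e)-f(e)$ as $\bigl[f(x+(x'+e))-f(x'+e)\bigr]+\bigl[f(x'+e)-f(e)\bigr]$ and using invariance of $\mathfrak{m}^L$ under $e\mapsto e+x'$ (legitimate precisely because $x'\in L$); (iii) for $x\in L$ the integrand equals $(f|_L-T)(x+e)-(f|_L-T)(e)$ --- both arguments now lie in $L$, so $T$ is defined there --- whence $\|g(x)-Tx\|\le\|f|_L-T\|_{Lip}\|x\|$. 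Stage two: set $F(x)=\mathfrak{m}^V_y\bigl[g(x+y)-g(y)\bigr]$ with $y$ ranging over all of $V$; this $F$ is fully additive, hence linear, with $\|F\|\le\|g\|_{Lip}\le\|f\|_{Lip}$, exactly as in your routine part. The observation you are missing is that by (ii), for $x\in L$ we have $g(x+y)-g(y)=g(x)$ for \emph{every} $y\in V$, so the second averaging does not move the values on $L$: $F|_L=g|_L$, and (iii) transfers verbatim to $F$. (Your iterated mean $\mathfrak{m}^V_y\mathfrak{m}^L_e$ applied to $f(x+y+e)-f(y+e)$ produces this same $F$, but to verify the restriction bound you must recognize the inner average as $g(x+y)-g(y)$ and invoke the additivity of $g$ along $L$ --- i.e., you need the intermediate map $g$ in any case.)
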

Since the space $U$ is finite dimensional, it is reflexive. We apply the lemma
to $f=T_4$ and obtain a linear operator $T_5:V\to U$ such that
$\|T_5\| \leq  12 e_k(V, U)\|T\|$ and
$$\|T_5|_{L} - T\|_{L\to U} \leq 40 d e_k(V, U) \|T\|\varepsilon.$$
Let $P:U\to T(L)$ be a projection of $U$ on $T(L)$ with $\|P\| \leq \sqrt{d}$ 
(such projection exists by the Kadec--Snobar theorem).
Consider a linear operator $\phi=T_5 T^{-1}P + (I_U - P)$ from $U$ to $U$.
Note that for every $u\in U$,
\begin{align*}
\|\phi u - u\|_U &= \|T_5T^{-1} Pu - Pu\|_U = \|T_5T^{-1} Pu - T T^{-1} Pu\|_U \leq  40 d e_k(V, U) \|T\|\varepsilon \cdot \|T^{-1} Pu \|_U\\
&\leq 40 d e_k(V, U) \|T\|\varepsilon \cdot \sqrt{d} \|T^{-1}\| \|u\|_U \leq  40 \alpha d^{2} \varepsilon \|u\|_U
\end{align*}
(we used that $e_k(V, U)\leq \sqrt d$ and $\|P\|\leq \sqrt{d}$).
We choose $\varepsilon \sim 1/(\alpha \log^{2} k)$ so that
$40 \alpha d^{2}\varepsilon <1/2$. Then $\|\phi- I_{U}\| \leq 1/2$. Thus
$\phi$ is invertible:
$$\phi^{-1} = (I_{U} - (I_{U} - \phi))^{-1} = \sum_{i=0}^\infty (I_{U} - \phi)^k,$$
and
$$\|\phi^{-1}\| \leq \sum_{i=0}^\infty \|I_{U} - \phi\|^k \leq 2.$$
Finally, we let $\tilde T = \phi^{-1} T_5$.
Note that for every $u\in L$, $\phi Tu = T_5u=\phi \tilde T u$, thus 
$\tilde T$ is an extension of $T$. The norm of $\tilde T$ is bounded by
$\|\phi\| \|T_5\| \leq 24 e_k(V, U)\|T\|$.

\section{Improved Lower Bound on $e_k(\ell_1, \ell_1)$} \label{sec:improved-l1}
After a preliminary version of our paper appeared as a preprint, Johnson and Schechtman notified us that
our lower bound of $\Omega(\sqrt[4]{\log k/\log\log k})$ on $e_k(\ell_1, \ell_1)$ can be 
improved to $\Omega(\sqrt{\log k}/\log\log k)$. This result follows from the paper of \citeasnoun{FJS} that studies factorization of operators to $L_1$ through $L_1$. With the permission of Johnson and Schechtman, we present this result below.

Before we proceed with the proof, we state the result of \citeasnoun{FJS}.

\begin{theorem}[Corollary 1.5, \citeasnoun{FJS}]\label{lem:FJS}
Let $X$ be a $d$-dimensional subspace of $L_1(R, \mu)$ (a set of real valued functions on $R$ with the $\|\cdot\|_1$ norm). 
Suppose that for every $f\in X$ and every $2\leq r < \infty$, $\|f\|_r \leq C \sqrt{r} \|f\|_1$ (where $C$ is some constant not depending on $f$ and $r$). Let $w:X \to \ell_1^m$ and $u:\ell_1^m \to L_1(R, \mu)$ be linear operators such that
$uw = I_{X}$ is the identity operator on $X$. Then 
$$\rank u \geq 2^{\Delta d} \text{ where\ } \Delta= \frac{1}{(16 C d_{BM}(X, \ell_2^d) \|w\| \|u\|)^2}.$$
\end{theorem}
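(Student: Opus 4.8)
Theorem~\ref{lem:FJS} is due to \citeasnoun{FJS} (it is their Corollary~1.5); what follows is the route one would take to establish it. It is a statement of pure Banach-space geometry, and the plan is to combine a change-of-density (Lewis position) reduction with a probabilistic selection argument in the circle of ideas around Bourgain's solution of the $\Lambda(p)$ problem. First I would normalize: by homogeneity of the conclusion in $\|u\|$ and $\|w\|$, rescale so that $\|u\|=1$ and set $\beta:=\|u\|\,\|w\|$ for the original product, so that $\|w\|\le\beta$; and normalize $\mu$ to be a probability measure (scaling $\mu$ upward only weakens the moment hypothesis, and the estimates below are monotone in the favourable direction), so that $\|f\|_1\le\|f\|_2$ on $L_1(R,\mu)$. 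Since $uw=I_X$, each $x\in X$ equals $u(w(x))$ and hence lies in the range $Y$ of $u$, so $X\subseteq Y$ and $\dim Y=\rank u=:N\ge d$. Corestricting $u$ to $Y$ and using that the inclusion $\iota_Y:Y\hookrightarrow L_1$ is isometric and injective, one gets that the formal inclusion $X\hookrightarrow Y$ factors as $X\xrightarrow{w}\ell_1^m\xrightarrow{u'}Y$ with $\|u'\|\le1$ and $\|w\|\le\beta$. The theorem thus reduces to the assertion that an $N$-dimensional subspace $Y\subseteq L_1$ containing the ``$\ell_2$-like'' space $X$, such that the inclusion $X\hookrightarrow Y$ factors through $\ell_1^m$ at cost $\le\beta$, must have $N$ exponentially large in $d$.

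The next ingredient is geometric. The hypothesis $\|f\|_r\le C\sqrt r\,\|f\|_1$ for all $2\le r<\infty$, together with $\|f\|_1\le\|f\|_2$, gives $\|f\|_r\le C\sqrt r\,\|f\|_2$ for every $f\in X$; that is, every element of $X$ is sub-Gaussian as a function on $(R,\mu)$, with sub-Gaussian norm $O(C)$ times its $L_2$-norm, and $X$ has type-$2$ constant $O(C)$ in its ambient $L_2(\mu)$ structure. The factor $d_{BM}(X,\ell_2^d)$ enters precisely because the Euclidean structure forced on $X$ is this ambient one, which may differ from the optimal Euclidean norm on $X$ by that Banach--Mazur distance; I would also put $Y$ into Lewis position in $L_1(\mu)$, so that $Y$ carries a Euclidean norm comparable within $\sqrt N$ to $\|\cdot\|_{L_1}$. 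The key point is that the \emph{entire} range of moments $r$ is available with $\sqrt r$ growth; feeding in $r$ of order $\log N$ (or of order $d$) at the end is exactly what produces the square and the particular shape of $\Delta=\Omega\!\big(1/(C\,d_{BM}(X,\ell_2^d)\,\|w\|\,\|u\|)^2\big)$.

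The combinatorial core is the step I expect to be genuinely hard. Write $w(x)=(\psi_j(x))_j$ with $\sum_j|\psi_j(x)|\le\beta\|x\|_1$ and $u'(e_j)=y_j\in Y$, $\|y_j\|_1\le1$, so that $\sum_j\psi_j(x)\,y_j=x$ for all $x\in X$. Assuming for contradiction that $N<2^{\Delta d}$, I would run a Bourgain-type selection: retain each coordinate $j$ independently with a carefully chosen probability, and use the sub-Gaussian concentration of the functions in $X$ (controlled simultaneously at all moments) together with a union bound over an exponentially large family of test vectors, or sign patterns, on the unit sphere of $X$, to conclude that a rank-$N$ operator cannot reproduce $x$ for all $x$ in the family at once --- contradicting $u'w=I_X$. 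Equivalently, one can phrase the obstruction volumetrically: a symmetric polytope with few vertices, drawn from the $L_1$-ball and lying in an $N$-dimensional subspace, cannot contain the ``$\ell_2^d$-like'' body $B_X$ unless $N$ is exponential in $d$, once the moment estimates are inserted. Optimizing the retention probability and the moment parameter against $d$, $C$, $d_{BM}(X,\ell_2^d)$ and $\beta$ yields the threshold $2^{\Delta d}$, and the resulting contradiction forces $\rank u=N\ge 2^{\Delta d}$. I would present the normalization and reduction in full and cite \citeasnoun{FJS} for the selection argument, which is where the precise combinatorics and the stated constants require real work.
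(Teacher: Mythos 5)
The paper offers no proof of this statement: it is imported verbatim as Corollary~1.5 of \citeasnoun{FJS} and used as a black box in the appendix, so there is no internal argument to compare yours against. Your proposal is in the same position --- the normalization, the reduction to factoring the inclusion $X\hookrightarrow Y=\mathrm{range}(u)$ through $\ell_1^m$ at cost $\|u\|\|w\|$, the observation that the moment hypothesis makes every $f\in X$ sub-Gaussian on $(R,\mu)$, and the explanation of where $d_{BM}(X,\ell_2^d)$ and the square in $\Delta$ come from are all a fair roadmap of the Figiel--Johnson--Schechtman circle of ideas, but the step you yourself flag as ``genuinely hard'' (the selection/volumetric argument showing that a low-rank factorization through $\ell_1^m$ cannot reproduce an almost-Euclidean $d$-dimensional subspace unless the rank is $2^{\Omega(d)}$) is the entire content of the theorem, and you defer it to the reference. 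As a self-contained proof this is therefore incomplete; as a citation with context it matches the paper's treatment exactly, which is appropriate for an attributed external result. One small caution if you do flesh this out: renormalizing $\mu$ to a probability measure is not innocuous for the hypothesis $\|f\|_r\le C\sqrt r\,\|f\|_1$, since the ratio $\|f\|_r/\|f\|_1$ scales by $\mu(R)^{1-1/r}$ under that change, so the claim that the rescaling ``only weakens the hypothesis in the favourable direction'' needs the sign of $\log\mu(R)$ checked rather than asserted.
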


\begin{corollary} $e_k(\ell_1,\ell_1) = \Omega\left(\sqrt{\log k}/\log\log k\right)$. 
\end{corollary}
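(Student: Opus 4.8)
The plan is to feed a Euclidean subspace of a \emph{low}-dimensional $\ell_1$ into the factorization obstruction of Theorem~\ref{lem:FJS}, manufacturing the required factorization out of a small extendability constant by means of the Johnson--Lindenstrauss Theorem~\ref{thm:JL}. Write $Q=e_k(\ell_1,\ell_1)$ and set $d=\lfloor c\log k/\log\log k\rfloor$, where $c$ is the absolute constant of Theorem~\ref{thm:JL}. Since $\sqrt d=\Omega(\sqrt{\log k}/\log\log k)$, we may assume $Q<\sqrt d$ (otherwise we are done). By Kashin's theorem \citeasnoun{Kashin}, $\ell_1^{2d}$ has a $d$-dimensional subspace $L$ with $d_{BM}(L,\ell_2^d)\le K_0$ for an absolute constant $K_0$; moreover, for every $f\in\ell_1^{2d}$ and every $r\ge 2$ we have $\|f\|_r\le\|f\|_1\le\sqrt r\,\|f\|_1$, so the hypothesis of Theorem~\ref{lem:FJS} holds on $L$ with $C=1$.

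Next I would manufacture a factorization of $I_L$ through $\ell_1^{2d}$ with norms governed by $Q$. Apply Theorem~\ref{thm:JL} with $V=U=\ell_1^{2d}$, with the $d$-dimensional subspace $L\subset V$, and with $T=I_L\colon L\to L\subset\ell_1^{2d}$; then $\|T\|\,\|T^{-1}\|=1=O(d)$ and $\dim L=d\le c\log k/\log\log k$, so the theorem produces a \emph{linear} operator $u\colon\ell_1^{2d}\to\ell_1^{2d}$ with $u|_L=I_L$ and $\|u\|=O(e_k(\ell_1^{2d},\ell_1^{2d}))=O(Q)$. Here $e_k(\ell_1^{2d},\ell_1^{2d})\le e_k(\ell_1,\ell_1)=Q$, because any $k$-point subset of $\ell_1^{2d}$ is a $k$-point subset of $\ell_1$, and any $\ell_1$-valued extension may be restricted to $\ell_1^{2d}$ and postcomposed with the norm-one coordinate retraction onto $\ell_1^{2d}$. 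Letting $w\colon L\hookrightarrow\ell_1^{2d}$ be the inclusion, we get $\|w\|=1$ and $uw=I_L$.

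Finally I would apply Theorem~\ref{lem:FJS} with $X=L$ sitting inside $L_1(R,\mu)=\ell_1^{2d}$ ($R=\{1,\dots,2d\}$ with the counting measure), $m=2d$, and the operators $w,u$ just constructed. It yields $\rank u\ge 2^{\Delta d}$ with $\Delta=1/(16\,C\,d_{BM}(L,\ell_2^d)\,\|w\|\,\|u\|)^2=\Omega(1/Q^2)$. Since $\rank u\le m=2d$, this forces $2d\ge 2^{\Omega(d/Q^2)}$, hence $\log d=\Omega(d/Q^2)$, i.e.\ $Q^2=\Omega(d/\log d)$. Plugging in $d=\Theta(\log k/\log\log k)$ and $\log d=\Theta(\log\log k)$ gives $Q=\Omega(\sqrt{\log k}/\log\log k)$; by Theorem~\ref{thm:extendcut} this is also a lower bound on $Q_k^{cut}$.

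The step I expect to demand the most care is the middle one: verifying that Theorem~\ref{thm:JL} really outputs a \emph{linear} operator $u$ with $u|_L=I_L$ and $\|u\|=O(Q)$ --- in particular that the exceptional branch $e_k\ge\sqrt d$ of that theorem does no damage, which is precisely why $Q<\sqrt d$ was assumed at the outset --- and that $e_k(\ell_1^{2d},\ell_1^{2d})\le e_k(\ell_1,\ell_1)$. The remaining ingredients (Kashin's Euclidean subspace, the essentially vacuous $L_r$-versus-$L_1$ hypothesis in finite dimensions, and the concluding arithmetic) are routine.
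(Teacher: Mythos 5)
Your overall architecture (Kashin subspace, Johnson--Lindenstrauss linearization of the extension, rank obstruction from Theorem~\ref{lem:FJS}, and the closing arithmetic) is the right one, and the peripheral steps you worry about (the $Q<\sqrt d$ caveat handling the Kadec--Snobar branch of Theorem~\ref{thm:JL}, and $e_k(\ell_1^{2d},\ell_1^{2d})\le e_k(\ell_1,\ell_1)$) are fine. The fatal problem is the step you call ``essentially vacuous'': the verification of the moment hypothesis of Theorem~\ref{lem:FJS}. That hypothesis is only meaningful --- and the theorem only true --- when $\mu$ is a probability measure, so that $\|f\|_1\le\|f\|_r$ and the inequality $\|f\|_r\le C\sqrt r\,\|f\|_1$ genuinely constrains the moment growth of functions in $X$ (it is a Khintchine/subgaussianity-type condition). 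Under your reading, with $R=\{1,\dots,2d\}$ and the counting measure, the hypothesis is vacuous and the conclusion is false: taking $u=I_{\ell_1^{2d}}$ and $w$ the inclusion of the Kashin subspace $L$ gives $\|u\|=\|w\|=1$, $C=1$, $d_{BM}(L,\ell_2^d)=O(1)$, and the theorem would then assert $2d=\rank u\ge 2^{\Omega(d)}$, which is absurd. Renormalizing $\mu$ to the uniform probability measure on $2d$ points does not help: every $d$-dimensional subspace $L$ of a $2d$-point space contains a vector $f$ (e.g.\ $f=P_Le_i$ for a coordinate $i$ with $\|P_Le_i\|_2^2\ge 1/2$, where $P_L$ is the orthogonal projection) satisfying $\|f\|_\infty\ge\|f\|_2/2$, while on the Kashin subspace $\|f\|_2$ is of order $\sqrt{d}\,\|f\|_{L_1(\mu)}$; already at $r\approx\log d$ this forces $C=\Omega(\sqrt{d/\log d})$, which destroys the bound you are after.

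The paper's proof circumvents exactly this point. It does not take $X=L$ inside $\ell_1^{2d}$; instead it pushes the Kashin subspace $X'\subset\ell_1^{2d}$ into $V=L_1(\{\pm1\}^{2d},\mu)$ ($\mu$ the uniform probability measure) by the Rademacher embedding $u'(x)\colon y\mapsto\langle x,y\rangle$. Khintchine's inequality then does double duty: it shows that $u'$ is an $O(1)$-isomorphism onto its image $X$ (so $\|u'\|\,\|w\|=O(1)$ with $w=(u')^{-1}$), and it supplies the genuinely nontrivial bound $\|f\|_r\le O(\sqrt r)\,\|f\|_1$ for $f\in X$. Theorem~\ref{thm:JL} is applied to extend $u'$ (rather than $I_L$) to a linear map $u\colon\ell_1^{2d}\to V$ with $\|u\|=O(e_k(\ell_1,\ell_1))\,\|u'\|$, and then $\rank u\le 2d$ together with Theorem~\ref{lem:FJS} yields the bound by the same computation you perform. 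So your plan needs this re-embedding step inserted; as written, the application of Theorem~\ref{lem:FJS} is invalid.
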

\begin{proof}
Denote $d = c \log k/\log\log k$, where $c$ is the constant from Theorem~\ref{thm:JL}.
Consider $U= \ell_1^{2d}$. By Kashin's theorem~\cite{Kashin}, there exists an ``almost Euclidean'' $d$-dimensional subspace
$X'$ in $U$, that is, a subspace $X'$ such that 
$$c_1 \|x\|_1 \leq \sqrt{d}\, \|x\|_2 \leq c_2 \|x\|_1$$ for every $x\in X'$ 
(and some positive absolute constants $c_1$ and $c_2$). Let $R = \{\pm1\}^{2d} \subset U$ be a $2d$-dimensional hypercube,
$\mu$ be the uniform probabilistic measure on $R$ and $V = L_1(R,\mu)$. We consider a natural embedding $u'$ of $X'$ into $V$:
each vector $x\in X'$ is mapped to a function $u'(x)\in V$ defined by $u'(x): y \mapsto \langle x, y\rangle$. 
Recall that by the Khintchine inequality,
$$A_p \|x\|_2 \leq \|u'(x)\|_p \equiv (\EE{y\in R}{|\langle x, y\rangle|^p})^{1/p} \leq B_p \|x\|_2,$$
where $A_p$ and $B_p$ are some positive constants. In particular, \citeasnoun{Haagerup} proved that the inequality holds for $p=1$, with
$A_1 = \sqrt{1/2}$ and $B_1=\sqrt{2/\pi}$, and, for $p\geq 2$, with $A_p = 1$ and 
$$B_p = 2^{1/2 -1/p} \left(\left.\Gamma\left(\frac{p+1}{2}\right)\right/\Gamma\left(\frac32\right)\right)^{1/p} = (1+ o(1)) \sqrt{\frac{p}{e}} $$
(the $o(1)$ term tends to $0$ as $p$ tends to infinity).
Let $X\subset L(R,\mu)$ be the image of $X'$ under $u'$. Observe that $u'$ is a $(2c_2/c_1)$-isomorphism between $(X',\|\cdot\|_1)$ and $(V,\|\cdot\|_1)$.
Indeed,
\begin{align*}
\|u'(x)\|_1 &\leq B_1 \|x\|_2 \leq \sqrt{2} \cdot c_2\|x\|_1 /\sqrt{\pi d}, \\
\|u'(x)\|_1 &\geq A_1 \|x\|_2 \geq  c_1  \|x\|_1/ \sqrt{2d}.
\end{align*}
Denote $w=(u')^{-1}$. Then $\|u'\| \|w\| \leq 2 c_2/(\sqrt{\pi} c_1) < 2c_2/c_1$.

By Theorem~\ref{thm:JL}, there exists a linear extension $u:U\to V$ of $u'$ to $U$ with $\|u\| = O(e_k(\ell_1,\ell_1)) \|u'\|$.
We are going to apply Lemma~\ref{lem:FJS} to maps $u$ and $w$ and get a lower bound on $\|u\|$ and, consequently,
on $e_k(\ell_1,\ell_1)$. To do so, we verify that for every $f\in X$,
$\|f\|_r = O(\sqrt{r})$. Indeed, if $f = u'(x)$, we have
$$\|f\|_r \leq B_r \|x\|_2 \leq B_r \|f\|_1 /A_1 = \sqrt{\frac{2}{e}}\cdot \sqrt{r} \cdot \|f\|_1 (1+o(1)).$$
%
Note that $\rank u \leq \dim U = 2d$ and $d_{BM}(X, \ell_2^d) \leq B_1/A_1 = 2/\sqrt{\pi}$.
By Lemma~\ref{lem:FJS}, we have
$$\Delta \equiv \frac{1}{(16 C d_{BM}(X, \ell_2^d) \|w\| \|u\|)^2} \leq \frac{\log_2 {2d}}{d}.$$
Therefore,
$$\|u\| \geq \Omega\left(\sqrt{\frac{d}{\log d}} \right) \frac{1}{\|w\|} = \Omega\left(\sqrt{\frac{d}{\log d}}\right) \|u'\|.$$
We conclude that 
$$e_k(\ell_1,\ell_1) \geq \Omega\left(\|u\|/ \|u'\|\right) \geq \Omega\left(\sqrt{\frac{d}{\log d}}\right)
 = \Omega\left(\frac{\sqrt{\log k}}{\log\log k}\right).$$
\end{proof}
\end{document}